\documentclass[11pt]{article}
\usepackage{outlines}
\usepackage{graphicx}
\usepackage{amssymb}
\usepackage{multirow}
\usepackage{subfigure}
\usepackage{amsmath, thm-restate}
\usepackage{amsthm}  
\usepackage{algorithmic}
\usepackage[section]{algorithm}
\usepackage{subfigure}
\usepackage{color}
\usepackage[normalem]{ulem}
\usepackage{multirow}
\usepackage{fullpage}
\newskip\subfigcapskip	\subfigcapskip	= 2ex

\usepackage{url}

\pagenumbering{arabic}

\hyphenation{diff-er-en-tial}

\begin{document}

\author{Chao Li\\
University of Massachusetts \\Amherst, MA, USA\\ chaoli@cs.umass.edu
\and Gerome Miklau\\
\begin{tabular}{cc}
University of Massachusetts & INRIA\\
Amherst, MA, USA & Saclay, France\\ 
\multicolumn{2}{c}{miklau@cs.umass.edu}
\end{tabular}
}

\setlength{\columnsep}{.7cm}
	
\title{Optimal error of query sets under the differentially-private matrix mechanism}

\maketitle

\pagestyle{plain}


\abovedisplayskip = 3pt
\belowdisplayskip = 3pt
\subfigcapskip=-5pt


\floatname{algorithm}{Program}

\newcommand{\cell}{\phi}

\newcommand{\reals}{R}
\newcommand{\vol}{\textup{Vol}}
\newcommand{\convex}{\textup{Convex}}

\newcommand{\minerror}{\mbox{\sc MinError}}
\newcommand{\minsens}{\mbox{\sc MinSensitivity}}
\newcommand{\allrange}{\mbox{\sc AllRange}}
\newcommand{\allpred}{\mbox{\sc AllPredicate}}

\newcommand{\eqbydef}{\stackrel{\mathrm{def}}{=}}

\newcommand{\vect}[1]{\mathbf{#1}}
\newcommand{\sens}[1]{\Delta_{#1}}
\newcommand{\inv}[1]{{#1}^{-1}}
\newcommand{\ep}[1]{\inv{({#1}^t{#1})}}

\def\alg{\mathcal{K}}  
\def\LM{\mathcal{L}}	
\def\GM{\mathcal{G}}	
\def\MM{\mathcal{M}}	
\def\OM{\mathcal{P}}	

\def\lbl{\mbox{LSA}}
\def\svdb{\mbox{\sc svdb}}
\def\ssvdb{\overline{\mbox{\sc svdb}}}
\def\cols{\mbox{cols}}

\def\tr{\mbox{trace}}
\def\var{\mbox{Var}}
\newcommand{\error}[2]{\mbox{\sc Error}_{#1}( #2 )}
\newcommand{\totalerror}[2]{\mbox{\sc TotalError}_{#1}( #2 )}
\newcommand{\maxerror}[2]{\mbox{\sc MaxError}_{#1}( #2 )}

\def\aa{\mathbb{A}}  
\def\bb{\mathbb{B}}  
\def\WW{\mathcal{W}}
\def\PP{\mathbb{P}}
\def\MP{\mathbb{M}}
\def\Mu{\mathcal{U}}

\def\plus{{\!+}}
\def\b{\vect{\tilde{b}}}  
\def\x{\vect{x}}  
\def\estx{\vect{\hat x}}
\def\y{\vect{y}}
\def\q{\vect{q}}  
\def\w{\vect{w}} 
\def\v{\vect{v}}  
\def\estw{\vect{\hat w}}
\def\estq{\vect{\hat q}}
\def\A{\vect{A}}
\def\B{\vect{B}}
\def\Q{\vect{Q}}
\def\W{\vect{W}}
\def\M{\vect{M}}
\def\D{\vect{D}}
\def\P{\vect{P}}
\def\K{\vect{K}}
\def\p{\vect{p}}
\def\I{\vect{I}}
\def\V{\vect{V}}
\def\H{\vect{H}}
\def\G{\vect{G}}
\def\R{\vect{R}}
\def\X{\vect{X}}
\def\Wav{\vect{Y}}
\def\lambdaB{\vect{\lambda}}
\def\LambdaB{\vect{\Lambda}}

\def\PM{\P_{\M}}
\def\DM{\D_{\M}}
\def\DS{\D_s}
\def\DSinv{\DS^{-1}}


\def\Wbool{\W_{01}}		
\def\Wrang{\W_{R}}		
\def\Wunit{\W_{unit}}		

\def\real{\mathbb{R}}

\def\RR{\vect{R}}

\newcommand{\ff}[1]{#1}
\newcommand{\dif}[1]{\mathbf{\delta}_{#1}}


\newcommand{\mh}[1]{}
\newcommand{\gm}[1]{[[\emph{\color{red}GM: #1}]]}
\newcommand{\eat}[1]{}
\newcommand{\cut}[1]{}

\newcommand{\set}[1]{\{#1\}}   

\newtheorem{definition}{Definition}[section]
\newtheorem{proposition}[definition]{Proposition}
\newtheorem{corollary}[definition]{Corollary}
\newtheorem{conjecture}[definition]{Conjecture}
\newtheorem{property}[definition]{Property}
\newtheorem{theorem}[definition]{Theorem}
\newtheorem{problem}[definition]{Problem}
\newtheorem{example}[definition]{Example}
\newtheorem{remark}[definition]{Remark}
\newtheorem{mylemma}[definition]{Lemma}

\def\nbrs{nbrs}
\def\<{\langle}
\def\>{\rangle}

\def\qq{\tilde{q}}
\def\qbar{\overline{q}}

\def\Q{\mathbf{Q}}
\def\QQ{\mathbf{\tilde{Q}}}
\def\QC{\mathbf{\overline{Q}}}
\def\qq{\tilde{q}}
\def\qbar{\overline{q}}

\def\H{\mathbf{H}}
\def\HH{\mathbf{\tilde{H}}}
\def\HC{\mathbf{\overline{H}}}
\def\hh{\tilde{h}}
\def\hbar{\overline{h}}

\def\Lap{\mbox{Laplace}}
\def\Nor{\mbox{Normal}}
\def\cnt{c}
\def\cons{\gamma}
\def\db{I}

\def\hght{{\ell}}  
\def\hv{\hght(v)}
\def\wt{\alpha}
\def\root{r}

\newcommand{\frob}[1]{||#1||_f}
\newcommand{\E}{\mathbb{E}}
\newcommand{\Ldist}[3]{||#1 -#2||_{#3}}
\newcommand{\rank}{\textup{rank}}
\newcommand{\trace}{\textup{Trace}}

\newcommand{\Ltwo}[1]{||#1||_2}
\newcommand{\Lone}[1]{\left\Vert #1  \right\Vert_1}

\newcommand{\reffull}[1]{#1}

\newtheorem{lemma}{Lemma}

\newcommand{\one}[1]{\mathbb{I}_{#1}}
\def\U{\mathcal U}
\def\Z{succZ} 
\def\s{s}
\def\m{M}
\def\mm{\tilde{M}}


\begin{abstract}

A common goal of privacy research is to release synthetic data that satisfies a formal privacy guarantee and can be used by an analyst in place of the original data.  To achieve reasonable accuracy, a synthetic data set must be tuned to support a specified set of queries accurately, sacrificing fidelity for other queries.   

This work considers methods for producing synthetic data under differential privacy and investigates what makes a set of queries ``easy'' or ``hard'' to answer.  We consider answering sets of linear counting queries using the matrix mechanism \cite{Li:2010Optimizing-Linear}, a recent differentially-private mechanism that can reduce error by adding complex correlated noise adapted to a specified workload.  

Our main result is a novel lower bound on the minimum total error required to simultaneously release answers to a set of workload queries.  The bound reveals that the hardness of a query workload is related to the spectral properties of the workload when it is represented in matrix form.  The bound is most informative for $(\epsilon,\delta)$-differential privacy but also applies to $\epsilon$-differential privacy.


\end{abstract}
\newpage

\section{Introduction}

Differential privacy \cite{Dwork:2006Calibrating-Noise} is a rigorous privacy standard offering participants in a data set the appealing guarantee that released query answers will be nearly indistinguishable whether or not their data is included.  The earliest methods for achieving differential privacy were interactive: an analyst submits a query to the server and receives a noisy query answer.  Further queries may be submitted, but increasing noise will be added and the server may eventually refuse to answer subsequent queries.  

To avoid some of the challenges of the interactive model, differential privacy has often been adapted to a non-inter-active setting where a common goal has been to release a synthetic data set that the analyst can use in place of the original data. There are a number of appealing benefits to releasing a private synthetic database: the analyst need not carefully divide their task into individual queries and can use familiar data processing techniques on the synthetic data; the privacy budget will not be exhausted before the queries of interest have been answered; and data processing can be carried out using the resources of the analyst without revealing tasks to the data owner.

There are limits, however, to private synthetic data generation.  When a synthetic dataset is released, the server no longer controls how many questions the analyst computes from the data.  Dinur and Nissim showed that accurately answering ``too many'' queries of a certain type is incompatible with any reasonable notion of privacy, allowing reconstruction of the database with high probability~\cite{Dinur03Revealing}.  

This tempers the hopes of private synthetic data to some degree, suggesting that if a synthetic dataset is to be private, then it can be accurate only for a specific class of queries and may need to sacrifice accuracy for other queries.  A number of methods have been proposed for releasing accurate synthetic data for specific sets of queries~\cite{Ding:2011fk,Li:2010Optimizing-Linear,Hay:2010Boosting-the-Accuracy,xiao2010differential,xiaodifferentially,blum2008a-learning,barak2007privacy,rastogi2007the-boundary,Yuan12Low-Rank}.  These results show that it is still possible to achieve many of the benefits of synthetic data if the released data is targeted to a {\em workload} of queries that are of interest to the analyst. 


In general, efficient differentially private algorithms for answering sets of queries with minimum error are not known.  The goal of our work is to develop tools that can explain what we informally term the {\em error complexity} of a given workload, which should measure, for fixed privacy parameters, the accuracy with which we can simultaneously answer all queries in the workload.   

Such tools can help us to answer a number of natural questions that arise in the context of private synthetic data generation.  Why is it possible to answer one set of queries more accurately than another?  What properties of the queries, or of their relationship to one another, influence this?  Can lower error be achieved by specializing the query set more closely to the task at hand?  Does the combination of multiple users' workloads severely impact the accuracy possible for the combined workload?


Naive approaches to understanding the ``hardness'' of a query workload are unsatisfying.  For example, one may naturally expect that the greater the number of queries in the workload, the larger the error in simultaneously answering them.  Yet the number of queries in a workload is usually an inadequate measure of its hardness.  Query workload sensitivity \cite{Dwork:2006Calibrating-Noise} is another natural approach. Sensitivity measures the maximum change in all query answers due to an insertion or deletion of a single database record.  Basic differentially private mechanisms (e.g. the Laplace mechanism) add noise to each query in proportion to sensitivity, and in such cases sensitivity does in fact determine error rates.  But better mechanisms can reduce error when answering multiple queries (with no cost to privacy), so that sensitivity alone fails to be a reliable measure.
		

In this paper we seek a better understanding of workload error complexity by reasoning formally about the minimum error achievable for a workload, regardless of the underlying database.  We pursue this goal in the context of a class differentially private algorithms: namely those that are instances of the matrix mechanism (so named because workloads are represented as matrices and analyzed algebraically).  The matrix mechanism can be used to answer sets of linear counting queries, a general class of queries which includes all predicate counting queries, histogram queries, marginals, data cubes, and others. 

The matrix mechanism \cite{Li:2010Optimizing-Linear} exploits the relationships between queries in the workload to construct a complex, correlated noise distribution that offers lower error than standard mechanisms.  It encompasses a range of possible approaches to differentially-private query answering because it must be instantiated with a set of queries, called the ``strategy'', which it uses to derive accurate answers to the workload queries.  This makes the mechanism quite general, since any strategy can be selected.  It includes as a special case a number of recently-proposed techniques for answering various subsets of the class of linear queries including range-count queries, sets of low order marginals, sets of data cubes~\cite{barak2007privacy,xiao2010differential,Hay:2010Boosting-the-Accuracy,Ding:2011fk,chaopvldb12,Yuan12Low-Rank,Yaroslavtsev13Accurate}.  Each of the above techniques can be seen as selecting strategies (either manually or adaptively) that work well for given workloads.  The lower bound presented in this work provides a theoretical method of evaluating the quality of the approaches since it provides a lower bound on the error attained by the best possible strategy.


We use the optimal error achievable under the matrix mechanism as a proxy for workload error complexity.  It is computationally infeasible to compute the optimal strategy for an arbitrary workload, making the assessment of workload complexity a challenge.  Nevertheless, we are able to resolve this challenge through the following contributions:


\begin{itemize}
\item Our main result is a novel lower bound on the minimum total error required to simultaneously release answers to a set of workload queries.  The bound reveals that the ``hardness'' of a query workload is related to the eigenvalues of the workload when it is represented in matrix form.  

\item Under $(\epsilon,\delta)$-differential privacy, we characterize two important classes of workloads for which our lower \\bound is tight. As a consequence, it is possible to directly construct a minimum error mechanism for workloads in these classes.  We also analyze the cases for which our bound is not tight, including when the bound is adapted to $\epsilon$-differential privacy.

\item We compare our lower bound to corresponding bounds on the achieved error of recently-proposed mechanisms \cite{DBLP:conf/stoc/RothR10,hardt2010multiplicative,Gupta:2012uq,DBLP:conf/focs/DworkRV10}.
\end{itemize}

Note that our lower bound on error is a conditional bound: it holds for the class of mechanisms defined by the matrix mechanism,
but not necessarily for all differentially private mechanisms.  Nevertheless, we believe this conditional bound serves as a widely useful tool. First, it helps to resolve a number of open questions about the quality of previously-proposed mechanisms that are instances of the matrix mechanism \cite{Li:2010Optimizing-Linear,xiao2010differential,barak2007privacy,Ding:2011fk,chaopvldb12,Yuan12Low-Rank}.  Second, emerging techniques that can outperform this bound tend to exploit special properties of the input database.  Therefore, the achievable error of those mechanisms no longer reflects only properties of the analysis task (as embodied by the workload) but instead reflects the interaction of the task with the database.  Third, the data-independence of the matrix mechanism makes deployment particularly efficient since the noise distribution is fixed for all input databases once a strategy has been selected.  Our bound therefore helps to clarify the utility possible using data-independent mechanisms and reveals when other methods may be required. 

The organization of the paper is as follows. Section \ref{sec:def} reviews definitions and Section \ref{sec:mechanism} presents the matrix mechanism and properties of workload error.  In Section \ref{sec:svdbound} we present the lower bound and its proof.  Section \ref{sec:svdb:theory} evaluates the tightness and looseness of the bound. We compare our bound with error rates of data-dependent mechanisms in Section~\ref{sec:comparison}.
In Appendix~\ref{sec:operation} we show how the bound interacts with algebraic operations on workloads.  
To aid intuition, we include throughout the paper a series of examples in which we compute our lower bound on workloads of interest and report concrete error rates.  Appendix~\ref{app:proof} includes proofs of results not included in the body of the paper.

\section{Definitions \& Background} \label{sec:def}

In this section we describe our representation of query workloads as matrices, formally define differential privacy, and review linear algebra notation.

\subsection{Data model \& linear queries}

The queries considered in this paper are all counting queries over a single relation.  Let the database $I$ be an instance of a single-relation schema $R(\mathbb{A})$, with attributes $\mathbb{A}=\{A_1, A_2, \dots, \\ A_m\}$.  The crossproduct of the attribute domains, written $dom(\mathbb{A})=dom(A_1) \times \dots \times dom(A_m)$, is the set of all possible tuples that may occur in $I$.  

In order to express our queries, we encode the instance $I$ as a vector $\x$ consisting of {\em cell counts}, each counting the number of tuples in $I$ satisfying a distinct logical cell condition.

\begin{definition}[Cell Conditions]
A cell condition is a Boolean formula which evaluates to True or False on any tuple in $dom(\mathbb{A})$.  A collection of cell conditions $\Phi=\cell_1, \cell_2 \dots \cell_n$ is an ordered list of pairwise unsatisfiable cell conditions: each tuple in $dom(\mathbb{A})$ will satisfy at most one $\cell_i$.  
\end{definition}


The data vector is formed from cell counts corresponding to a collection of cell conditions. 

\begin{definition}[Data vector]
Given instance $I$ and a collection of cell conditions $\Phi = \cell_1, \cell_2 \dots \cell_n$, the data vector $\x$ is the length-$n$ column vector consisting of the non-negative integral counts $x_i = |\{t \in I \:|\: \cell_i(t) \mbox{ is True}\}|$.  
\end{definition}

We may choose to fully represent instance $I$ by defining the vector $\x$ with one cell for every element of $dom(\mathbb{A})$.  Then $\x$ is a bit vector of size $|dom(\mathbb{A})|$ with nonzero counts for each tuple present in $I$.  (This is also a vector representation of the full contingency table built from $I$.)  Alternatively, it may be sufficient to partially represent $I$ by the cell counts in $\x$, for example by focusing on a subset of the attributes of $\mathbb{A}$ that are relevant to a particular workload of interest.  Because workloads are finite, it is always sufficient to consider a finite list of cell conditions, even if attribute domains are infinite.

\begin{example}  Consider the relational schema $R=(name,\\ gradyear, gender, gpa)$ describing students and suppose we wish to form queries over $gradyear$ and $gender$ only, where $dom(gender)=$ $\{M,F\}$ and $dom(gradyear)=\{2011,2012,$ $2013,2014\}$.  Then we can define 8 cell conditions which result from all combinations of $gradyear$ and $gender$.  These are enumerated in Table \ref{tbl:one}(a).
\end{example}

Given a data vector $\x$, queries are expressed as linear combinations of the cell counts in $\x$.  

\begin{definition}[Linear counting query]
A~{\em linear}\\{\em counting query} is a length-$n$ row vector $\q=[q_1 \dots q_n]$ with each $q_i \in \mathbb{R}$.  
The answer to a linear counting query $\q$ on $\x$ is the vector product $\q\x = q_1x_1 + \dots + q_nx_n$.
\end{definition}
If $\q$ consists exclusively of coefficients in $\{0,1\}$, then $\q$ is called a {\em predicate counting query}.  In this case, $\q$ counts the number of tuples in $I$ that satisfy the union of the cell conditions corresponding to the nonzero coefficients in $\q$.


\begin{table*} 
\caption{\label{tbl:one} For schema $R=(name, gradyear, gender, gpa)$, (a) shows 8 cell conditions on attributes $gradyear$ and $gender$.  The database vector $\x$ (not shown) will accordingly consist of 8 counts; (b) shows a sample workload matrix $\W$ consisting of five queries, each described in (c).}
\vspace{3ex}
\centering
\subfigure[Cell conditions $\Phi$]{
\small
\begin{tabular}{l}
$\cell_1: gradyear=2011 \wedge gender=M$\\
$\cell_2: gradyear=2011 \wedge gender=F$ \\
$\cell_3: gradyear=2012 \wedge gender=M$ \\
$\cell_4: gradyear=2012 \wedge gender=F$ \\
$\cell_5: gradyear=2013 \wedge gender=M$ \\
$\cell_6: gradyear=2013 \wedge gender=F$ \\
$\cell_7: gradyear=2014 \wedge gender=M$ \\
$\cell_8: gradyear=2014 \wedge gender=F$ \\
\end{tabular}
}
\quad
\subfigure[A query matrix $\W$]{
\small
$\begin{bmatrix}
1 & 1 & 1 & 1 & 1 & 1 & 1 & 1 \\
1 & 1 & 1 & 1 & 0 & 0 & 0 & 0 \\
0 & 1 & 0 & 1 & 0 & 0 & 0 & 0 \\
1 & 0 & 1 & 0 & 0 & 0 & 0 & 0 \\
0 & 0 & 0 & 0 & 1 & 1 & \mbox{-}1 & \mbox{-}1 \\
\end{bmatrix}$
}
\quad
\subfigure[Counting queries defined by rows of $\W$]{ \small
\begin{tabular}{l}
$\q_1$: all students; \\
$\q_2$: students with $gradyear \in [2011,2012]$;\\
$\q_3$: female students with $gradyear \in [2011,2012]$;\\
$\q_4$: male students with $gradyear \in [2011,2012]$;\\
$\q_5$: difference between 2013 grads and 2014 grads.\\
\end{tabular}
}
\vspace{-5ex}
\end{table*} 

\subsection{Query workloads}
A {\em workload} is a finite set of linear queries.  A workload is represented as a matrix, each row of which is a single linear counting query.  

\begin{definition}[Query matrix]
A {\em query matrix} is a collection of $m$ unique linear counting queries, arranged by rows to form an $m \times n$ matrix.
\end{definition}
Note that cell condition $\cell_i$ defines the meaning of the $i^{th}$ position of $\x$, and accordingly, it determines the meaning of the $i^{th}$ column of $\W$.  Unless otherwise noted, we assume all workloads are defined over the same fixed set of cell conditions.

\begin{example}   
The matrix in Table \ref{tbl:one}(b) shows a workload of five queries.  The first four are predicate queries.  Table \ref{tbl:one}(c) describes the meaning of the queries w.r.t. the cell conditions in Table \ref{tbl:one}(a).
\end{example}

We assume that workloads consist of unique queries, without duplicates.  If workload $\W$ is an $m \times n$ query matrix, the answers for $\W$ are represented as a length $m$ column vector of numerical query results, which can be computed by multiplying matrix $\W$ by the data vector $\x$.  

Note that it is critical that the analyst include in the workload {\em all} queries of interest.   In the absence of noise introduced by the privacy mechanism, it might be reasonable for the analyst to request answers to a small set of counting queries, from which other queries of interest could be computed.  (E.g., it would be sufficient to recover $\x$ itself by choosing the workload defined by the identity matrix.)  But because the analyst will receive private, noisy estimates to the workload queries, the error of queries computed from their combination is often increased.  Our privacy mechanism is designed to optimize error across the entire set of desired queries, so all queries should be included.  

As a concrete example, in Table \ref{tbl:one}(b), $\q_4$ can be computed as $(\q_2 - \q_3)$ but is nevertheless included in the workload.  This reflects the fact that we wish to simultaneously answer all included queries with minimum aggregate error, treating each equally.  It is also possible to scale individual rows by a positive scalar value, which has the effect of reducing the error of that query.

The cell conditions are used to define the semantics of the queries in a workload, while the workload properties we study are primarily determined by features of their matrix representation.  This can lead to a few representational inconsistencies we would like to avoid.  First, while a workload $\W$ is meant to represent a {\em set} of queries, as a matrix it has a specified order of its rows.  Further, for any workload $\W$ defined by cell conditions $\Phi=\cell_1 \dots \cell_n$, consider any permutation $\Phi'$ of $\Phi$.  Then there is a different matrix $\W'$, defined on $\Phi'$ and constructed from $\W$ by applying the permutation to its columns, that is semantically equivalent to $\W$.  We will verify later that our analysis of workloads is representation independent for both rows and columns. We use the following definition:

\begin{definition}[Representation independence] \label{def:rowcol}
A numerical measure $\rho$ on a workload matrix is {\em row (resp. column) representation independent} if, given a workload matrix $\W$, and any workload $\W'$ which results from permuting the rows (columns) of $\W$, $\rho(\W) = \rho(\W')$.
\end{definition}

A related issue arises in the specification of the cell conditions.  If a workload $\W$ is defined by cell conditions in $\Phi$, then we can always consider extending $\Phi$ by adding additional cell conditions not relevant to the queries in $\W$.  We can then define a semantically equivalent workload $\W'$ on $\Phi'$ which will consist of columns of zeroes for each of the new cell conditions.  To address this issue in later sections we rely on the following definition:

\begin{definition}[Column Projection]
Given an $m \times n$ workload $\W$ defined by cell conditions $\Phi=\cell_1 \dots \cell_n$, and an ordered subset $\Psi \subseteq \Phi$ consisting of $p$ selected cell conditions, the column projection of $\W$ w.r.t $\Psi$ is a new $m \times p$ workload consisting only of the columns of $\W$ included in $\Psi$.
\end{definition}
In the rest of the paper, $\mu$ denotes a subset of cell conditions, $|\mu|$ denotes its cardinality and $\Mu_n$ to denote all possible subsets of $n$ cell conditions. $\mu(\W)$ is the column projection of $\W$ w.r.t. $\mu$.
\begin{example}
The column projection of the workload in Table~\ref{tbl:one}(b) w.r.t. cell conditions $\{\phi_1, \phi_3, \phi_5, \phi_7\}$, which consists of queries only over {\em Male} students, is the following:
\[\begin{bmatrix}
1 & 1 & 1 & 1 \\
1 & 1 & 0 & 0 \\
0 & 0 & 0 & 0 \\
1 & 1 & 0 & 0 \\
0 & 0 & 1 & \mbox{-}1 \\
\end{bmatrix}
\]
\end{example}
\paragraph*{Common workloads}  We introduce notation for two common workloads that contain all queries of a certain type. $\allrange(d)$ denotes the workload consisting of all range-count queries over $d$ cell conditions (typically derived from a single ordered attribute $A_i$ with $|dom(A_i)|=d$).  There are $\frac{d}{2}(d+1)$ queries in workload $\allrange(d)$.  Over $k$ ordered attributes with domain sizes $d_1 \dots d_k$, we similarly define the set of all k-dimensional range-count queries, denoted $\allrange(d_1, \dots d_k)$.

{\sc All\-Pred\-icate}$(d)$ is the much larger workload consisting of all predicate counting queries over $d$ cell conditions.  There are $2^d$ queries in $\allpred(d)$.

\begin{example}
The workload of queries counting the  students who have graduated in any interval of years drawn from $\{2011,2012,2013,2014\}$ is denoted $\allrange(4)$ and consists of 10 one-dimensional range queries over the {\em gradyear} attribute.
\end{example}

\begin{example}
\hspace*{-1ex}The set of all two dimensional range-count queries over {\em gradyear} and {\em gender} is written $\allrange(4,2)$, where the possible ``ranges'' for $gender$ are simply $M$, $F$, or $(M \vee F)$.  This workload consists of $30$ queries.  The workload of all predicate queries over over {\em gradyear} and {\em gender} is $\allpred(8)$, consisting of all $256$ predicate queries over a domain of size 8.
\end{example}

\subsection{Differential privacy \& basic mechanisms}

Standard $\epsilon$-differential privacy \cite{Dwork:2006Calibrating-Noise} places a bound (controlled by $\epsilon$) on the difference in the probability of query answers for any two {\em neighboring} databases.  For database instance $\db$, we denote by $\nbrs(\db)$ the set of databases formed by adding or removing exactly one tuple from $\db$. 
Approximate differential privacy~\cite{Dwork:2006Our-Data-Ourselves:,McSherry:2009fk}, is a relaxation in which the $\epsilon$ bound on query answer probabilities may be violated with small probability, controlled by $\delta$.

\begin{definition}[Differential Privacy] 
\hspace*{-1ex}A randomized algorithm $\alg$ is $(\epsilon,\delta)$-differentially private if for any instance $I$, any $I' \in \nbrs(I)$, and any subset of outputs $S \subseteq Range(\alg)$, the following holds:
\[
Pr[ \alg(I) \in S] \leq \exp(\epsilon) \times Pr[ \alg(I') \in S] +\delta
\]		
where the probability is taken over the randomness of the $\alg$.
	\end{definition}
When $\delta=0$, this definition describes standard $\epsilon$-differential privacy.

Both definitions can be satisfied by adding random noise to query answers.  The magnitude of the required noise is determined by the {\em sensitivity} of a set of queries: the maximum change in a vector of query answers over any two neighboring databases.  However, the two privacy definitions differ in the measurement of sensitivity and in their noise distributions.  Standard differential privacy can be achieved by adding Laplace noise calibrated to the $L_1$ sensitivity of the queries \cite{Dwork:2006Calibrating-Noise}. Approximate differential privacy can be achieved by adding Gaussian noise calibrated to the $L_2$ sensitivity of the queries \cite{Dwork:2006Our-Data-Ourselves:,McSherry:2009fk}.  This small difference in the sensitivity metric---from $L_1$ to $L_2$---has important consequences for the theory underlying our analysis and, unless otherwise noted, {\em stated results apply only to approximate differential privacy}.  Sec \ref{sec:svdb:l1} contains a comparison of these two definitions as they pertain to the matrix mechanism and the results of this paper.


Since our query workloads are represented as matrices, we express the sensitivity of a workload as a matrix norm.  Notice that for neighboring databases $\db$ and $\db'$, $|(\db-\db')\cap(\db'-\db)|=1$ and recall that all cell conditions are mutually unsatisfiable.  It follows that the corresponding data vectors $\x$ and $\x'$ differ in exactly one component, by exactly one.  We extend our notation and write  $\x' \in \nbrs(\x)$.  The $L_2$ sensitivity of $\W$ is equal to the maximum $L_2$ norm of the columns of $\W$. Below, $\cols(\W)$ is the set of column vectors $W_i$ of $\W$.

\begin{definition}[$L_2$ Query matrix sensitivity] \label{def:l2sens}
$\quad $ The $L_2$ sensitivity of a query matrix $\W$ is denoted $\sens{\W}$ and defined as follows:
\begin{eqnarray*}
\sens{\W} & \eqbydef & \max_{\x' \in \nbrs(\x)} \Ltwo{\W\x - \W\x'} 
			= \max_{W_i \in \cols(\W)} \Ltwo{W_i}
\end{eqnarray*}
\end{definition}

The classic differentially private mechanism adds independent noise calibrated to the sensitivity of a query workload.  We use $\Nor(\sigma)^m$ to denote a column vector consisting of $m$ independent samples drawn from a Gaussian distribution with mean $0$ and scale $\sigma$.

\begin{proposition}{\sc (Gaussian mechanism \cite{Dwork:2006Our-Data-Ourselves:, McSherry:2009fk})}\label{prop:l2diffpriv}
\hspace*{-0.5ex}Given an $m \times n$ query matrix $\W$, the randomized algorithm $\GM$ that outputs the following vector is $(\epsilon,\delta)$-differentially private:
$$\GM(\W,\x) = \W\x + \Nor(\sigma)^m$$ 
where $\sigma=\sens{\W}\sqrt{2\ln(2/\delta)}/\epsilon$
\end{proposition}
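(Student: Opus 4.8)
The plan is to reduce the analysis of the $m$-dimensional spherical Gaussian mechanism to a one-dimensional privacy-loss computation, invoking only the fact, from Definition~\ref{def:l2sens}, that the true answer vectors of two neighboring instances differ by at most $\sens{\W}$ in $L_2$ norm.

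I would fix an arbitrary instance $\db$, a neighbor $\db' \in \nbrs(\db)$, and set $\vect{d} = \W\x - \W\x'$, so that $b \eqbydef \Ltwo{\vect{d}} \le \sens{\W}$. For any fixed output $\y$, the ratio of the output densities of $\GM$ under $\db$ and $\db'$ equals $\exp\bigl( ( -\Ltwo{\y-\W\x}^2 + \Ltwo{\y-\W\x'}^2 )/(2\sigma^2) \bigr)$; expanding the squares rewrites the exponent as $( 2\langle \y - \W\x, \vect{d}\rangle + b^2 )/(2\sigma^2)$. Substituting the random output $\y = \W\x + \Nor(\sigma)^m = \W\x + \vect{z}$, the privacy-loss random variable is $L = ( 2\langle \vect{z}, \vect{d}\rangle + b^2 )/(2\sigma^2)$. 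Since $\langle \vect{z},\vect{d}\rangle$ is distributed as $\Nor(b\sigma)$ (equivalently, by spherical symmetry one may rotate coordinates so that $\vect{d}$ lies along a single axis, and the noise on the remaining axes cancels in the ratio), $L$ is an affine, hence Gaussian, function of a single scalar Gaussian variable.

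Next I would introduce the bad event $\mathcal{B} = \{ L > \epsilon \}$. On its complement the output density under $\db$ is at most $e^\epsilon$ times that under $\db'$ pointwise, so for every measurable set $S$,
\[
Pr[\GM(\W,\x) \in S] \le e^{\epsilon}\,Pr[\GM(\W,\x') \in S] + Pr[\mathcal{B}],
\]
and it remains to verify $Pr[\mathcal{B}] \le \delta$. Solving the inequality $L > \epsilon$ for $\langle\vect{z},\vect{d}\rangle$ and normalizing shows $\mathcal{B} = \{ G > \sigma\epsilon/b - b/(2\sigma) \}$ for a standard normal $G$. Substituting $\sigma = \sens{\W}\sqrt{2\ln(2/\delta)}/\epsilon$ and using $b \le \sens{\W}$ bounds the threshold below by $\sqrt{2\ln(2/\delta)} - \epsilon/(2\sqrt{2\ln(2/\delta)})$; applying the Gaussian tail bound $Pr[G > t] \le e^{-t^2/2}$ and simplifying (e.g.\ using $\epsilon \le 1$) yields $Pr[\mathcal{B}] \le \delta$, completing the argument.

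\textbf{Main obstacle.} The only step that is not routine is this final tail estimate. One must carefully retain the second-order term $-b/(2\sigma)$ in the threshold for $\mathcal{B}$, since it is exactly what forces the noise scale to involve $\sqrt{2\ln(2/\delta)}$ rather than the naive $\sqrt{2\ln(1/\delta)}$, and then check that the estimate survives worst-casing over $0 < b \le \sens{\W}$ (the threshold is monotone decreasing in $b$, so $b = \sens{\W}$ is the binding case). Everything else is bookkeeping with Gaussian densities, and the stated $\sigma$ is in fact slightly more conservative than strictly necessary, so it is comfortably sufficient.
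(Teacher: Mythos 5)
Your proposal is correct, and it is essentially the canonical proof of the Gaussian mechanism. Note, however, that the paper itself offers no proof of Proposition~\ref{prop:l2diffpriv}: it is imported verbatim from the cited references \cite{Dwork:2006Our-Data-Ourselves:, McSherry:2009fk}, and the argument you give --- reduce to the one-dimensional privacy-loss variable $L = \frac{b}{\sigma}G + \frac{b^2}{2\sigma^2}$ via rotation invariance, condition on the bad event $\{L > \epsilon\}$, and close with a Gaussian tail estimate --- is precisely the argument used in those sources. Your bookkeeping checks out: the threshold $\sigma\epsilon/b - b/(2\sigma)$ is indeed decreasing in $b$, so $b = \sens{\W}$ is the binding case, and with $\sigma = \sens{\W}\sqrt{2\ln(2/\delta)}/\epsilon$ one gets $t^2/2 \geq \ln(2/\delta) - \epsilon/2$, hence $Pr[\mathcal{B}] \leq (\delta/2)e^{\epsilon/2} \leq \delta$ whenever $\epsilon \leq 2\ln 2$.

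The one point worth making explicit is the caveat you already flag in passing: the tail estimate only closes under a boundedness assumption on $\epsilon$ (your $\epsilon \leq 1$ suffices, and $\epsilon \leq 2\ln 2$ is the exact breaking point for the crude bound $Pr[G>t]\leq e^{-t^2/2}$). The proposition as stated in the paper carries no such hypothesis, so strictly speaking your proof establishes it only in that regime --- but this restriction is standard for the Gaussian mechanism (the usual statement with constant $\sqrt{2\ln(1.25/\delta)}$ requires $\epsilon < 1$), and the paper's more conservative constant $\sqrt{2\ln(2/\delta)}$ is what buys you the extra slack. No genuine gap.
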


Recall that $\W\x$ is a vector of the true answers to each query in $\W$.  The algorithm above adds independent Gaussian noise (scaled by the sensitivity of $\W$, $\epsilon$, and $\delta$) to each query answer.  Thus $\GM(\W,\x)$ is a length-$m$ column vector containing a noisy answer for each linear query in $\W$.

\subsection{Linear algebra notation}

Throughout the paper, we use the notation of linear algebra and employ standard techniques of matrix analysis.  Recall that for a matrix $\A$, $\A^T$ is its transpose, $\inv{\A}$ is its inverse, and $\tr(\A)$ is the sum of values on the main diagonal.  The Frobenius norm of $\A$ is denoted $||\A||_F$ and defined as the square root of the squared sum of all entries in $\A$, or, equivalently, $\sqrt{\tr(\A^T\A)}$.  We use $diag(c_1, \dots c_n)$ to indicate an $n \times n$ diagonal matrix with scalars $c_i$ on the diagonal.  We use $\vect{0}^{m \times n}$ to indicate a matrix of zeroes with $m$ rows and $n$ columns.  An orthogonal matrix $\Q$ is a square matrix whose rows and columns are orthogonal unit vectors, and for which $\Q^T=\inv{\Q}$.

We will also rely on the notion of a positive semidefinite matrix. A symmetric square matrix $\A$ is called positive semidefinite if for any vector $\x$, $\x^T\A\x\geq 0$.  If $\A$ is positive semidefinite, denoted $\A\succeq 0$, all its diagonal entries are non-negative as well. In particular, for any matrix $\A$, $\A^T\A$ is a positive semidefinite matrix.

In addition, we use $\A^\plus$ to represent the Moore-Penrose pseudoinverse of a matrix $\A$, a generalization of the matrix inverse defined as follows:
\begin{definition}{\sc (Moore-Penrose Pseudoinverse \cite{ben2003generalized})}
Given a $m\times n$ matrix $\A$, a matrix $\A^\plus$ is the Moore-Penrose pseudoinverse of $\A$ if it satisfies each of the following:
\begin{align*}
\A\A^\plus\A&=\A, &\A^\plus\A\A^{\plus}=\A^{\plus},\\
(\A\A^\plus)^T&=\A\A^\plus, & (\A^\plus\A)^T=\A^\plus\A.
\end{align*}
\end{definition}
We include some important properties of the Moore-Penrose pseudoinverse in the following theorem.
\begin{theorem}{\sc (\cite{ben2003generalized})}\label{thm:mpinverse}
The  Moore-Penrose pseudoinverse satisfies the following properties:
\begin{enumerate}\itemsep 0in
\item Given any matrix $\A$, there exists a unique matrix that is the Moore-Penrose pseudoinverse of $\A$. 
\item Given a vector $\y$, we have $||\y-\A\x||_2\geq||\y-\A\A^\plus\y||_2$ for any vector $\x$.
\item For any satisfiable linear system $\B\A=\W$, $\W\A^\plus$ is a solution to the linear system and $||\W\A^\plus||_F\leq ||\B||_F$ for any solution $\B$ to the linear system.
\end{enumerate}
\end{theorem}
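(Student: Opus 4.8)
The final statement is Theorem~\ref{thm:mpinverse}, which collects three standard properties of the Moore--Penrose pseudoinverse. Here is how I would prove each part.

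\textbf{Plan for Part 1 (existence and uniqueness).} The plan is to produce the pseudoinverse explicitly from the singular value decomposition and then argue uniqueness from the four defining equations. First I would write $\A = \U \LambdaB \V^T$ with $\U, \V$ orthogonal and $\LambdaB$ the $m \times n$ diagonal matrix of singular values, and set $\A^\plus = \V \LambdaB^\plus \U^T$ where $\LambdaB^\plus$ is obtained from $\LambdaB^T$ by inverting each nonzero singular value and leaving the zeros in place. A direct substitution checks all four Penrose equations, since $\LambdaB \LambdaB^\plus$ and $\LambdaB^\plus \LambdaB$ are diagonal $0$--$1$ matrices (projections onto the nonzero coordinates), which are symmetric and idempotent. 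For uniqueness, suppose $\B$ and $\mathbf{C}$ both satisfy the four equations; a standard chain of substitutions---rewriting $\B = \B\A\B = \B(\A\mathbf{C}\A)\B$ and using the symmetry conditions to move factors around---yields $\B = \mathbf{C}$. I would cite \cite{ben2003generalized} for the details rather than writing out the full manipulation.

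\textbf{Plan for Part 2 (least-squares / orthogonal projection).} The plan is to show $\A\A^\plus$ is the orthogonal projector onto the column space of $\A$, so that $\A\A^\plus\y$ is the closest point in $\mathrm{range}(\A)$ to $\y$, and every $\A\x$ lies in $\mathrm{range}(\A)$. From Part 1, $\A\A^\plus$ is symmetric (third Penrose equation) and idempotent ($\A\A^\plus\A\A^\plus = (\A\A^\plus\A)\A^\plus = \A\A^\plus$), hence an orthogonal projection; its range is $\mathrm{range}(\A)$ because $\A\A^\plus\A = \A$ forces $\mathrm{range}(\A) \subseteq \mathrm{range}(\A\A^\plus) \subseteq \mathrm{range}(\A)$. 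Then for any $\x$, write $\y - \A\x = (\y - \A\A^\plus\y) + (\A\A^\plus\y - \A\x)$, observe the first term lies in the orthogonal complement of $\mathrm{range}(\A)$ while the second lies in $\mathrm{range}(\A)$, and apply the Pythagorean identity to conclude $\Ltwo{\y - \A\x}^2 = \Ltwo{\y - \A\A^\plus\y}^2 + \Ltwo{\A\A^\plus\y - \A\x}^2 \geq \Ltwo{\y - \A\A^\plus\y}^2$.

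\textbf{Plan for Part 3 (minimum-Frobenius-norm solution).} The plan is to apply Part 2 row by row. Suppose $\B\A = \W$ is satisfiable. First check $\W\A^\plus$ is a solution: $\W\A^\plus\A = \B\A\A^\plus\A = \B\A = \W$, using the first Penrose equation. Then for any solution $\B$, note $\B\A = \W$ means each row of $\B$ satisfies $\A^T (\text{row})^T = (\text{corresponding row of } \W)^T$, i.e. the transposed rows solve a linear system with matrix $\A^T$; by Part 2 applied to $\A^T$, the minimum-$L_2$-norm solution of each such system is $(\A^T)^\plus (\cdot) = (\A^\plus)^T(\cdot)$, which gives precisely the rows of $\W\A^\plus$. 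Summing the squared row norms gives $\frob{\W\A^\plus}^2 \le \frob{\B}^2$. The one technical point to get right is the identity $(\A^\plus)^T = (\A^T)^\plus$, which follows immediately by transposing the four Penrose equations; and one must also verify that the minimum-norm solution delivered by Part~2 applied to $\A^T$ coincides with the corresponding row of $\W\A^\plus$, which holds because $\mathrm{range}((\A^T)^\plus) = \mathrm{range}(\A^\plus) = \mathrm{row space of }\A$ and $\W\A^\plus$ has rows in that space. None of the steps is a serious obstacle; the main care needed is bookkeeping with transposes and the fact that these are cited results, so I would keep the argument brief and defer full computations to \cite{ben2003generalized}.
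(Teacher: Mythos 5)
The paper offers no proof of this theorem; it is imported wholesale from \cite{ben2003generalized}, so there is nothing internal to compare against. Your plans for Parts 1 and 2 are the standard textbook arguments (SVD construction plus the uniqueness chain; $\A\A^\plus$ as the orthogonal projector onto $\mathrm{range}(\A)$ plus Pythagoras) and are correct as sketched.

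Part 3 contains one genuine, though easily repaired, gap. You invoke ``Part 2 applied to $\A^T$'' to conclude that $(\A^T)^\plus \vect{c}$ is the \emph{minimum-norm} solution of the consistent system $\A^T\vect{z}=\vect{c}$. But Part 2 is the least-squares property: it says $\A^\plus\y$ minimizes the \emph{residual} $\Ltwo{\y-\A\x}$ over $\x$; it says nothing about which minimizer has smallest norm, and for a consistent system every exact solution has residual zero, so Part 2 gives no discrimination among them at all. What Part 3 actually needs is the dual, minimum-norm property, which requires its own short argument: any solution of $\A^T\vect{z}=\vect{c}$ decomposes as $\vect{z}=(\A^T)^\plus\vect{c}+\vect{u}$ with $\A^T\vect{u}=0$; since $(\A^T)^\plus\A^T$ is a symmetric idempotent with range $\mathrm{range}\bigl((\A^T)^\plus\bigr)=\mathrm{range}(\A)$ and $\ker(\A^T)=\mathrm{range}(\A)^{\perp}$, the two summands are orthogonal and $\Ltwo{\vect{z}}^2=\Ltwo{(\A^T)^\plus\vect{c}}^2+\Ltwo{\vect{u}}^2$. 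Your closing remark about ranges is essentially this missing ingredient, but as written it is mislabeled as a consequence of Part 2, and the identity you state there, $\mathrm{range}((\A^T)^\plus)=\mathrm{range}(\A^\plus)$, cannot be right as the two matrices map into different ambient spaces (the correct identities are $\mathrm{range}(\A^\plus)=\mathrm{range}(\A^T)$ and $\mathrm{range}((\A^T)^\plus)=\mathrm{range}(\A)$). With that orthogonality lemma inserted, the row-by-row summation yielding $\frob{\W\A^\plus}\leq\frob{\B}$ goes through exactly as you describe.
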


Throughout the paper, we use the singular value decomposition of a matrix, which is a classic tool of matrix analysis. If $\W$ is an $m \times n$ matrix, the singular value decomposition (SVD) of $\W$ is a factorization of the form $\W = \Q_\W \LambdaB_\W \P_\W^T$ such that $\Q_\W$ is an $m \times m$ orthogonal matrix, $\LambdaB_\W$ is a $m \times n$ diagonal matrix containing the singular values of $\W$ and $\P_\W$ is an $n \times n$ orthogonal matrix.  When $m > n$, the diagonal matrix $\LambdaB_\W$ consists of an $n \times n$ diagonal submatrix combined with $\vect{0}^{(m-n) \times n}$. In addition, we also consider the eigenvalue decomposition of matrix $\W^T\W$ and the square root of $\W^T\W$. The eigenvalue decomposition of $\W^T\W$ has the form $\W^T\W=\P_\W\D_\W\P_\W^T$, where $\P_\W$ is the same matrix as the singular value decomposition of $\W$ and $\D'_\W$ is an $n \times n$ diagonal matrix such that $\D_\W=\LambdaB_\W^T\LambdaB_\W$. The square root of $\W^T\W$, denoted as $\sqrt{\W^T\W}$, is a matrix $\W'$ such that $(\W')^2 = \W^T\W$, which can also be represented as the singular values and singular vectors of $\W$: $\W'= \P_\W \LambdaB_\W \P_\W^T$.
\section{The $(\epsilon,\delta)$-Matrix Mechanism}\label{sec:mechanism}

In this section we define the class of algorithms that can be constructed using the matrix mechanism, we define optimal error of a workload with respect to this class, and we develop notions of workload equivalence and containment consistent with our error measures.

\subsection{The extended matrix mechanism}

The matrix mechanism~\cite{Li:2010Optimizing-Linear} has a form similar to the Gaussian mechanism in Prop.~\ref{prop:l2diffpriv}, but adds a more complex noise vector.  It uses a different set of queries (the strategy matrix $\A$) to construct this vector. The intuitive justification for this mechanism is that it is equivalent to the following three-step process: (1) the queries in the strategy are submitted to the Gaussian mechanism; (2) an estimate $\estx$ for $\x$ is derived by computing the $\estx$ that minimizes the squared sum of errors (this step consists of standard linear regression and requires that $\A$ be full rank to ensure a unique solution); (3) noisy answers to the workload queries are then computed as $\W\estx$. 

We present an extended version of the matrix mechanism which relaxes the requirement that $\A$ be full rank.  Instead it is sufficient that all queries in $\W$ can be represented as linear combinations of queries in $\A$. Then estimating $\estx$ is not necessary, and step (2) and (3) can be combined.

\begin{restatable}[]{proposition}{extendmmech} \label{def:m-mech} {\hspace*{-0.7ex}\sc(Extended $(\epsilon,\!\delta)$-Matrix Mechanism)}
Given an $m \times n$ query matrix $\W$, and a $p \times n$ strategy matrix $\A$ such that $\W\A^\plus\A=\W$, the following randomized algorithm $\MM_\A$ is $(\epsilon,\delta)$-differentially private:
\begin{eqnarray*}
\MM_\A(\W,\x) &=& \W\x + \W \A^\plus \Nor(\sigma)^m.
\end{eqnarray*}
where $\sigma=\sens{\A}\sqrt{2\ln(2/\delta)}/\epsilon$.
\end{restatable}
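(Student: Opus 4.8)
The plan is to derive the privacy of $\MM_\A$ from the privacy of the Gaussian mechanism applied to the strategy matrix $\A$, combined with closure of $(\epsilon,\delta)$-differential privacy under data-independent post-processing. The hypothesis $\W\A^\plus\A=\W$ is used precisely to exhibit $\MM_\A$ as such a post-processing; note that this condition is equivalent to requiring every row of $\W$ to lie in the row space of $\A$ (i.e.\ every workload query is a linear combination of strategy queries), which is the weakened assumption that replaces ``$\A$ is full rank'' in the extended mechanism. As a first step, applying Proposition~\ref{prop:l2diffpriv} directly to the $p \times n$ query matrix $\A$ shows that, with $\sigma=\sens{\A}\sqrt{2\ln(2/\delta)}/\epsilon$, the algorithm $\GM(\A,\x)=\A\x+\Nor(\sigma)^p$ is $(\epsilon,\delta)$-differentially private; since the Gaussian noise has full support, the range of this mechanism is all of $\real^p$. (The noise vector here has length $p$, the number of strategy queries, matching the $m \times p$ shape of $\W\A^\plus$.)

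Next I would establish the algebraic identity tying the two mechanisms together. Let $f:\real^p\to\real^m$ be the fixed linear map $f(\y)=\W\A^\plus\y$, which depends only on the publicly known matrices $\W$ and $\A$ and not on the database. Substituting the output of the strategy mechanism,
\[
f\big(\GM(\A,\x)\big)=\W\A^\plus\big(\A\x+\Nor(\sigma)^p\big)=(\W\A^\plus\A)\,\x+\W\A^\plus\,\Nor(\sigma)^p=\W\x+\W\A^\plus\,\Nor(\sigma)^p,
\]
where the final equality is exactly the hypothesis $\W\A^\plus\A=\W$. Hence, for every database $\x$, the random vectors $\MM_\A(\W,\x)$ and $f\big(\GM(\A,\x)\big)$ are identically distributed.

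To conclude, I would fix an instance $I$, a neighbor $I'\in\nbrs(I)$, and a measurable set $S\subseteq\real^m$, and note that, since $f$ is linear and therefore continuous, $f^{-1}(S)$ is a measurable subset of $\real^p$; then the identity above together with the $(\epsilon,\delta)$-privacy of $\GM(\A,\cdot)$ applied to the event $f^{-1}(S)$ gives
\[
Pr[\MM_\A(\W,I)\in S]=Pr[\GM(\A,I)\in f^{-1}(S)]\le e^{\epsilon}\,Pr[\GM(\A,I')\in f^{-1}(S)]+\delta=e^{\epsilon}\,Pr[\MM_\A(\W,I')\in S]+\delta,
\]
which is the defining inequality of $(\epsilon,\delta)$-differential privacy for $\MM_\A$.

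There is no serious obstacle here; the only point needing a moment's care is the last step, where $f$ is in general neither injective nor surjective (its image is the span of $\{\W\v : \v \text{ in the row space of } \A\}$), so the argument must go through preimages of measurable sets rather than any change-of-variables or density computation. This is harmless, since preimages of Borel sets under a continuous map are Borel, but it is the only place any measure-theoretic bookkeeping enters; everything else is the routine algebra of the second paragraph together with a citation for the Gaussian mechanism. If one prefers not to invoke a black-box post-processing lemma, the displayed chain of (in)equalities above is itself a self-contained proof of the instance needed.
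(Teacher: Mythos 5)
Your proof is correct and is essentially the same as the paper's: both rewrite $\MM_\A(\W,\x)$ as $\W\A^\plus(\A\x+\Nor(\sigma)^p)$ using the hypothesis $\W\A^\plus\A=\W$ and invoke post-processing of the Gaussian mechanism applied to $\A$; you merely make the measurability bookkeeping explicit where the paper leaves it implicit. (You are also right that the noise vector should have length $p$, not $m$ as written in the statement.)
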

The proof is contained in App.~\ref{app:proof:extendmmech}.  

Here condition $\W\A^\plus\A\!=\!\W$ guarantees that the queries in $\A$ can
represent all queries in $\W$. When $\A$ has full rank, $\A^\plus = \inv{(\A^T\A)}\A^T$, which coincides with the original definition~\cite{Li:2010Optimizing-Linear}.  

Like the Gaussian mechanism, the matrix mechanism computes the true answer vector $\W\x$ and adds noise to each component.  But a key difference is that the scale of the Gaussian noise is {\em calibrated to the sensitivity of the strategy matrix $\A$, not that of the workload}.  In addition, the noise added to the list of query answers is no longer independent, because the vector of independent Gaussian samples is transformed by the matrix $\W\A^\plus$.  

The matrix mechanism can reduce error, particularly for large or complex workloads, by avoiding redundancy in the set of desired workload queries.  Intuitively, some workloads consist of queries that ask the same (or similar) questions of the database multiple times, which incurs a significant cost to the privacy budget under the Gaussian mechanism.  By choosing the right strategy matrix for a workload, it is possible to remove the redundancy from the queries submitted to the privacy mechanism and derive more accurate answers to the workload queries. 

Fundamental to the performance of the matrix mechanism, is the choice of the strategy matrix which instantiates it.  One naive approach is to minimize sensitivity.  The full rank strategy matrix with least sensitivity is the identity matrix, $\I$, which has sensitivity 1.  With $\A=\I$, the matrix mechanism privately computes the individual counts in $\x$ and then uses them to estimate any desired workload query.  At the other extreme, the workload itself can be used as the strategy, setting $\A=\W$.  In this case, there is no benefit in sensitivity over the Gaussian mechanism\footnote{Although there is no benefit in sensitivity when $\A=\W$, the matrix mechanism still has lower error than the Gaussian mechanism for some workloads by combining related query answers into a more accurate consistent result.}.  

For many workloads, neither of these basic strategies offer optimal error.  Recent research has shown that for specific workloads, there exist strategies that can offer much better error rates.  For example, if $\W=\allrange(n)$, two strategies were recently proposed. A {\em hierarchical} strategy \cite{Hay:2010Boosting-the-Accuracy} includes the total sum over the whole domain, the count of each half of the domain, and so on, terminating with counts of individual elements of the domain.  The {\em wavelet} strategy \cite{xiao2010differential} consists of the matrix describing the Haar wavelet transformation.  Informally, both strategies achieve low error because they each have low sensitivity, $O(\log n)$, and every range query can be expressed as a linear combination of just a few strategy queries.\footnote{The approaches in \cite{Hay:2010Boosting-the-Accuracy,xiao2010differential} were originally proposed in the context of $\epsilon$-differential privacy, but their behavior is similar under $(\epsilon,\delta)$-differential privacy.}  Although both of these strategy matrices offer significant improvements in error for range query workloads, neither is optimal.  We will use our lower bound to evaluate the quality of these proposed strategies in Sec \ref{sec:svdbound}.  Other recently-proposed techniques can be seen as attempts to compute approximately optimal strategy matrices adapted to the input workload \cite{barak2007privacy,Ding:2011fk,chaopvldb12,Yuan12Low-Rank,Yaroslavtsev13Accurate}.


\subsection{Measuring \& minimizing error} \label{sec:sub:error}

We measure the error of individual query answers using mean squared error.  For a workload of queries, the error is defined as the {\em total} of individual query errors.  

\begin{definition}[Query and Workload Error \cite{Li:2010Optimizing-Linear}]\label{def:wkerror} Let $\estw$ be the estimate for query $\w$ under the matrix mechanism using query strategy $\A$.  That is, $\estw=\MM_\A(\w,\x)$.  The mean squared error of the estimate for $\w$ using strategy $\A$ is: 
\vspace*{-1ex}
$$\error{\A}{\w} \eqbydef \E[ ( \w\x - \estw)^2 ].$$ 
Given a workload $\W$, the total mean squared error of answering $\W$ using strategy $\A$ is: $\error{\A}{\W} =$ \\
$\sum_{\w_i \in \W} \error{\A}{\w_i}$.
\end{definition}

The query answers returned by the matrix mechanism are linear combinations of noisy strategy query answers to which independent Gaussian noise has been added.  Thus, as the following proposition shows (extending the corresponding proposition from \cite{Li:2010Optimizing-Linear}), we can directly compute the error for any linear query $\w$ or workload of queries $\W$:


\begin{restatable}[Total Error]{proposition}{proptotalerror}\label{prop:totalerror}
Given a workload \\$\W$, the total error of answering $\W$ using the extended $(\epsilon,\delta)$-matrix mechanism with query strategy $\A$ is:
\begin{equation}\label{eqn:totalerror}
 \error\A{\W} = P(\epsilon, \delta)\sens{\A}^2 \;||\W\A^\plus||_F^2
 \end{equation}
where $P(\epsilon, \delta)=\frac{2\log(2/\delta)}{\epsilon^2}$.
\end{restatable}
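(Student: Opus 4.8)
The plan is to exploit the fact that the matrix mechanism returns an \emph{unbiased} estimate of each workload query answer, so that the mean squared error equals the variance of the additive noise term, which can be read off directly from the covariance structure of the Gaussian noise. First I would note that the hypothesis $\W\A^\plus\A=\W$ holds row by row, so for every row $\w_i$ of $\W$ we also have $\w_i\A^\plus\A=\w_i$; hence $\MM_\A(\w_i,\x)$ is well defined and, by Prop.~\ref{def:m-mech} applied to the single-row workload $\w_i$, equals $\w_i\x + \w_i\A^\plus\b$, where $\b$ denotes the vector of independent Gaussian samples $\Nor(\sigma)$ used by the mechanism. (Equivalently, the mechanism adds $\b$ to the strategy answers $\A\x$ and applies $\W\A^\plus$, the noiseless part $\W\A^\plus\A\x$ collapsing to $\W\x$ by the rank condition.) Since the coordinates of $\b$ are independent with mean $0$ and variance $\sigma^2$, we have $\E[\b]=\vect{0}$ and $\E[\b\b^T]=\sigma^2\I$.

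Next, for a single query the estimate error is the mean-zero random variable $\w_i\x - \MM_\A(\w_i,\x) = -\,\w_i\A^\plus\b$, so that
\[
\error{\A}{\w_i} = \E\!\left[(\w_i\A^\plus\b)^2\right] = (\w_i\A^\plus)\,\E[\b\b^T]\,(\w_i\A^\plus)^T = \sigma^2\,\Ltwo{\w_i\A^\plus}^2 ,
\]
i.e. $\sigma^2$ times the squared $L_2$ norm of the $i$-th row of $\W\A^\plus$. Summing over the rows $\w_1,\dots,\w_m$ of $\W$ and using $||\M||_F^2 = \sum_i \Ltwo{\M_i}^2$ (the squared Frobenius norm is the sum of squared row norms), I obtain $\error{\A}{\W} = \sigma^2\,||\W\A^\plus||_F^2$. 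Finally, substituting $\sigma = \sens{\A}\sqrt{2\ln(2/\delta)}/\epsilon$ gives $\sigma^2 = P(\epsilon,\delta)\,\sens{\A}^2$ with $P(\epsilon,\delta) = 2\log(2/\delta)/\epsilon^2$, which is exactly~\eqref{eqn:totalerror}.

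There is no substantial obstacle here; the result is essentially bookkeeping once one observes unbiasedness and the diagonal covariance of the Gaussian noise. The only points requiring a little care are (i) verifying that the extended mechanism is well defined on each individual row, so that $\error{\A}{\w_i}$ is meaningful and Def.~\ref{def:wkerror} applies; (ii) confirming the estimate is exactly unbiased, so that the mean squared error reduces to a pure variance with no bias term; and (iii) correctly matching the dimension of the noise vector $\b$ to the number of strategy queries when expanding the quadratic form $\E[(\w_i\A^\plus\b)^2]$.
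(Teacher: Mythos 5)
Your proposal is correct and follows essentially the same route as the paper's proof: observe that the estimate is unbiased so the mean squared error of each query $\w_i$ is the variance $\sigma^2\Ltwo{\w_i\A^\plus}^2$ of the noise term, then sum over rows to obtain $\sigma^2||\W\A^\plus||_F^2$ and substitute $\sigma^2 = P(\epsilon,\delta)\sens{\A}^2$. The extra care you take with the covariance matrix $\E[\b\b^T]=\sigma^2\I$ and with well-definedness on individual rows is sound but does not change the argument.
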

\eat{According to proposition~\ref{prop:totalerror}, when the privacy parameters $\epsilon$ and $\delta$ are chosen, the total error of using strategy $\A$ to answer workload $\W$ is explicitly determined by 
\[|| \A ||_2^2 \tr (\W^T\W(\A^T\A)^{-1}).\]
To make the representation easier, in the rest of this paper, we always assume $P(\epsilon, \delta)=1$ so that
\[ \totalerror\A{\W} = || \A ||_2^2\tr (\W^T\W(\A^T\A)^{-1}).\]}
%
The proof is contained in App.~\ref{app:proof:extendmmech}. 

We call a mechanism \textsl{data-independent} if its error for workload $\W$ is independent of the data vector $\x$. Proposition~\ref{prop:totalerror} shows that the matrix mechanism is data-independent.

The optimal strategy for a workload $\W$ is defined to be the one that minimizes total error (the same measure used in \cite{Li:2010Optimizing-Linear}).  

\begin{definition}[Minimum Total Error \cite{Li:2010Optimizing-Linear}]  \label{def:minerr}
Given a workload $\W$, the minimum total error is:
\begin{equation}\label{eqn:mintotalerror}
	\minerror(\W) = \min_{\A:\, \W\A^\plus\A=\W}\error\A\W.  
\end{equation}
\end{definition}

Our work investigates the error complexity of workloads as it is represented by $\minerror(\W)$.  This reflects the hardness of the workload assuming we use an algorithm that is an instance of the matrix mechanism and that a total squared error measure is used.  Understanding error for this class of mechanisms is a first step towards more general lower bounds and helps to assess the quality of a number of previously-proposed algorithms included in this class~\cite{Li:2010Optimizing-Linear,xiao2010differential,barak2007privacy,Ding:2011fk,chaopvldb12,Yuan12Low-Rank,Yaroslavtsev13Accurate}. 


The strategy matrix that minimizes total error can be computed using a semi-definite program (SDP) \cite{Li:2010Optimizing-Linear}.  However, finding the solutions of the program with standard SDP solvers takes $O(n^8)$ time, where $n$ is the number of cell conditions, making it infeasible for realistic applications.  Efficient approximation algorithms for this problem have been investigated recently~\cite{chaopvldb12,Yuan12Low-Rank}.  Yet, approximate---or even exact---solutions to this problem do not provide much general insight into the main goal of this paper: to understand the properties of workloads that determine the magnitude of $\minerror(\W)$.  The bound we will present in Sec. \ref{sec:svdbound} is important because it closely approximates $\minerror(\W)$, it is easily computable, and it reveals the connection between minimum error and spectral properties. 

Nevertheless, our results do not preclude the existence of different mechanisms capable of answering a workload $\W$ with lower error. In particular, our error analysis does not include data-dependent algorithms~\cite{DBLP:conf/stoc/RothR10, hardt2010multiplicative,nissim2007smooth,Cormode:2012Spatial}.  The error rates for these mechanisms no longer reflect properties of the workload alone, but instead some combination of the workload and properties of the input data, and call for a substantially different analysis.  In our case, $\x$ (i.e., the vector of cell counts corresponding to the database) does not appear in (1) above.  This means that our minimum error strategy depends on the workload alone, independent of a particular database instance.

\subsection{Equivalence \& containment for workloads}\label{sec:mechanism:equiv}

Next we develop a notion of equivalence and containment of workloads with respect to error.  We will verify that the error bounds presented in the next section satisfy these relationships in most cases. 

The special form of the expression for total error in Prop. \ref{prop:totalerror} means that there are many workloads that are equivalent from the standpoint of error. For two workloads $\W_1$ and $\W_2$, if $\W_1^T\W_1 = \W_2^T\W_2$, then any strategy $\A$ that can represent the queries of $\W_1$ can also represent the queries of $\W_2$, and vice versa. In addition, $\W_1^T\W_1 = \W_2^T\W_2$ implies $||\W_1\A^\plus||_F^2=||\W_2\A^\plus||_F^2$ for any strategy $\A$.  We therefore define the following notion of {\em equivalence} of two workloads: 

\begin{definition}[Workload Equivalence] An $m_1 \times n_1$ workload $\W_1$ and an $m_2\times n$ workload $\W_2$ are {\em equivalent}, denoted $\W_1 \equiv \W_2$, if $\W_1^T\W_1 = \W_2^T\W_2$.
\end{definition}

The following conditions on pairs of workloads imply that they have equivalent minimum error: 

\begin{restatable}[Equivalence Conditions]{proposition}{propequiverr}\label{prop:equiverr}
Given\\ an $m_1\times n_1$ workload $\W_1$ and an $m_2\times n_2$ workload $\W_2$, each of the following conditions implies that $\minerror(\W_1)=\minerror(\W_2)$:
\begin{enumerate} \itemsep 0in
\item[(i)] $\W_1 \equiv \W_2$
\item[(ii)] $\W_1=\Q\W_2$ for some orthogonal matrix $\Q$.
\item[(iii)] $\W_2$ results from permuting the rows of $\W_1$.
\item[(iv)] $\W_2$ results from permuting the columns of $\W_1$.
\item[(v)] $\W_2$ results from the column projection of $\W_1$ on all of its nonzero columns.
\end{enumerate}
\end{restatable}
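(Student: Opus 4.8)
The plan is to show each condition reduces to (i), i.e., that it implies $\W_1 \equiv \W_2$, and then to observe that equivalent workloads have the same minimum error because the total error expression in Prop.~\ref{prop:totalerror} depends on $\W$ only through $\W^T\W$ (and the feasibility constraint $\W\A^\plus\A = \W$ also depends on $\W$ only through its row space, hence through $\W^T\W$). So the core task is the implication (i) $\Rightarrow$ $\minerror(\W_1) = \minerror(\W_2)$, and then a short check that each of (ii)--(v) yields $\W_1^T\W_1 = \W_2^T\W_2$ (possibly after padding with zero columns so the two Gram matrices have the same size, which is the only mild subtlety).

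First I would handle (i). Suppose $\W_1^T\W_1 = \W_2^T\W_2$. I claim the set of feasible strategies is the same for both: if $\W_1\A^\plus\A = \W_1$ then every row of $\W_1$ lies in the row space of $\A$; since $\W_1^T\W_1 = \W_2^T\W_2$ forces $\W_1$ and $\W_2$ to have the same row space (indeed $\W_1 = \Q\W_2$ for some orthogonal $\Q$ by the SVD, as noted below), every row of $\W_2$ also lies in $\rowspace(\A)$, hence $\W_2\A^\plus\A = \W_2$, and symmetrically. Next, for any feasible $\A$, $\|\W_1\A^\plus\|_F^2 = \trace\bigl((\A^\plus)^T\W_1^T\W_1\A^\plus\bigr) = \trace\bigl((\A^\plus)^T\W_2^T\W_2\A^\plus\bigr) = \|\W_2\A^\plus\|_F^2$, and $\sens{\A}$ does not involve $\W$ at all. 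Therefore $\error{\A}{\W_1} = \error{\A}{\W_2}$ for every feasible $\A$, and taking the minimum over the common feasible set gives $\minerror(\W_1) = \minerror(\W_2)$.

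Then I would dispatch (ii)--(v) by reducing each to (i). For (ii), if $\W_1 = \Q\W_2$ with $\Q$ orthogonal, then $\W_1^T\W_1 = \W_2^T\Q^T\Q\W_2 = \W_2^T\W_2$. For (iii), a row permutation is left-multiplication by a permutation matrix, which is orthogonal, so this is a special case of (ii). For (iv), a column permutation is right-multiplication by a permutation matrix $\P$; then $\W_1^T\W_1$ and $\W_2^T\W_2 = \P^T\W_1^T\W_1\P$ are permutation-similar, so they are not literally equal, but here the cleaner route is to argue directly: a column permutation replaces $\A$ by $\A\P$, sends $\sens{\A}$ to $\sens{\A\P}=\sens{\A}$ (permuting columns permutes column norms), and sends $\|\W\A^\plus\|_F$ to $\|\W\P(\A\P)^\plus\|_F = \|\W\P\P^{-1}\A^\plus\|_F = \|\W\A^\plus\|_F$ using $(\A\P)^\plus = \P^{-1}\A^\plus$ for orthogonal $\P$; feasibility is likewise preserved. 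For (v), deleting an all-zero column of $\W_1$ (and the corresponding row/column of the strategy) changes neither $\sens{\A}$ nor $\|\W\A^\plus\|_F$ nor feasibility, since a zero column of $\W$ contributes nothing to any of these quantities; equivalently, after re-padding, $\W_1^T\W_1$ equals $\W_2^T\W_2$ with an extra zero row and column, and the argument of case (i) goes through verbatim.

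The main obstacle is the bookkeeping around differing column dimensions: the notion of equivalence in the definition requires $\W_1^T\W_1 = \W_2^T\W_2$ as equal matrices, so for (iv) and (v) I cannot literally invoke (i) and must instead verify invariance of the three ingredients of $\error{\A}{\cdot}$ (the sensitivity $\sens{\A}$, the Frobenius term $\|\W\A^\plus\|_F$, and the feasibility constraint) directly under the corresponding transformation of the strategy. Establishing the identity $(\A\P)^\plus = \P^{-1}\A^\plus = \P^T\A^\plus$ for a permutation (more generally orthogonal) matrix $\P$ follows immediately from the four defining Moore--Penrose equations (Def.~2.18), and is the one small lemma I would state en route. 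Everything else is routine trace manipulation.
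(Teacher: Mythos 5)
Your proposal is correct and follows essentially the same route as the paper's proof: the trace identity $\|\W_1\A^\plus\|_F^2=\trace((\A^\plus)^T\W_1^T\W_1\A^\plus)$ for (i), reduction of (ii) and (iii) to (i), the direct strategy map $\A\mapsto\A\P$ for column permutations in (iv), and zero-column padding for (v). Your explicit check that equivalent workloads share the same feasible strategy set (via equality of row spaces) is a point the paper's proof leaves implicit, and is a welcome bit of added rigor rather than a different approach.
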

It follows from this proposition that $\minerror$ is row and column representation independent, and behaves well under the projection of extraneous columns.  

Defining a notion of containment for workload matrices is more complex than simple inclusion of rows.  Even if the rows of $\W_1$ are not present in $\W_2$, it could be that $\W_1$ is in fact contained in $\W_2$ when expressed using an alternate basis.  The following definition considers this possibility:

\begin{definition}[Workload Containment]
An $m_1 \times n$ workload $\W_1$ is contained in an $m_2\times n$ workload $\W_2$, denoted $\W_1\subseteq \W_2$, if there exists a $\W_2'\equiv\W_2$ and the rows of $\W_1$ are contained in $\W'_2$.
\end{definition}  

The following proposition shows two conditions which imply inequality of error among workloads:

\begin{restatable}[Error inequality]{proposition}{properrinequal}\label{prop:errorinequal} 
Given an $m_1 \times n_1$ workload $\W_1$ and an $m_2 \times n_2$ workload $\W_2$, each of the following conditions implies that $$\minerror(\W_1) \leq\minerror(\W_2):$$
\begin{enumerate} \itemsep 0in
\item[(i)] $\W_1 \subseteq \W_2$
\item[(ii)] $\W_1$ is a column projection of $\W_2$.
\end{enumerate}
\end{restatable}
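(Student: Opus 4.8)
\medskip
\noindent\textbf{Proof plan.}
The plan is to prove both inequalities by a single template: starting from a strategy that is optimal for $\W_2$ (or for a workload equivalent to it), I would construct a strategy that is a valid instantiation of the extended matrix mechanism for $\W_1$ and whose total error, as given by Prop.~\ref{prop:totalerror}, is no larger. The three ingredients I would use are the closed form $\error{\A}{\W} = P(\epsilon,\delta)\sens{\A}^2\,||\W\A^\plus||_F^2$, the equivalence conditions of Prop.~\ref{prop:equiverr}, and property~3 of Thm.~\ref{thm:mpinverse}: if a linear system $\X\A=\W$ is satisfiable, then $\W\A^\plus$ is a solution and has smallest Frobenius norm among all solutions.

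For part (i), I would first unfold $\W_1\subseteq\W_2$: there is $\W_2'\equiv\W_2$ whose rows include every row of $\W_1$, and by Prop.~\ref{prop:equiverr}(i) it suffices to bound $\minerror(\W_1)$ by $\minerror(\W_2')$. Taking $\A$ to attain $\minerror(\W_2')$, so $\W_2'\A^\plus\A=\W_2'$, I would restrict this matrix identity to the rows that make up $\W_1$ (selecting rows commutes with right multiplication) to get $\W_1\A^\plus\A=\W_1$, so that $\A$ is a valid strategy for $\W_1$. Since $\W_1\A^\plus$ is then the corresponding row-submatrix of $\W_2'\A^\plus$, its squared Frobenius norm is a partial sum of that of $\W_2'\A^\plus$, that is, $||\W_1\A^\plus||_F^2\le||\W_2'\A^\plus||_F^2$. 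As both uses of the mechanism share the sensitivity $\sens{\A}$, Prop.~\ref{prop:totalerror} gives $\error{\A}{\W_1}\le\error{\A}{\W_2'}=\minerror(\W_2')=\minerror(\W_2)$, hence $\minerror(\W_1)\le\minerror(\W_2)$.

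For part (ii), writing $\W_1=\mu(\W_2)$ and taking $\A$ to attain $\minerror(\W_2)$, I would set $\A_1=\mu(\A)$ and $\B=\W_2\A^\plus$. The argument rests on two elementary facts. First, $\A_1$ retains a sub-list of the columns of $\A$, so $\sens{\A_1}\le\sens{\A}$. Second, column projection commutes with left multiplication: the $j$-th column of a matrix product is the left factor times the $j$-th column of the right factor, so $\mu(\B\A)=\B\,\mu(\A)=\B\A_1$. Combined with $\B\A=\W_2\A^\plus\A=\W_2$, this gives $\B\A_1=\mu(\W_2)=\W_1$, so the system $\X\A_1=\W_1$ is satisfiable; Thm.~\ref{thm:mpinverse}(3) then yields both $\W_1\A_1^\plus\A_1=\W_1$ (so $\A_1$ is valid for $\W_1$) and $||\W_1\A_1^\plus||_F\le||\B||_F=||\W_2\A^\plus||_F$. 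Plugging the sensitivity and Frobenius bounds into Prop.~\ref{prop:totalerror} gives $\error{\A_1}{\W_1}\le\error{\A}{\W_2}=\minerror(\W_2)$, hence $\minerror(\W_1)\le\minerror(\W_2)$.

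I expect the step needing most care to be the representability condition $\W_1\A_1^\plus\A_1=\W_1$ for the projected strategy in (ii): it would be tempting but awkward to manipulate $\mu(\A)^\plus$ directly, since the pseudoinverse does not commute with column deletion. The clean route is the one above---never form $\A_1^\plus$ explicitly, but exhibit the solution $\B=\W_2\A^\plus$ of $\X\A_1=\W_1$ and read off both validity and the norm bound from Thm.~\ref{thm:mpinverse}(3). A minor bookkeeping point: the containment and column-projection definitions fix the number of columns (equal in (i), a subset in (ii)), which I would reconcile with the slightly more general statement by adding or deleting zero columns via Prop.~\ref{prop:equiverr}(v); and if the minimum in Def.~\ref{def:minerr} is not attained, running the same argument on a near-optimal $\A$ and passing to a limit closes the gap.
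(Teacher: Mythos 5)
Your proposal is correct and follows essentially the same route as the paper's own proof: for (i) it passes to an equivalent workload stacking $\W_1$ on top, reuses the same strategy, and drops the nonnegative error of the extra rows; for (ii) it projects the strategy's columns, exhibits $\W_2\A^\plus$ as a solution of $\X\,\mu(\A)=\W_1$, and invokes Thm.~\ref{thm:mpinverse}(3) together with $\sens{\mu(\A)}\le\sens{\A}$. Your version is slightly more explicit than the paper's about why $\mu(\A)$ is a valid strategy for $\W_1$ (the paper leaves this implicit in the same appeal to Thm.~\ref{thm:mpinverse}(3)), but the argument is the same.
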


App. \ref{app:proof:contain} contains the proof of Prop. \ref{prop:equiverr} and \ref{prop:errorinequal}.

\section{The Singular Value Bound} \label{sec:svdbound}

In this section we state and prove our main result: a lower bound on $\minerror(\W)$, the optimal error of a workload $\W$ under the extended $(\epsilon, \delta)$-matrix mechanism.  The bound shows that the hardness of a workload is a function of its eigenvalues. We describe the measure and its properties in Section~\ref{sec:svdb:svdb} and prove that it is a lower bound in Section~\ref{sec:svdb:proof}.  In Section~\ref{sec:svdb:l1} we briefly discuss the challenge of adapting this bound to $\epsilon$-differential privacy. 


\subsection{The Singular Value Bound}\label{sec:svdb:svdb}

We first present the simplest form of our bound, which is based on computing the square of the sum of eigenvalues of the workload matrix:
\begin{definition}[\sc Singular Value Bound] \label{def:svdb}
Given an $m\times n$ workload $\W$, its singular value bound, denoted $\svdb(\W)$, is:
\[\svdb(\W) = \frac{1}{n}(\lambda_1+\ldots+\lambda_n)^2,\]
where $\lambda_1, \ldots, \lambda_n$ are the singular values of $\W$.
\end{definition}

The following theorem guarantees that the singular value bound is a valid lower bound to the minimal error of a workload. The proof is presented in detail in Sec.~\ref{sec:svdb:proof}.

\begin{theorem}\label{thm:singularvaluebound}
Given an $m\times n$ workload $\W$, 
\[\minerror(\W)\geq P(\epsilon, \delta)\svdb(\W),\]
where $P(\epsilon, \delta)=\frac{2\log(2/\delta)}{\epsilon^2}$.
\end{theorem}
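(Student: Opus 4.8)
The plan is to reduce Theorem~\ref{thm:singularvaluebound} to two elementary inequalities once we invoke the closed form for total error. By Proposition~\ref{prop:totalerror}, $\minerror(\W) = P(\epsilon,\delta)\min_{\A:\,\W\A^\plus\A=\W}\sens{\A}^2\|\W\A^\plus\|_F^2$, so it suffices to show that every strategy $\A$ with $\W\A^\plus\A=\W$ satisfies
\[
\sens{\A}^2\,\|\W\A^\plus\|_F^2 \;\geq\; \tfrac{1}{n}(\lambda_1+\cdots+\lambda_n)^2 .
\]
First I would bound the sensitivity below by the average column norm: $\A$ is $p\times n$, so $\|\A\|_F^2$ is the sum of the squared $L_2$ norms of its $n$ columns, whence $\sens{\A}^2=\max_i\|A_i\|_2^2\geq\frac1n\|\A\|_F^2$. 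This reduces the goal to proving $\|\A\|_F^2\,\|\W\A^\plus\|_F^2\geq(\lambda_1+\cdots+\lambda_n)^2$.

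The key step is a factorization bound stating that, for any product $\W=\B\A$, the Frobenius norms of the factors dominate the nuclear norm $\lambda_1+\cdots+\lambda_n$ of $\W$. The constraint $\W\A^\plus\A=\W$ exhibits exactly such a factorization with $\B=\W\A^\plus$. To prove the bound, write the SVD $\W=\Q_\W\LambdaB_\W\P_\W^T$, let $u_i$ and $v_i$ be the $i$th columns of $\Q_\W$ and $\P_\W$, and note $\lambda_i=u_i^T\W v_i=(\B^T u_i)^T(\A v_i)$. Cauchy--Schwarz gives $\lambda_i\leq\|\B^T u_i\|_2\|\A v_i\|_2$, and a second application of Cauchy--Schwarz over $i$, together with $\sum_i\|\B^T u_i\|_2^2\leq\|\B\|_F^2$ and $\sum_i\|\A v_i\|_2^2\leq\|\A\|_F^2$ (both because $\{u_i\}$ and $\{v_i\}$ are orthonormal), yields $\lambda_1+\cdots+\lambda_n\leq\|\B\|_F\|\A\|_F$. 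Squaring and chaining this with the sensitivity inequality from the previous paragraph completes the proof.

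I expect the nuclear-norm factorization inequality to be the only nontrivial ingredient; the sensitivity step and the final chaining are bookkeeping. Two points warrant care. First, the inequality must be applied to the specific factor $\B=\W\A^\plus$, which is fine since the Cauchy--Schwarz argument holds for \emph{every} factorization of $\W$. Second, nothing should break when $\A$ is rank-deficient: the pseudoinverse identities of Theorem~\ref{thm:mpinverse} guarantee that $\W=(\W\A^\plus)\A$ is a genuine factorization irrespective of $\rank(\A)$, so the extended mechanism's relaxed constraint causes no difficulty. It is also worth recording, for the tightness analysis of Section~\ref{sec:svdb:theory}, that both inequalities become equalities precisely when $\A$ has equal-norm columns and the factorization $\W=\B\A$ is suitably balanced.
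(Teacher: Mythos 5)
Your proof is correct, and it takes a genuinely different route from the paper's. The paper first invokes its Lemma~\ref{lem:columnuniform} to replace the minimization over all strategies by a minimization over strategies with equal column norms (so that $\sens{\A}^2=\frac1n\|\A\|_F^2$ holds with equality), then expands the SVD of $\A$ explicitly and chains a Cauchy--Schwarz step with its Lemma~\ref{lem:diagmin} (the trace inequality $\tr(\D)\le\sum_i\|\D\p_i\|_2$) to reach $\frac1n(\sum_i\lambda_i)^2$. You instead observe that the pointwise bound $\sens{\A}^2\ge\frac1n\|\A\|_F^2$ (max of the squared column norms dominates their mean) holds for \emph{every} strategy, which lets you dispense with Lemma~\ref{lem:columnuniform} entirely for the lower-bound direction, and you replace the SVD-of-$\A$ manipulation and Lemma~\ref{lem:diagmin} by the standard nuclear-norm factorization inequality $\lambda_1+\cdots+\lambda_n\le\|\B\|_F\|\A\|_F$ for any $\W=\B\A$, proved by two applications of Cauchy--Schwarz applied to $\lambda_i=(\B^Tu_i)^T(\A v_i)$. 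Both arguments are sound; yours is more self-contained and modular (the factorization inequality is a well-known fact about the nuclear norm, and the reduction applies uniformly to all strategies, including rank-deficient ones). What the paper's more laborious route buys is an explicit bookkeeping of when each inequality is tight, which it reuses verbatim to derive the tightness characterization of Theorem~\ref{thm:svdtightness} and the candidate optimal strategy $\A\propto\Q\sqrt{\LambdaB_\W}\P_\W$; your closing remark about equality conditions points in the same direction but would need to be sharpened (by working out the equality case of the nuclear-norm bound, which forces $\A\propto\Q\sqrt{\LambdaB_\W}\P_\W$ up to orthogonal factors) before it could support that later analysis. One cosmetic point: when $m<n$ only $\min(m,n)$ of the $\lambda_i$ are genuine singular values with associated left singular vectors $u_i$, but the remaining ones are zero and contribute nothing to the sum, so this does not affect the argument.
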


In the rest of paper, we refer to $\svdb(\W)$ as the ``SVD bound''. For any workload $\W$, the SVD bound is determined by $\W^T\W$ and can be computed directly from it (which can be more efficient):  

\begin{proposition}\label{prop:svdbequiv}
Given $n\times n$ matrix $\W^T\W$. 
\begin{equation*} 
\svdb(\W)=\frac{1}{n}(\sum_{i=1}^{n}\sqrt d_i)^2.
\end{equation*}
where $d_1, \ldots, d_n$ are the eigenvalues of $\W^T\W$.
\end{proposition}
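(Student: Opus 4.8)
The plan is to simply unwind the definition of $\svdb(\W)$ using the relationship between the singular values of $\W$ and the eigenvalues of $\W^T\W$, which is already recorded in the linear-algebra preliminaries of Section~\ref{sec:def}. Concretely, I would start from the singular value decomposition $\W = \Q_\W \LambdaB_\W \P_\W^T$ and compute
\[
\W^T\W \;=\; \P_\W \LambdaB_\W^T \Q_\W^T \Q_\W \LambdaB_\W \P_\W^T \;=\; \P_\W \big(\LambdaB_\W^T \LambdaB_\W\big) \P_\W^T,
\]
using that $\Q_\W$ is orthogonal. Since $\P_\W$ is orthogonal and $\D_\W = \LambdaB_\W^T\LambdaB_\W$ is the $n \times n$ diagonal matrix whose $i$-th entry is $\lambda_i^2$ (with $\lambda_1,\dots,\lambda_n$ the singular values of $\W$), this display is an eigendecomposition of $\W^T\W$. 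Hence the eigenvalues $d_1,\dots,d_n$ of $\W^T\W$ coincide, as a multiset, with $\lambda_1^2,\dots,\lambda_n^2$.

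The second and final step is to take square roots. Because singular values are non-negative by definition, $\sqrt{d_i}$ ranges over exactly the values $\lambda_i$, so $\sum_{i=1}^n \sqrt{d_i} = \sum_{i=1}^n \lambda_i$, and substituting into Definition~\ref{def:svdb} gives
\[
\svdb(\W) \;=\; \frac{1}{n}\big(\lambda_1+\cdots+\lambda_n\big)^2 \;=\; \frac{1}{n}\Big(\sum_{i=1}^n \sqrt{d_i}\Big)^2,
\]
as claimed.

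I do not expect any genuine obstacle here; the one point worth a sentence of care is matching the \emph{counts} of values: $\W^T\W$ is $n \times n$ and symmetric positive semidefinite, so it has exactly $n$ eigenvalues with multiplicity, and these are precisely the squared singular values of $\W$ — in particular when $\rank(\W) < n$ the ``extra'' singular values are zero and correspond to zero eigenvalues of $\W^T\W$, so both sums in the statement are over the same multiset of $n$ nonnegative reals. The practical payoff, which I would mention briefly, is that computing the eigenvalues of the symmetric matrix $\W^T\W$ (which is $n \times n$ regardless of the number of workload rows $m$) is often cheaper than a full SVD of $\W$.
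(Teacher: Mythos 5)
Your proof is correct and is exactly the argument the paper intends: Proposition~\ref{prop:svdbequiv} is stated without a separate proof precisely because it follows immediately from the relation $\W^T\W = \P_\W(\LambdaB_\W^T\LambdaB_\W)\P_\W^T$ already recorded in the linear-algebra preliminaries, which identifies the eigenvalues of $\W^T\W$ with the squared singular values of $\W$. Your remark about matching multiplicities when $\rank(\W)<n$ is a sensible bit of care but does not change the route.
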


The SVD bound satisfies equivalence properties analogous to (i), (ii), (iii), and (iv) in Prop.~\ref{prop:equiverr} and inequality (i) in Prop.~\ref{prop:errorinequal}. However, it does not satisfy properties related to column projection, as shown in the following counter-example.  

\begin{example}\label{exp:looseness}
Consider a $2\times n$ workload $\W$ consisting of queries $[1,0,\ldots, 0]$ and $[t,t,\ldots, t]$. Let $\mu$ be the column projection w.r.t. the first cell condition of $\W$. When $n>8$ and $t<1/8$, $\svdb(\W) < \svdb(\mu(\W))$.
\end{example}

According to Prop~\ref{prop:errorinequal}, column projections reduce the minimum error. Therefore, the SVD bound on any column projection of $\W$ also constitutes a lower bound for the minimum error of $\W$.  Because of this we extend the simple SVD bound in the following way.  Recall that $\Mu_n$ is the set of all column projections.

\begin{definition}\label{def:esvdb}
Given an $m\times n$ workload $\W$ and $U\subseteq\Mu_n$. The singular value bound of $\W$ w.r.t. $U$, denoted by $\svdb_{U} (\W)$ is defined as
\[\svdb_{U}(\W)=\max_{\mu\in U}\svdb(\mu(\W)).\]
In particular, if $U=\Mu_n$, we call this bound the {\em supreme singular value bound}, denoted $\ssvdb(\W)$.
\end{definition}

According to Prop.~\ref{prop:errorinequal} and Thm.~\ref{thm:singularvaluebound}, for any $U\subseteq\Mu_n$, $\svdb_U(\W)$ provides a lower bound on $\minerror(\W)$.
\begin{corollary}
Given an $m\times n$ workload $\W$, and for any $U\subseteq\Mu_n$
\begin{align*}
\minerror(\W)&\geq \max_{\mu \in U}\minerror(\mu(\W))\\
&\geq P(\epsilon, \delta)\svdb_U(\W),
\end{align*}
where $P(\epsilon, \delta)=\frac{2\log(2/\delta)}{\epsilon^2}$.
\end{corollary}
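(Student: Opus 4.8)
The plan is to derive both inequalities by chaining two results already established in the excerpt, applied termwise over the finite family $U\subseteq\Mu_n$. There is essentially no new argument required; the only care needed is to confirm that the hypotheses of those results apply to each column projection $\mu(\W)$, and that taking a maximum over a finite set preserves the inequalities.

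First I would handle the inequality $\minerror(\W)\geq \max_{\mu\in U}\minerror(\mu(\W))$. Fix any $\mu\in U$. By definition of column projection, $\mu(\W)$ is an $m\times|\mu|$ workload that is a column projection of $\W$, so Prop.~\ref{prop:errorinequal}(ii) gives $\minerror(\mu(\W))\leq\minerror(\W)$. Since this holds for every $\mu\in U$ and $U$ is a (finite) subset of $\Mu_n$, the maximum $\max_{\mu\in U}\minerror(\mu(\W))$ is attained and is at most $\minerror(\W)$, which is the first inequality.

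Next I would handle $\max_{\mu\in U}\minerror(\mu(\W))\geq P(\epsilon,\delta)\,\svdb_U(\W)$. Again fix $\mu\in U$; applying Thm.~\ref{thm:singularvaluebound} to the $m\times|\mu|$ workload $\mu(\W)$ yields $\minerror(\mu(\W))\geq P(\epsilon,\delta)\,\svdb(\mu(\W))$. Taking the maximum over $\mu\in U$ on both sides, and noting that the multiplicative constant $P(\epsilon,\delta)=\frac{2\log(2/\delta)}{\epsilon^2}$ is positive and does not depend on $\mu$, it factors out of the maximum:
\[
\max_{\mu\in U}\minerror(\mu(\W))\;\geq\;P(\epsilon,\delta)\max_{\mu\in U}\svdb(\mu(\W))\;=\;P(\epsilon,\delta)\,\svdb_U(\W),
\]
where the final equality is just Definition~\ref{def:esvdb}. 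Composing this with the first inequality gives $\minerror(\W)\geq P(\epsilon,\delta)\,\svdb_U(\W)$, completing the corollary.

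There is no real obstacle here beyond bookkeeping: the statement is a direct corollary of Prop.~\ref{prop:errorinequal}(ii) (monotonicity of minimum error under column projection) and Thm.~\ref{thm:singularvaluebound} (the SVD bound for a single workload). The one point worth stating explicitly in the write-up is that the theorem and proposition are applied to $\mu(\W)$ rather than to $\W$ itself, so all quantities (singular values, Frobenius norm, the normalizing factor $1/|\mu|$ in $\svdb$) are computed with respect to the projected matrix; and that since $U$ is finite the maxima are well defined and the inequalities pass through unchanged.
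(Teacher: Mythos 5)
Your proposal is correct and is exactly the argument the paper intends: the corollary is stated as an immediate consequence of Prop.~\ref{prop:errorinequal}(ii) applied to each column projection $\mu(\W)$, followed by Thm.~\ref{thm:singularvaluebound} applied to $\mu(\W)$ and a maximum over the finite set $U$. No differences worth noting.
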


The supreme SVD bound satisfies all of the error equivalence and containment properties, analogous to those of Prop.~\ref{prop:equiverr} and Prop.~\ref{prop:errorinequal}, as stated below, both of which are proved in App. \ref{app:proof:svdb}.

\begin{restatable}{theorem}{thmsvdbequiv}\label{thm:svdbequiv}
Given an $m_1 \times n_1$ workload $\W_1$ and an $m_2 \times n_2$ workload $\W_2$, the following conditions imply that $\ssvdb(\W_1)=\ssvdb(\W_2)$:
\begin{enumerate} \itemsep 0in
\item[(i)] $\W_1 \equiv \W_2$
\item[(ii)] $\W_1=\Q\W_2$ for some orthogonal matrix $\Q$.
\item[(iii)] $\W_2$ results from permuting the rows of $\W_1$.
\item[(iv)] $\W_2$ results from permuting the columns of $\W_1$.
\item[(v)] $\W_2$ results from column projection of $\W_1$ on all of its nonzero columns.
\end{enumerate}
\end{restatable}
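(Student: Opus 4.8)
The plan is to reduce every claim to one structural fact: by Proposition~\ref{prop:svdbequiv}, $\svdb(\W)$ depends on $\W$ only through its Gram matrix, since $\svdb(\W) = \frac1n(\sum_{i=1}^n\sqrt{d_i})^2$ where $d_1,\dots,d_n$ are the eigenvalues of $\W^T\W$. A companion observation I would use throughout is that column projection commutes with the Gram map: for any index set $\mu$, the matrix $(\mu(\W))^T\mu(\W)$ is exactly the principal submatrix of $\W^T\W$ on the rows and columns indexed by $\mu$. Granting these two facts, properties (i)--(iv) become bookkeeping, and only (v) requires a genuine argument.

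For (i), I would argue that $\W_1\equiv\W_2$ gives $\W_1^T\W_1=\W_2^T\W_2$, hence equal principal submatrices on every $\mu\in\Mu_n$, hence $\svdb(\mu(\W_1))=\svdb(\mu(\W_2))$ for all $\mu$; taking the maximum over $\mu$ gives $\ssvdb(\W_1)=\ssvdb(\W_2)$. Property (ii) reduces to (i), because $\W_1=\Q\W_2$ with $\Q$ orthogonal implies $\W_1^T\W_1=\W_2^T\Q^T\Q\W_2=\W_2^T\W_2$; and (iii) is the special case where $\Q$ is a permutation matrix. For (iv), a column permutation $\W_2=\W_1 P$ induces a permutation $\sigma$ of the column indices determined by $P$, under which $\mu(\W_2)$ and $(\sigma(\mu))(\W_1)$ have the same Gram matrix up to a simultaneous reindexing of rows and columns — which changes neither the eigenvalues nor the column count — so $\svdb(\mu(\W_2))=\svdb((\sigma(\mu))(\W_1))$; since $\mu\mapsto\sigma(\mu)$ is a bijection of $\Mu_n$, the two maxima are over matched collections of equal values and coincide.

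The crux is (v). Let $\nu$ index the nonzero columns of $\W_1$, so $\W_2=\nu(\W_1)$. The inequality $\ssvdb(\W_2)\le\ssvdb(\W_1)$ is immediate, since every column projection of $\W_2$ equals $\mu(\W_1)$ for some $\mu\subseteq\nu$ and is therefore a column projection of $\W_1$. For the reverse inequality I would fix any $\mu\in\Mu_n$, write $\mu_{nz}=\mu\cap\nu$ and $\mu_z=\mu\setminus\nu$, and observe that because the columns of $\W_1$ indexed in $\mu_z$ vanish, $(\mu(\W_1))^T\mu(\W_1)$ is block diagonal with one block $(\mu_{nz}(\W_1))^T\mu_{nz}(\W_1)$ and $|\mu_z|$ zero rows and columns; thus its eigenvalues are those of $(\mu_{nz}(\W_1))^T\mu_{nz}(\W_1)$ together with $|\mu_z|$ zeros. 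Writing $S$ for the sum of the square roots of the eigenvalues of that nonzero block, this yields $\svdb(\mu(\W_1)) = S^2/|\mu| \le S^2/|\mu_{nz}| = \svdb(\mu_{nz}(\W_1))$ whenever $\mu_{nz}\ne\emptyset$ (and $\svdb(\mu(\W_1))=0$ otherwise, so the bound is trivial). Since $\mu_{nz}\subseteq\nu$, the matrix $\mu_{nz}(\W_1)$ is a column projection of $\W_2$, giving $\svdb(\mu(\W_1))\le\ssvdb(\W_2)$; maximizing over $\mu$ completes the proof.

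I expect this last monotonicity step to be the main obstacle, and it is precisely the point where the supreme SVD bound succeeds while the plain SVD bound fails (cf. Example~\ref{exp:looseness}): the freedom to maximize over all column projections lets us discard a projection's zero columns, trading the factor $1/|\mu|$ for the strictly larger $1/|\mu_{nz}|$, so the dead weight that sabotaged $\svdb$ in Example~\ref{exp:looseness} can never beat the projection onto nonzero columns.
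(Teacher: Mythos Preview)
Your proposal is correct and follows essentially the same approach as the paper: cases (i)--(iii) via the Gram matrix, (iv) via permutation-invariance of singular values (you make the bijection on index sets explicit, which the paper leaves implicit), and (v) via the observation that deleting zero columns preserves the singular-value sum while shrinking the denominator. Your treatment of (v) is in fact more careful than the paper's, which asserts the key monotonicity (``projecting out all zero columns will not decrease $\ssvdb$'') without spelling out the per-$\mu$ comparison $\svdb(\mu(\W_1))\le\svdb(\mu_{nz}(\W_1))$ that you provide.
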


\begin{restatable}{theorem}{thmsvdbinequal}\label{thm:svdbinequal}
Given an $m_1 \times n_1$ workload $\W_1$ and an $m_2 \times n_2$ workload $\W_2$, the following conditions imply that $\ssvdb(\W_1) \leq\ssvdb(\W_2)$:
\begin{enumerate} \itemsep 0in
\item[(i)] $\W_1 \subseteq \W_2$
\item[(ii)] $\W_1$ is a column projection of $\W_2$.
\end{enumerate}
\end{restatable}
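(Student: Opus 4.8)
The plan is to reduce the theorem to a single monotonicity lemma for the simple bound $\svdb$ and then lift it to $\ssvdb$ using the invariances already established in Theorem~\ref{thm:svdbequiv}. The lemma I would prove first is: for any matrices $\A$ and $\B$ with the same number of columns $n$, $\svdb(\A)\le\svdb\!\left(\begin{bmatrix}\A \\ \B\end{bmatrix}\right)$; that is, appending rows never decreases the SVD bound. To prove it, observe that the Gram matrix of the stacked matrix is $\A^T\A+\B^T\B$, so $\A^T\A\preceq\A^T\A+\B^T\B$. By monotonicity of the eigenvalues of symmetric matrices with respect to the positive semidefinite order (a standard consequence of the Courant--Fischer min--max characterization, since $\x^T(\A^T\A)\x\le\x^T(\A^T\A+\B^T\B)\x$ for every $\x$), the sorted eigenvalues satisfy $d_i(\A^T\A)\le d_i(\A^T\A+\B^T\B)$ for every $i$. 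Taking square roots (monotone on $[0,\infty)$), summing over $i$, squaring, and dividing by $n$ then yields $\svdb(\A)\le\svdb\!\left(\begin{bmatrix}\A \\ \B\end{bmatrix}\right)$ by Proposition~\ref{prop:svdbequiv}.

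Condition (ii) would then follow directly, without even needing the lemma. If $\W_1=\nu(\W_2)$ for a subset $\nu$ of the cell conditions of $\W_2$, then every $\mu\in\Mu_{n_1}$ is a subset of $\nu$ and $\mu(\W_1)=\mu(\W_2)$, so
\[\ssvdb(\W_1)=\max_{\mu\subseteq\nu}\svdb(\mu(\W_2))\le\max_{\mu\in\Mu_{n_2}}\svdb(\mu(\W_2))=\ssvdb(\W_2).\]
(One cannot simply quote Proposition~\ref{prop:errorinequal} here: the SVD bound is only a \emph{lower} bound on $\minerror$, so an inequality between $\minerror$ values need not carry over to $\ssvdb$.)

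For condition (i), I would combine the lemma with parts (i) and (iii) of Theorem~\ref{thm:svdbequiv}. Since $\W_1\subseteq\W_2$, there is a $\W_2'\equiv\W_2$ whose set of rows contains the rows of $\W_1$; by the equivalence- and row-permutation-invariance of $\ssvdb$ we have $\ssvdb(\W_2)=\ssvdb(\W_2')$, and we may take $\W_2'=\begin{bmatrix}\W_1 \\ \W_3\end{bmatrix}$, where $\W_3$ collects the remaining rows and both blocks share the $n$ cell conditions of $\W_1$. Let $\mu^{*}\in\Mu_n$ attain $\ssvdb(\W_1)=\svdb(\mu^{*}(\W_1))$ (the maximum over the finite family $\Mu_n$ is attained). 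Then $\mu^{*}(\W_2')=\begin{bmatrix}\mu^{*}(\W_1) \\ \mu^{*}(\W_3)\end{bmatrix}$, so the lemma applied to $\A=\mu^{*}(\W_1)$ and $\B=\mu^{*}(\W_3)$ gives $\svdb(\mu^{*}(\W_1))\le\svdb(\mu^{*}(\W_2'))$, whence $\ssvdb(\W_1)=\svdb(\mu^{*}(\W_1))\le\svdb(\mu^{*}(\W_2'))\le\ssvdb(\W_2')=\ssvdb(\W_2)$.

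I expect the lemma to be the only real obstacle, and the crux is choosing the right route to it: the singular values of $\A$ and of the stacked matrix do not interlace in a way that controls their sum directly, so the proof must pass through the Loewner ordering $\A^T\A\preceq\A^T\A+\B^T\B$ and eigenvalue monotonicity rather than comparing singular values head-on. Everything else — tracking which column projections of $\W_1$ reappear as column projections of $\W_2$, and invoking the equivalence and permutation invariances of $\ssvdb$ from Theorem~\ref{thm:svdbequiv} — is routine bookkeeping.
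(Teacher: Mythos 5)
Your proof is correct and follows the same overall architecture as the paper's: reduce both parts to a monotonicity statement for the simple bound $\svdb$ under row inclusion, then lift to $\ssvdb$ via the equivalence/permutation invariances (part (ii) being immediate from the definition of $\ssvdb$ as a max over column projections, exactly as the paper treats it). The difference is in how the key lemma is proved. The paper also passes through the Loewner ordering $\W_1^T\W_1 \preceq \W_2^T\W_2$ (via $\W_2^T\W_2-\W_1^T\W_1=\W_3^T\W_3\succeq 0$), but instead of invoking eigenvalue monotonicity it writes $\LambdaB_2^2-\P_2\P_1^T\LambdaB_1^2\P_1\P_2^T\succeq 0$, reads off nonnegativity of the diagonal entries to get $\lambdaB_i\geq\Ltwo{\LambdaB_1\p_i}$, and then reuses its Lemma~\ref{lem:diagmin} to conclude $\tr(\LambdaB_2)\geq\tr(\LambdaB_1)$; this only controls the \emph{sum} of singular values. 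Your route via the Courant--Fischer/Weyl monotonicity principle gives the stronger termwise domination $d_i(\A^T\A)\leq d_i(\A^T\A+\B^T\B)$ of the sorted eigenvalues, from which the sum inequality follows trivially, and it avoids any dependence on Lemma~\ref{lem:diagmin}. Both are standard and correct; yours is arguably cleaner and more self-contained, while the paper's version recycles machinery already developed for the proof of Theorem~\ref{thm:singularvaluebound}. Your parenthetical warning that one cannot deduce the $\ssvdb$ inequality from Proposition~\ref{prop:errorinequal} (since $\svdb$ is only a lower bound on $\minerror$) is a correct and worthwhile observation that the paper leaves implicit.
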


 
While Theorems \ref{thm:svdbequiv} and \ref{thm:svdbinequal} show that $\ssvdb(\W)$ matches all the properties of $\minerror(\W)$, we often wish to avoid considering all possible column projections as required in the computation of $\ssvdb(\W)$.  In many cases,  using $\svdb(\W)$ as our lower bound provides good results.  In other cases, we can choose an appropriate set of column projections to get a good approximation to the supreme SVD bound.  We provide empirical evidence for this in the following example, along with an application of our bound to range and predicate workloads which have been studied in prior work.  The bound allows us to evaluate, for the first time, how well existing solutions approximate the minimum achievable error under $(\epsilon,\delta)$-differential privacy.

\begin{example} In Table~\ref{tab:errcmp} we consider three workloads, each consisting of all multi-dimensional range queries for a different dimension set, along with a workload of all predicate queries.  We report $\svdb(\W)$ and its ratio with $\svdb_U(\W)$ where $U$ contains projections onto all possible ranges over the domain, showing that they are virtually indistinguishable.   
\begin{table*}[t]
\centering
\small
\begin{tabular}{c|c|c|cccc}
\multirow{2}{*}{Example Workload, $\W$} & \multirow{2}{*}{$\svdb(\W)$} & \multirow{2}{*}{$\svdb_U(\W)$} & \multicolumn{4}{c}{Error, as ratio to $P(\epsilon, \delta)\svdb(\W)$}\\ 
\cline{4-7} & &  & Identity & Hierarchical & Wavelet & Eigen Design\\
\hline $\allrange(2048)$ & $3.034\times 10^7$ & $1.001$ &  $47.25$ & $1.776$ &$1.545$ & $1.028$\\
$\allrange(64, 32)$ & $2.261\times 10^7$ & $1.000$ & $12.11$ & $2.996$ & $1.899$ & $1.107$\\
$\allrange(2,2,2,2,2,2,2,2,2,2)$ & $5.242\times 10^5$ & $1.000$ & $2.000$ & $2.000$ & $2.000$ & $1.000$ \\
$\allpred(1024)$ & $4.885\times 10^{156}$ & $1.000$ & $1.884$  & $3.464$ & $6.292$  & $1.000$
\end{tabular}
\vspace*{-1ex}
\caption{Four example workloads, their singular value bounds, and their error rates under common strategies and strategies proposed in prior work.}
\label{tab:errcmp}
\vspace*{-2ex}
\end{table*}

We also compute the actual error introduced by several well-known strategies: the identity strategy, the hierarchical strategy~\cite{Hay:2010Boosting-the-Accuracy}, and the wavelet strategy~\cite{xiao2010differential}, as well as a strategy generated by the Eigen-design mechanism~\cite{chaopvldb12}.  These results reveal the quality of these approaches by their ratio to $\svdb(\W)$. For example, from the table we can conclude that the Eigen-design mechanism and wavelet strategies have error at most $1.5$ to $3$ times the optimal for range workloads, but perform worse on the predicate queries.  The identity strategy is far from optimal on low dimensional range queries, but better on high dimensional range queries and predicate queries.  
\end{example}

\subsection{Proof of the SVD bound}\label{sec:svdb:proof}

We now describe the proof of Theorem~\ref{thm:singularvaluebound}.  The key to the proof is an important property of the optimal strategy for the $(\epsilon,\delta)$ matrix mechanism.  As shown in Lemma \ref{lem:columnuniform}, among the  optimal strategies for a workload $\W$, there is always a strategy $\A$ that has the same sensitivity for every cell condition (i.e. in every column). We use $\mathcal{A}_\W$ to denote the set that contains all strategies that satisfy $\W\A^\plus\A=\W$ and have the same sensitivity for every cell condition.

Recall that the sensitivity of strategy $\A$ (Def. \ref{def:l2sens}) is the maximum $L_2$ column norm of $\A$.  The square of the sensitivity is also equal to the maximum diagonal entry of $\A^T\A$.  By using Lemma \ref{lem:columnuniform}, the sensitivity of $\A$ can instead be computed in terms of the trace of the matrix $\A^T\A$ and minimizing the error of $\W$ with this alternative expression of the sensitivity leads to the SVD bounds.  Ultimately, to achieve the SVD bounds, a strategy $\A$ must simultaneously (i) minimize the error of $\W$ with the sensitivity computed in terms of the $\tr(\A^T\A)$, and (ii) have $\A\in\mathcal{A}_\W$. Such a strategy may not exist for every possible $\W$ and therefore the SVD bounds only serve as lower bounds to the minimal error of $\W$.

\begin{restatable}{mylemma}{lemcolumnuniform}\label{lem:columnuniform}
Given a workload $\W$, there exists a strategy $\A\in\mathcal{A}_\W$ such that $\error{\A}{\W}=\minerror(\W)$.
\end{restatable}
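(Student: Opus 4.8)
The plan is to take an arbitrary optimal strategy $\A$ with $\W\A^\plus\A=\W$ and $\error\A\W=\minerror(\W)$, and transform it into a strategy $\A'$ with the same total error but uniform column sensitivity. The natural transformation is to pre-multiply $\A$ by an orthogonal matrix and post-multiply by a permutation-like operation on columns --- but the key realization is that scaling columns of $\A$ changes the expressibility condition, so the right move is subtler. Instead, I would exploit the fact (Prop.~\ref{prop:totalerror}) that the error depends on $\A$ only through $\sens\A^2\,\|\W\A^\plus\|_F^2$. Consider the symmetric matrix $\sqrt{\A^T\A}$, which has the same row space as $\A$ and satisfies $\|\W(\sqrt{\A^T\A})^\plus\|_F = \|\W\A^\plus\|_F$ and $\sens{\sqrt{\A^T\A}} \le \sens\A$ (since its diagonal entries are averages of squared singular directions, bounded by the max column norm of $\A$); actually more carefully, one checks $\W$ is still expressible and the Frobenius term is unchanged because $\A^\plus$ and $(\sqrt{\A^T\A})^\plus$ induce the same projection, so WLOG $\A$ is symmetric positive semidefinite.

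Next, with $\A$ symmetric PSD, I would average over the symmetry group of the workload. The obstacle is that $\W$ has no useful symmetry in general, so this approach must be replaced. The correct idea: let $\A$ be optimal and let $c = \sens\A$, so every diagonal entry of $\A^T\A$ is at most $c^2$ and at least one equals $c^2$. Form $\A_\theta$ by conjugating with a rotation $\Q_\theta$ in a coordinate plane $(i,j)$ where the diagonal entries of $\A^T\A$ differ; that is, replace $\A^T\A$ by $\Q_\theta^T(\A^T\A)\Q_\theta$. Since $\Q_\theta$ is orthogonal, $\Q_\theta^T(\A^T\A)\Q_\theta$ remains PSD with the same eigenvalues, hence the matrix $\B_\theta = \sqrt{\Q_\theta^T(\A^T\A)\Q_\theta}$ is a valid strategy with $\|\W\B_\theta^\plus\|_F = \|\W\A^\plus\|_F$ (same eigenvalues, and expressibility is preserved since rotation does not change the column/row space structure relative to $\W$ --- here one must verify $\W\B_\theta^\plus\B_\theta = \W$, which holds because $\B_\theta$ and $\A$ have the same nullspace up to the rotation, and $\W$ lies in the appropriate subspace; this is the point that needs care). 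The trace $\tr(\A^T\A)$ is rotation-invariant, so as $\theta$ varies the diagonal entries of $\Q_\theta^T(\A^T\A)\Q_\theta$ in positions $i,j$ trade off continuously while their sum stays fixed.

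Then I would run a standard ``equalizing'' argument: starting from the diagonal vector $(d_1,\dots,d_n)$ of $\A^T\A$ with max value $c^2$, repeatedly apply such plane rotations to move two unequal diagonal entries toward their average. Each step does not increase the maximum diagonal entry (a rotation between the largest and a smaller entry strictly decreases the largest), does not change the trace, does not change the Frobenius error term, and preserves expressibility. By a compactness/limiting argument (the set of PSD matrices with fixed spectrum is compact, and the max-diagonal functional is continuous), a fixed point exists in which all diagonal entries equal $\tfrac1n\tr(\A^T\A)$; call the corresponding strategy $\A'\in\mathcal A_\W$. Since $\sens{\A'}^2 = \tfrac1n\tr((\A')^T\A') = \tfrac1n\tr(\A^T\A) \le c^2 = \sens\A^2$ and the Frobenius term is unchanged, $\error{\A'}\W \le \error\A\W = \minerror(\W)$, forcing equality. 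The main obstacle is verifying that the expressibility condition $\W\B^\plus\B=\W$ is preserved under these rotations --- equivalently that the rowspace of $\A^T\A$ is rotation-invariant in the relevant sense --- which is where I expect the real work to lie; a cleaner route may be to first reduce to the full-rank case on the subspace $\mathrm{rowspace}(\W)$ via Prop.~\ref{prop:equiverr}(v), so that all matrices involved are invertible and $\A^\plus$ becomes an ordinary inverse.
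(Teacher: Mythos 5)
Your plan hinges on the claim that replacing $\A^T\A$ by $\Q_\theta^T(\A^T\A)\Q_\theta$ leaves $||\W\A^\plus||_F$ unchanged ``because the eigenvalues are the same,'' and this is false. Since $\A^\plus(\A^\plus)^T=(\A^T\A)^\plus$, the error term is $||\W\A^\plus||_F^2=\tr\bigl(\W(\A^T\A)^\plus\W^T\bigr)$, which depends on the orientation of the eigenvectors of $\A^T\A$ relative to $\W$, not just on its spectrum; conjugation by a plane rotation in the data domain rotates those eigenvectors relative to $\W$ and changes the trace. Concretely, with $\W=[1,0]$ and $\A=diag(2,1)$ one has $\tr(\W(\A^T\A)^{-1}\W^T)=1/4$, while after a $45^\circ$ rotation the same quantity becomes $5/8$. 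The expressibility condition $\W\B_\theta^\plus\B_\theta=\W$ fails for the same reason (the row space of $\B_\theta$ is the rotated row space of $\A$), and your full-rank reduction does not repair either problem. There is also a structural warning sign: your equalization drives the sensitivity down to $\tr(\A^T\A)/n$, which is generally strictly below $\sens{\A}^2$; if the Frobenius term really were preserved you would have produced a strategy with error strictly below $\minerror(\W)$, a contradiction, so the premise cannot hold except when the optimal strategy already has uniform columns.

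The paper's proof goes the opposite direction and is much simpler: instead of redistributing the column norms down to their average, it pads them up to their maximum. Starting from an optimal $\A'$, set $d_i=\sens{\A'}^2-(\A'^T\A')_{ii}\ge 0$, let $\D=diag(\sqrt{d_1},\dots,\sqrt{d_n})$, and take $\A=\left[\begin{smallmatrix}\A'\\ \D\end{smallmatrix}\right]$. Then $\A^T\A=\A'^T\A'+\D^2$ has all diagonal entries equal, so $\A\in\mathcal{A}_\W$ and $\sens{\A}=\sens{\A'}$; expressibility is trivially preserved because the row space only grows; and since $\B=[\A'^\plus,\vect{0}]$ satisfies $\W\B\A=\W$, the minimal-norm property of the Moore--Penrose pseudoinverse (Thm.~\ref{thm:mpinverse}) gives $||\W\A^\plus||_F\le||\W\B||_F=||\W\A'^\plus||_F$, hence $\error{\A}{\W}\le\minerror(\W)$ and equality holds. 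Note this keeps $\sens{\A}$ fixed rather than lowering it, which is exactly what the subsequent proof of the SVD bound needs (there the identity $\sens{\A}^2=\frac1n||\A||_F^2$ for $\A\in\mathcal{A}_\W$ is the point of the lemma). If you want to salvage an equalization-style argument, the operation has to act on the rows of the strategy (adding or recombining measurement queries), never on the columns, because the columns are tied to the cells that $\W$ queries.
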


%
\begin{restatable}{mylemma}{lemdiagmin}\label{lem:diagmin}
Let $\D$ be a diagonal matrix with non-negative diagonal entries and $\P$ be an orthogonal matrix whose column equals to $\p_1,\p_2,\ldots,\p_n$. 
\[\tr(\D)\leq\sum_{i=1}^n||\D\p_i||_2.\]
\end{restatable}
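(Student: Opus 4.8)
The plan is to bound each summand $\Ltwo{\D\p_i}$ below by the quadratic form $\p_i^T\D\p_i$ using Cauchy--Schwarz with $\p_i$ itself as the test vector, and then to recover $\tr(\D)$ by summing over $i$ and invoking orthonormality of the \emph{rows} of $\P$.

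First I would set up notation: write $\D = diag(d_1,\dots,d_n)$ with every $d_j \geq 0$, and let $p_{ji}$ denote the $(j,i)$ entry of $\P$, so that $\p_i = (p_{1i},\dots,p_{ni})^T$ and $\D\p_i = (d_1 p_{1i},\dots,d_n p_{ni})^T$. Since the columns of an orthogonal matrix are unit vectors, $\Ltwo{\p_i}=1$, and Cauchy--Schwarz gives
\[ \Ltwo{\D\p_i} \;=\; \Ltwo{\D\p_i}\,\Ltwo{\p_i} \;\geq\; \bigl|\p_i^T \D \p_i\bigr| \;=\; \sum_{j=1}^n d_j\, p_{ji}^2, \]
where the last equality uses $d_j\geq 0$ (and $p_{ji}^2\geq 0$) to drop the absolute value without weakening the bound.

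Next I would sum this inequality over $i=1,\dots,n$ and exchange the order of summation:
\[ \sum_{i=1}^n \Ltwo{\D\p_i} \;\geq\; \sum_{i=1}^n \sum_{j=1}^n d_j\, p_{ji}^2 \;=\; \sum_{j=1}^n d_j \Bigl(\sum_{i=1}^n p_{ji}^2\Bigr). \]
The inner sum $\sum_{i=1}^n p_{ji}^2$ is the squared $L_2$ norm of the $j$-th row of $\P$; because $\P$ is orthogonal its rows are orthonormal, so this equals $1$. Hence the right-hand side is exactly $\sum_{j=1}^n d_j = \tr(\D)$, which is the claimed inequality.

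There is no substantial obstacle in this argument; the only point requiring thought is the choice of test vector. Using $\v=\p_i$ in the estimate $\Ltwo{\D\p_i}=\Ltwo{\D\p_i}\Ltwo{\v}\geq|\v^T\D\p_i|$ is exactly what makes the resulting quadratic forms $\p_i^T\D\p_i$ aggregate into $\tr(\D)$ via the row-orthonormality of $\P$, and the non-negativity of the diagonal of $\D$ is precisely what guarantees that this aggregation is an equality rather than merely an inequality. Any other unit test vector would leave cross terms that do not telescope.
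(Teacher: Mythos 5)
Your proof is correct and is essentially the paper's argument: the key step in both is the pointwise bound $\Ltwo{\D\p_i}\geq\sum_{j}d_jp_{ji}^2$ (the paper writes it as $\sqrt{\sum_j p_{ji}^2 d_j^2}\geq\sum_j p_{ji}^2 d_j$ using that the weights $p_{ji}^2$ sum to one, which is the same Cauchy--Schwarz estimate you obtain by testing against $\p_i$), followed by summing over $i$ and invoking row orthonormality. The only difference is cosmetic packaging of that one inequality.
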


The proofs of both lemmas are in App. \ref{app:proof:svdb}. 

Theorem~\ref{thm:singularvaluebound} can hence be proved using the lemmas above.
\begin{proof}
For a given workload $\W$, according to Lemma~\ref{lem:columnuniform}, it has an optimal strategy matrix $\A\in\mathcal{A}_\W$, whose sensitivity can then be computed as $\sens{\A}^2=\frac{1}{n}||\A||_F^2$.

Let $\W=\Q_\W\LambdaB_\W\P_\W$ and $\A=\Q_\A\LambdaB_\A\P_\A$ be the singular decomposition of $\W$ and $\A$, respectively. We have:
\begin{eqnarray}
&&\min_{\A:\,\W\A^\plus\A=\W}\sens{\A}^2||\W\A^\plus||_F^2\nonumber\\
&=&\min_{\A\in\mathcal{A}_\W}\frac{1}{n}||\A||_F^2||\W\A^\plus||_F^2\nonumber\\
&=&\frac{1}{n}\min_{(\LambdaB_\A\P_\A)\in\mathcal{A}_\W}||\LambdaB_\A||_F^2||\LambdaB_\W\P_\W\P_\A^T\LambdaB_\A^\plus||_F^2\nonumber\\
&\geq&\frac{1}{n}\min_{\begin{smallmatrix}\LambdaB_\A, \P_\A\\
\LambdaB_\W\LambdaB_\A^\plus\LambdaB_\A^=\LambdaB_\W\end{smallmatrix}}
||\LambdaB_\A||_F^2||\LambdaB_\W\P_\W\P_\A^T\LambdaB_\A^\plus||_F^2\label{eqn:totracecond}\\
&\geq&\frac{1}{n}\min_{\P_\A}(\sum_{i=1}^n ||\LambdaB_\W\p_i||_2)^2\label{eqn:cauchy}\\
&\geq&\frac{1}{n}(\sum_{i=1}^n\lambdaB_i)^2,\label{eqn:minap}
\end{eqnarray}
where $\p_i$ is the $i$-th column of matrix $\P_\W\P_\A^T$,  the inequality in (\ref{eqn:cauchy}) is based on the Cauchy-Schwarz inequality and the inequality in (\ref{eqn:minap}) comes from Lemma~\ref{lem:diagmin}.

The equal sign in (\ref{eqn:cauchy}) is satisfied if and only if $\LambdaB_\A\propto\sqrt{\LambdaB_\W}$. Therefore to achieve equality in (\ref{eqn:cauchy}) and (\ref{eqn:minap}) simultaneously, we need $\A\propto\Q\sqrt{\LambdaB_\W}\P_\W$ for any orthogonal matrix $\Q$. Moreover, (\ref{eqn:totracecond}) is true if and only if $\A\in\mathcal{A}_\W$, which may not be satisfied when $\A\propto\Q\sqrt{\LambdaB_\W}\P_\W$, therefore the SVD bound only gives an lower bound to the minimum total error.
\end{proof}

Intuitively, the SVD bound is based on the assumption that the error can be evenly distributed to all the cells, which may not be achievable in all the cases.  The supreme SVD bound considers only the case that the error can be evenly distributed to some of the cells and therefore may be tighter than the SVD bound.

\subsection{Bounding $\minerror(\W)$ Under the $\epsilon$-Matrix Mechanism}\label{sec:svdb:l1}

The SVD bound is defined for the $(\epsilon, \delta)$-matrix mechanism, so it is natural to consider extending these results to the $\epsilon$-matrix mechanism. Prop.~\ref{prop:totalerror}  can be adopted to the $\epsilon$-matrix mechanism, which uses Laplace noise, an alternative privacy parameter, $P(\epsilon)=1/\epsilon^2$, and measures the sensitivity of $\A$ as the largest $L_1$ norm of the columns of $\A$. For any vector, its $L_1$ norm is always greater than or equal to its $L_2$ norm. Given a workload $\W$ and a strategy matrix $\A$, $P(\epsilon)\sens{\A}^2||\W\A^\plus||_F^2$ provides a lower bound to $\error{\A}{\W}$ under the $\epsilon$-matrix mechanism. Therefore, error under the $\epsilon$-matrix mechanism is also bounded below by $\svdb(\W)$. 

When the number of queries in a workload is no more than the domain size, Bhaskara et al.~\cite{bdkt12stoc} presented the following lower bound of error for any data-independent $\epsilon$-differential privacy mechanism.
\begin{theorem}[\cite{bdkt12stoc}]
Given an $m\times n$ workload $\W$ with $m\leq n$, let convex body $\K=\W\B_1^n$, where $\B_1^m$ is the $m$-dimensional $L_1$ ball. Let $\P_1,\ldots,\P_t$ be projection operators to a collection of $t$ mutually orthogonal subspaces of $\mathbb{R}^m$ of dimension $m_1,\ldots, m_t$ respectively. Then the error of answering $\W$ under any data-independent $\epsilon$-differentially private mechanism must be at least
$$\Omega\left(\sum_i\frac{m_i^3}{\epsilon^2}\mbox{Vol}_{m_i}(\P_i\K)^{2/m_i} \right),$$
where $\mbox{Vol}_{m_i}(\P_i\K)$ is the volume of the convex body $\P_i\K$ in $m_i$ dimensional space.
\end{theorem}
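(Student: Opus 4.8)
The plan is to exploit the fact that, for data-independent mechanisms, the entire question reduces to a property of one noise distribution, which I would attack in three moves. \textbf{Normal form.} Any data-independent $\epsilon$-differentially private mechanism may be taken, without loss of generality, to have the form $\alg(\x)=\W\x+Z$ for a fixed noise vector $Z$ on $\mathbb{R}^m$ independent of $\x$: replacing $\alg$ by $\x\mapsto\alg(\x+w)-\W w$ with $w$ uniform on a large lattice box keeps the mechanism $\epsilon$-private (a mixture of private mechanisms is private) and keeps its error (which by hypothesis does not depend on the input), and it becomes translation-invariant in the limit. The error of such an $\alg$ is then $\E\|Z\|_2^2$, the total squared error over all $m$ queries. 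Since neighboring databases change one cell count by one, the true answer moves by $\pm\W e_j$ for some $j$, and the symmetric convex hull of $\{\pm\W e_j\}_j$ is exactly $\K=\W\B_1^n$; hence $\epsilon$-privacy forces the density $f$ of $Z$ to obey $f(z)\le e^{\epsilon}f(z\pm\W e_j)$, and, by group privacy (up to an additive $O(1)$ in the exponent that is immaterial at the scales we need), $f$ is $\epsilon$-log-Lipschitz with respect to the Minkowski gauge $\gamma_{\K}$ of $\K$, i.e.\ $f(z)\le e^{\epsilon\gamma_{\K}(z-z')}f(z')$.

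\textbf{Decomposition over the subspaces.} Linear post-processing preserves differential privacy, so for each $i$ the projection $\P_i Z$ is a noise vector on the $m_i$-dimensional subspace $V_i$ whose density there is $\epsilon$-log-Lipschitz with respect to the gauge (computed inside $V_i$) of the projected body $\P_i\K$. Mutual orthogonality of the $\P_i$ gives $\|z\|_2^2\ge\sum_i\|\P_i z\|_2^2$ pointwise, so
\[
\E\|Z\|_2^2\ \ge\ \sum_{i=1}^{t}\E\|\P_i Z\|_2^2\ \ge\ \sum_{i=1}^{t}\mathrm{Var}(\P_i Z),
\]
and the theorem follows from a single-body estimate applied with $d=m_i$ and $L=\P_i\K$: \emph{if $Y\in\mathbb{R}^d$ has a density that is $\epsilon$-log-Lipschitz with respect to the gauge of a symmetric convex body $L$, then $\mathrm{Var}(Y)=\Omega(d^3/\epsilon^2)\,\mathrm{Vol}_d(L)^{2/d}$} (when $\P_i\K$ is lower-dimensional, that term is $0$ and contributes nothing).

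\textbf{The single-body estimate.} I would prove it in two steps. First, bound the peak of $f$: integrating the pointwise inequality $f(y)\ge f(y^{\ast})e^{-2\epsilon r}$ over the dilate $y^{\ast}+rL$, whose $\gamma_L$-diameter is $2r$ because $L-L=2L$, against $\int f\le 1$ gives $f(y^{\ast})\le e^{2\epsilon r}/(r^{d}\mathrm{Vol}_d(L))$ for every $r>0$; taking $r=d/(2\epsilon)$ yields $\sup f\le(2e\epsilon/d)^{d}/\mathrm{Vol}_d(L)$. Second, convert this density ceiling into a variance floor by the bathtub principle: among all densities on $\mathbb{R}^d$ bounded by a constant $M$, the uniform density on a Euclidean ball of volume $1/M$ minimizes $\min_c\E\|Y-c\|_2^2$, and that minimum equals $\tfrac{d}{d+2}r^{2}$ where $\mathrm{Vol}_d(rB_2^d)=1/M$, which is $\Theta(d)\,M^{-2/d}$ because $\mathrm{Vol}_d(B_2^d)^{1/d}=\Theta(1/\sqrt{d})$. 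Combining the two steps gives $\mathrm{Var}(Y)\ge\Omega(d)\,(\sup f)^{-2/d}=\Omega(d^{3}/\epsilon^{2})\,\mathrm{Vol}_d(L)^{2/d}$, as required. I expect the main obstacles to lie in the bookkeeping of the normal-form reduction and, especially, in the group-privacy step that must upgrade discrete neighbor moves to the continuous gauge $\gamma_{\K}$ without losing more than constant factors (together with routine absolute-continuity considerations); the variance estimate itself is a self-contained classical computation once packaged as above, and this packaging is essentially the content of the argument imported from \cite{bdkt12stoc}.
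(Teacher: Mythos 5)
This theorem is imported from Bhaskara et al.\ \cite{bdkt12stoc}; the paper states it with a citation and gives no proof, so there is no in-paper argument to compare against. Your reconstruction follows the standard volume-based lower-bound template from that line of work, and its core is sound: the single-body estimate is correct as written (the peak bound $\sup f\le e^{\epsilon r}/(r^d\vol_d(L))$ at $r=\Theta(d/\epsilon)$, combined with the rearrangement fact that a density ceiling $M$ forces second moment $\Omega(d)M^{-2/d}$, does yield $\Omega(d^3/\epsilon^2)\vol_d(L)^{2/d}$), and the marginalization step checks out — if $f$ is $\epsilon$-log-Lipschitz for $\gamma_{\K}$, then for $y-y'=\P_i u$ with $u\in r\K$ the substitution $w'=w-(u-\P_i u)$ in the fiber integral shows the marginal on $V_i$ is $\epsilon$-log-Lipschitz for $\gamma_{\P_i\K}$, so the orthogonal decomposition $\E\|Z\|_2^2\ge\sum_i\E\|\P_iZ\|_2^2$ legitimately reduces everything to the single-body case.

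The two places you flag as "bookkeeping" are, however, exactly where your route diverges from the cited proof and where the remaining work lives. First, the normal-form step: "data-independent" in this paper means only that the \emph{error} does not depend on $\x$, not that the mechanism is oblivious additive noise; your averaging-over-a-lattice-box argument produces a mixture that is only approximately translation-invariant, and passing to an exact limit distribution $Z$ needs a genuine compactness argument. Second, group privacy gives $f(z)\le e^{k\epsilon}f(z-\W v)$ only for \emph{integral} $v$ with $\|v\|_1\le k$, whereas your peak-bound integration uses the continuous gauge of $\W\B_1^n$ over all of $y^*+rL$; naive rounding of $v$ costs an error of gauge up to $n$, which is not an $O(1)$ loss. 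Bhaskara et al.\ avoid both issues by running a packing argument directly over integral histograms (counting well-separated database-realizable points inside $r\K$) rather than reducing to a noise density, which is why their proof does not need either reduction. So your proposal is a correct and more modular account of \emph{why} the bound holds, but as stated it proves the theorem only for oblivious additive-noise mechanisms; closing the gap to all data-independent mechanisms requires either making the normal-form limit rigorous or switching to the packing formulation at the last step.
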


In particular, when $\P_i$ are the projections to the singular vectors of $\W$,  we can formulate the bound above using singular values of $\W$.
\begin{corollary}\label{cor:geobound}
Given an $m\times n$ workload $\W$ with $m\leq n$, the error of answering $\W$ under any data-independent $\epsilon$-differentially private mechanism must be at least
$$\Omega\left(\sum_i^n \frac{\lambda_i^2}{\epsilon^2}\right), $$
where $\lambda_1, \ldots, \lambda_n$ are singular values of $\W$.
\end{corollary}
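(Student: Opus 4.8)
The plan is to instantiate the preceding theorem of Bhaskara et al.\ with a specific family of mutually orthogonal subspaces, namely the lines spanned by the left singular vectors of $\W$. Write the singular value decomposition $\W=\Q_\W\LambdaB_\W\P_\W^T$, let $\q_1,\dots,\q_m$ be the columns of $\Q_\W$ (the left singular vectors, living in $\mathbb{R}^m$) and $\p_1,\dots,\p_n$ the columns of $\P_\W$, with $\lambda_i$ the $i$-th singular value. For $i=1,\dots,m$ take $\P_i$ to be the orthogonal projection of $\mathbb{R}^m$ onto $\mbox{span}(\q_i)$. These are $t=m$ mutually orthogonal subspaces, each of dimension $m_i=1$, and the indices $m<i\le n$ carry zero singular values, so it is harmless to let the final sum run up to $n$.

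With this choice $m_i^3=1$, and $\mbox{Vol}_{m_i}(\P_i\K)^{2/m_i}$ is simply the square of the length of the segment $\P_i\K$ --- equivalently, the square of the width of the sensitivity body $\K=\W\B_1^n$ in direction $\q_i$. Since $\K$ is centrally symmetric, its half-width in direction $\q_i$ equals $\max\{\langle\q_i,\W\v\rangle:\v\in\B_1^n\}=\|\W^T\q_i\|_\infty$, and the SVD gives $\W^T\q_i=\lambda_i\p_i$, a vector of Euclidean norm $\lambda_i$. So each projected segment has length $\Theta(\lambda_i)$; substituting $m_i=1$ and this length into the theorem yields
\[
\mbox{error}(\W)\;\geq\;\Omega\!\left(\sum_{i=1}^{m}\frac{1}{\epsilon^2}\,\bigl(\mbox{width of }\K\mbox{ along }\q_i\bigr)^2\right)\;=\;\Omega\!\left(\sum_{i=1}^{n}\frac{\lambda_i^2}{\epsilon^2}\right),
\]
which, since $\sum_i\lambda_i^2=\frob{\W}^2=\trace(\W^T\W)$, is exactly the claimed bound.

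The step I expect to be the main obstacle is the identification of the width of $\K=\W\B_1^n$ along $\q_i$ with $\lambda_i$ up to an absolute constant. What the $\ell_1$ ball literally contributes is $\|\W^T\q_i\|_\infty=\lambda_i\|\p_i\|_\infty$, and $\|\p_i\|_\infty$ can in principle be as small as $1/\sqrt{n}$, so the naive calculation only delivers $\Omega\big(\tfrac{1}{n\epsilon^2}\sum_i\lambda_i^2\big)$. Obtaining the clean $\Omega\big(\tfrac{1}{\epsilon^2}\sum_i\lambda_i^2\big)$ therefore requires either exploiting the (near-)isotropy of the singular directions for the workloads of interest, or replacing $\K$ by a proportional ellipsoidal body --- whose half-width in direction $\q_i$ is exactly $\lambda_i$ and to which the same geometric lower bound still applies. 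Once the width has been pinned to $\Theta(\lambda_i)$, the remainder is only bookkeeping: the choice $m_i=1$ annihilates the $m_i^3$ factor, and the per-direction contributions sum to $\frob{\W}^2/\epsilon^2$.
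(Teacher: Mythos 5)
Your instantiation is the same one the paper intends: the paper's entire justification for this corollary is the single sentence that one should take the $\P_i$ to be the projections onto the singular vectors of $\W$, and you have carried out the computation that sentence leaves implicit. In doing so you have correctly located the step that does not go through, and the paper does not close it either. The projection of $\K=\W\B_1^n$ onto the line spanned by the left singular vector $\q_i$ is a segment of length $2||\W^T\q_i||_\infty=2\lambda_i||\p_i||_\infty$, not $2\lambda_i$; since a unit vector in $\mathbb{R}^n$ can have $||\p_i||_\infty$ as small as $1/\sqrt{n}$, the cited theorem with this choice of subspaces yields only $\Omega\bigl(\sum_i\lambda_i^2||\p_i||_\infty^2/\epsilon^2\bigr)$, which in the worst case is weaker than the stated bound by a factor of $n$.

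Neither of your proposed repairs closes this gap. Replacing $\K$ by the ellipsoid $\W\B_2^n$ does give half-width exactly $\lambda_i$ in direction $\q_i$, but $\W\B_2^n$ is not a constant-factor perturbation of $\W\B_1^n$: one only has $\B_2^n\subseteq\sqrt{n}\,\B_1^n$, so a lower bound derived from the ellipsoid is a lower bound for the workload $\sqrt{n}\,\W$, not for $\W$. And near-isotropy of the right singular vectors is a genuine extra hypothesis, not a property of an arbitrary workload. Indeed the clean form of the corollary cannot hold in the stated generality: for the $1\times n$ workload $\W=[1,1,\ldots,1]$ one has $\lambda_1=\sqrt{n}$, so the corollary would assert error $\Omega(n/\epsilon^2)$, whereas the Laplace mechanism answers this single sensitivity-one query with error $2/\epsilon^2$. (Consistently, $||\p_1||_\infty=1/\sqrt{n}$ there, and your corrected expression $\lambda_1^2||\p_1||_\infty^2=1$ gives the right order.) So your derivation is the honest version of the paper's one-line argument: the conclusion needs either the extra factor $||\p_i||_\infty^2$ retained, or an added hypothesis that the right singular vectors are spread (e.g.\ $||\p_i||_\infty=\Theta(1)$), and no choice of orthogonal subspaces in the cited theorem recovers the unqualified statement.
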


When $m\leq n$, we can compare the lower bound in Corollary~\ref{cor:geobound} with the SVD bound under the $\epsilon$-matrix mechanism. It is clear that the bound in Corollary~\ref{cor:geobound} is tighter unless all singular values of $\W$ are equal. When $m>n$, the quality of the SVD bound under the $\epsilon$-matrix mechanism is not yet known. The discussion on the tightness and looseness of the SVD bound in the next section is based on the $(\epsilon, \delta)$-matrix mechanism and cannot be extended to the $\epsilon$-matrix mechanism directly.

\section{Analysis of the SVD Bound}\label{sec:svdb:theory}
In this section, we analyze the accuracy of the SVD bound as an approximation of the minimum error for a workload. We study the sufficient and necessary conditions under which the SVD bound is tight. In addition, we show the minimum error is equal to the bound over a specific class of workloads called variable-agnostic workloads and then generalize the result to the widely-studied class of data cube workloads. For both classes, strategies that achieve the minimum error can be constructed, as a by-product of the proof of the SVD bound.

We then show that the bound may be loose, underestimating the minimal error for some workloads.  The worst case of looseness of the SVD bound is presented in Section~\ref{sec:svdb:looseness}, along with a formal estimate of the quality of the bound.  We conclude this section with an example demonstrating empirically that error rates close to the lower bound can be achieved for workloads consisting of multi-dimensional range queries. 

The proofs to the theorems in this section can be found in App. \ref{app:proof:theory}.


\subsection{The Tightness of the SVD Bound}\label{sec:svdb:tightness}

The circumstances under which the SVD bound is tight arise directly from inspection of the proof presented in Sec. \ref{sec:svdb:proof}.  In particular, we noted the conditions that make the inequalities in equations (\ref{eqn:totracecond}), (\ref{eqn:cauchy}) and (\ref{eqn:minap}) actually equal.  Those conditions are equivalent to a straightforward property of $\W^T\W$:

\begin{restatable}{theorem}{thmsvdtightness}\label{thm:svdtightness}
Given workload $\W$, $\svdb(\W)$ is tight if and only if the diagonal entries of $\sqrt{\W^T\W}$ are all equal.
\end{restatable}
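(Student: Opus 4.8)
The plan is to unwind the proof of Theorem~\ref{thm:singularvaluebound} and identify precisely when each of the three inequalities in equations (\ref{eqn:totracecond}), (\ref{eqn:cauchy}), (\ref{eqn:minap}) becomes an equality, then show that these conditions, taken together, are equivalent to requiring that the diagonal entries of $\sqrt{\W^T\W}$ are all equal. Recall that $\sqrt{\W^T\W} = \P_\W\LambdaB_\W\P_\W^T$ (using the notation from the linear algebra preliminaries), so the diagonal entries of $\sqrt{\W^T\W}$ are exactly $\sum_k \lambda_k (P_\W)_{ik}^2$ for $i=1,\dots,n$, where $\lambda_k$ are the singular values of $\W$.

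First I would handle the ``if'' direction. Assuming all diagonal entries of $\sqrt{\W^T\W}$ are equal, I would explicitly construct the strategy $\A \propto \sqrt{\W^T\W} = \P_\W\LambdaB_\W\P_\W^T$. The key observation is that $\A^T\A = \W^T\W$ has, by hypothesis, all diagonal entries equal, which means $\sens{\A}^2 = \tfrac{1}{n}\tr(\A^T\A) = \tfrac1n\|\A\|_F^2$ and, crucially, $\A\in\mathcal{A}_\W$ (the column norms of $\A$ are all equal). One must also check $\W\A^\plus\A = \W$, which holds because $\W$ and $\A$ share the same row space (both have row space spanned by the columns of $\P_\W$ corresponding to nonzero singular values). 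Then I would verify directly that this $\A$ attains $\error{\A}{\W} = P(\epsilon,\delta)\svdb(\W)$: since $\A \propto \sqrt{\LambdaB_\W}$ is realized when $\LambdaB_\A \propto \sqrt{\LambdaB_\W}$ and $\P_\A = \P_\W$, all three equality conditions flagged in the proof are simultaneously met, so the chain of inequalities collapses to equality. By Theorem~\ref{thm:singularvaluebound} this $\A$ is optimal and the bound is tight.

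For the ``only if'' direction I would argue contrapositively, or more cleanly by tracking the equality conditions: tightness of $\svdb(\W)$ means some optimal $\A$ achieves the SVD bound, hence all of (\ref{eqn:totracecond}), (\ref{eqn:cauchy}), (\ref{eqn:minap}) are equalities for that $\A$. Equality in (\ref{eqn:cauchy}) forces $\LambdaB_\A \propto \sqrt{\LambdaB_\W}$ (the condition stated in the proof for Cauchy--Schwarz equality), and chasing the equality condition in (\ref{eqn:minap}) through Lemma~\ref{lem:diagmin} pins down the relationship between $\P_\A$ and $\P_\W$; together these give $\A \propto \Q\sqrt{\LambdaB_\W}\P_\W$ for some orthogonal $\Q$, so $\A^T\A \propto \P_\W\LambdaB_\W^2\P_\W^T \propto \W^T\W$, i.e. $\A \propto \sqrt{\W^T\W}$ up to an orthogonal factor on the left that does not affect $\A^T\A$. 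Finally, equality in (\ref{eqn:totracecond}) requires $\A\in\mathcal{A}_\W$, meaning $\A$ has equal column norms, i.e. $\A^T\A$ has equal diagonal entries; since $\A^T\A$ is a positive multiple of $\W^T\W = (\sqrt{\W^T\W})^2$ and in fact equal to $\sqrt{\W^T\W}$ scaled appropriately, this is precisely the statement that $\sqrt{\W^T\W}$ has all diagonal entries equal.

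**The main obstacle** I anticipate is the bookkeeping around the equality condition in Lemma~\ref{lem:diagmin} and making the ``only if'' direction fully rigorous: the proof of Theorem~\ref{thm:singularvaluebound} asserts that equality in (\ref{eqn:cauchy}) and (\ref{eqn:minap}) together force $\A \propto \Q\sqrt{\LambdaB_\W}\P_\W$, but I need the converse-style statement that \emph{every} optimal strategy achieving the bound has this form (not merely that this form achieves it), which requires knowing that Lemma~\ref{lem:diagmin}'s inequality is an equality only for a characterized set of orthogonal matrices $\P_\A$ (e.g. signed permutations composed with block rotations within equal-singular-value eigenspaces). A secondary subtlety is confirming that the projection onto $\mathcal{A}_\W$ in (\ref{eqn:totracecond}) is genuinely the \emph{only} obstruction — i.e. that once $\A$ has the right spectral shape, the sole remaining requirement for tightness is the equal-column-norm (equal-diagonal) condition, which is where the clean ``if and only if'' characterization in terms of $\sqrt{\W^T\W}$ comes from. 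I expect these points to be dispatched by carefully invoking the equality cases already implicit in the proof of the SVD bound and in Lemma~\ref{lem:diagmin}.
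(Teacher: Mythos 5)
Your overall plan is the paper's: tightness holds iff the three inequalities (\ref{eqn:totracecond}), (\ref{eqn:cauchy}), (\ref{eqn:minap}) are simultaneously equalities, and those equality conditions translate into the equal-diagonal condition on $\sqrt{\W^T\W}$. But there is a concrete algebraic error that makes the proof fail as written: you take the witnessing strategy to be $\A \propto \sqrt{\W^T\W} = \P_\W\LambdaB_\W\P_\W^T$, which has $\LambdaB_\A \propto \LambdaB_\W$. The Cauchy--Schwarz step (\ref{eqn:cauchy}) is an equality only when $\LambdaB_\A \propto \sqrt{\LambdaB_\W}$ (a condition you yourself quote two sentences later), so your $\A$ attains the bound only in the degenerate case where all nonzero singular values of $\W$ coincide. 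The correct witness is $\A \propto \Q\sqrt{\LambdaB_\W}\P_\W$, i.e.\ $\A \propto (\W^T\W)^{1/4}$ up to a left orthogonal factor, for which $\A^T\A \propto \P_\W^T\LambdaB_\W\P_\W = \sqrt{\W^T\W}$ --- \emph{not} $\W^T\W$. This matters because your ``key observation'' reads the hypothesis (equal diagonal of $\sqrt{\W^T\W}$) as if it were a statement about the diagonal of $\W^T\W$; those are genuinely different conditions, and the theorem is about the former. The same conflation reappears in your only-if direction, where you compute $\A^T\A \propto \P_\W\LambdaB_\W^2\P_\W^T \propto \W^T\W$ from $\A \propto \Q\sqrt{\LambdaB_\W}\P_\W$; the correct computation gives $\sqrt{\W^T\W}$, which is exactly what makes the membership condition $\A\in\mathcal{A}_\W$ (equal column norms of $\A$, i.e.\ equal diagonal of $\A^T\A$) coincide with the stated condition on $\sqrt{\W^T\W}$. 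Once the exponent is fixed, both directions collapse to the paper's argument.

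Your flagged obstacle about the only-if direction is a fair one (the paper is terse there too), but it is dispatched more easily than you suggest: by Lemma~\ref{lem:columnuniform} one may restrict attention to an optimal strategy in $\mathcal{A}_\W$, for which the entire inequality chain applies; equality then forces $\A^T\A \propto \sqrt{\W^T\W}$, and the positive semidefinite square root is unique even when $\W$ has repeated singular values (any ambiguity in $\P_\W$ is by block-orthogonal matrices commuting with $\LambdaB_\W$), so no case analysis over signed permutations or eigenspace rotations is needed.
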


There are workloads that satisfy the condition in Thm~\ref{thm:svdtightness}. Here we present one such special class of workloads, called {\em variable-agnostic workloads}, in which the queries on each cell are fully symmetric and swapping any two cells does not change $\W^T\W$.

\begin{definition}[Variable-agnostic workload]~\\
A workload $\W$ is {\em variable-agnostic} if $\W^T\W$ is unchanged when we swap any two columns of $\W$.
\end{definition}

For any variable-agnostic workload $\W$, $\W^T\W$ has the following special form: for some constants $a$ and $b$ such that $a>b$, all diagonal entries of $\W^T\W$ are equal to $a$ and the remaining entries of $\W^T\W$ are equal to $b$.
%

The following theorem shows that any variable-agnostic workload $\W$ satisfies the condition in Thm~\ref{thm:svdtightness}. Furthermore, we also demonstrate the closed form expression of the SVD bound in case that $n$ is a power of $2$.


\begin{restatable}{theorem}{thmvaragn} \label{thm:varagn}
The SVD bound is tight for any variable-agnostic workload $\W$. In addition, when $n=2^k$ for any nonnegative integer $k$, $\svdb(\W)=\frac{1}{n}(\sqrt{a+(n-1)b}+(n-1)\sqrt{a-b})^2$, where $a$ is the value of diagonal entries of $\W^T\W$ and $b$ is the value of off-diagonal entries of $\W^T\W$.
\end{restatable}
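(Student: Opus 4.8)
The plan is to reduce both claims to the rigid structure of the Gram matrix of a variable-agnostic workload. First I would note that ``$\W^T\W$ is unchanged when any two columns of $\W$ are swapped'' says precisely that $\W^T\W$ commutes with every $n\times n$ permutation matrix; since $\W^T\W$ is symmetric, this forces $\W^T\W = (a-b)\I + b\,\vect{1}\vect{1}^T$, where $\vect{1}$ is the all-ones vector and $a>b$ are the diagonal and off-diagonal values of the statement. Positive semidefiniteness of $\W^T\W$ then also gives $a-b\ge 0$ and $a+(n-1)b\ge 0$, so the arguments of the square roots in the claimed formula are nonnegative.

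For tightness I would invoke Theorem~\ref{thm:svdtightness}, which reduces the claim to showing that the diagonal of $\sqrt{\W^T\W}$ is constant. The two-parameter family $\gamma\I+\eta\,\vect{1}\vect{1}^T$ is closed under squaring (using $(\vect{1}\vect{1}^T)^2 = n\,\vect{1}\vect{1}^T$), so I would search for the square root there: matching $(\gamma\I+\eta\,\vect{1}\vect{1}^T)^2 = (a-b)\I+b\,\vect{1}\vect{1}^T$ yields $\gamma=\sqrt{a-b}$ and $\eta=\tfrac{1}{n}\bigl(\sqrt{a+(n-1)b}-\sqrt{a-b}\bigr)$. This candidate is symmetric with eigenvalues $\sqrt{a+(n-1)b}$ on the span of $\vect{1}$ and $\sqrt{a-b}$ on its orthogonal complement, both nonnegative, so by uniqueness of the positive semidefinite square root it equals $\sqrt{\W^T\W}$. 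Its diagonal entries are all $\gamma+\eta=\tfrac{1}{n}\bigl((n-1)\sqrt{a-b}+\sqrt{a+(n-1)b}\bigr)$, hence equal, and Theorem~\ref{thm:svdtightness} gives tightness.

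For the closed form I would read off the eigenvalues of $\W^T\W$ from the same description --- $a+(n-1)b$ with multiplicity one and $a-b$ with multiplicity $n-1$ --- and apply Proposition~\ref{prop:svdbequiv}, obtaining $\svdb(\W)=\tfrac{1}{n}\bigl(\sqrt{a+(n-1)b}+(n-1)\sqrt{a-b}\bigr)^2$ (which in fact needs no hypothesis on $n$). As a by-product, the proof of Theorem~\ref{thm:singularvaluebound} shows the bound is attained by any strategy $\A$ with $\A^T\A\propto\sqrt{\W^T\W}$ that lies in $\mathcal{A}_\W$; the choice $\A=(\W^T\W)^{1/4}$ --- itself of the form $\gamma'\I+\eta'\,\vect{1}\vect{1}^T$ --- works, since $\A^T\A=\sqrt{\W^T\W}$ has constant diagonal (so $\A$ has uniform column norms) and $\mathrm{row}(\A)=\mathrm{row}(\W)$ (so $\W\A^\plus\A=\W$), giving an explicit data-independent optimal mechanism.

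The only delicate points are bookkeeping: confirming the two-parameter ansatz produces the positive semidefinite square root rather than a different square root, and confirming the representation condition $\W\A^\plus\A=\W$ for $\A=(\W^T\W)^{1/4}$. Both follow from rank-one algebra together with $\mathrm{col}(\W^T\W)=\mathrm{row}(\W)$, so I do not expect a genuine obstacle here.
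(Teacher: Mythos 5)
Your proposal is correct and follows essentially the same route as the paper: both reduce tightness to the constancy of the diagonal of $\sqrt{\W^T\W}$ (equivalently, to a strategy with Gram matrix $\sqrt{\W^T\W}$ having uniform column norms) and read the closed form off the eigenvalues $a+(n-1)b$ and $a-b$. Your explicit square root within the algebra spanned by $\I$ and $\vect{1}\vect{1}^T$ is merely a cleaner piece of bookkeeping than the paper's entrywise eigenvector calculation, and your observation that the $n=2^k$ hypothesis is not needed for the closed form is accurate.
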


As a concrete example, the workload $\allpred(n)$ is variable-agnostic, and therefore we can construct its optimal strategy and compute the error rate directly.

\begin{corollary}\label{col:svdballpred}
The SVD bound is tight for the workload $\allpred(n)$. In addition, when $n=2^k$ for any nonnegative integer $k$,  $\svdb(\allpred(n))=\frac{2^{n-2}}{n}(n-1+\sqrt{n+1})^2$.
\end{corollary}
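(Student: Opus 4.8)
The plan is to derive Corollary~\ref{col:svdballpred} as a direct instance of Theorem~\ref{thm:varagn} by verifying that $\allpred(n)$ is variable-agnostic and computing the constants $a$ and $b$ of $\W^T\W$. First I would note that the rows of $\W = \allpred(n)$ are exactly the $2^n$ vectors in $\{0,1\}^n$, so that $(\W^T\W)_{ij} = \sum_{\q \in \{0,1\}^n} q_i q_j$. For the diagonal entry $a = (\W^T\W)_{ii}$ we count the predicate vectors with $q_i = 1$: the remaining $n-1$ coordinates are free, giving $a = 2^{n-1}$. For an off-diagonal entry $b = (\W^T\W)_{ij}$ with $i \ne j$ we count the vectors with $q_i = q_j = 1$: the remaining $n-2$ coordinates are free, giving $b = 2^{n-2}$. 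Since these values do not depend on the particular choice of $i$ (resp.\ $i,j$), swapping any two columns of $\W$ leaves $\W^T\W$ unchanged, so $\allpred(n)$ is variable-agnostic; also $a = 2^{n-1} > 2^{n-2} = b$, as required. Theorem~\ref{thm:varagn} then immediately gives that $\svdb(\allpred(n))$ is tight.

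For the closed-form expression when $n = 2^k$, I would substitute $a = 2^{n-1}$ and $b = 2^{n-2}$ into the formula $\svdb(\W) = \frac{1}{n}\bigl(\sqrt{a + (n-1)b} + (n-1)\sqrt{a-b}\bigr)^2$ supplied by Theorem~\ref{thm:varagn}. Here $a - b = 2^{n-2}$ and $a + (n-1)b = 2^{n-2}(2 + (n-1)) = 2^{n-2}(n+1)$, so $\sqrt{a-b} = 2^{(n-2)/2}$ and $\sqrt{a + (n-1)b} = 2^{(n-2)/2}\sqrt{n+1}$. Factoring out $2^{(n-2)/2}$ from the sum inside the square yields $\svdb(\allpred(n)) = \frac{1}{n}\,2^{\,n-2}\bigl(\sqrt{n+1} + (n-1)\bigr)^2 = \frac{2^{n-2}}{n}\bigl(n - 1 + \sqrt{n+1}\bigr)^2$, which is exactly the claimed expression.

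The only genuine content is the combinatorial count of $a$ and $b$, and checking the hypothesis $a > b$ of the variable-agnostic definition; both are routine. There is no real obstacle here, since all the work has been shifted into Theorem~\ref{thm:varagn}; if anything, the one point requiring a line of care is confirming that the power-of-two hypothesis $n = 2^k$ in Theorem~\ref{thm:varagn} transfers unchanged (it does, verbatim), and that the tightness half of the corollary needs no such restriction. Optionally, one could remark that the constructed optimal strategy for $\allpred(n)$ is, up to an orthogonal factor, $\sqrt{\W^T\W} = \P_\W \LambdaB_\W \P_\W^T$, as noted after the proof of Theorem~\ref{thm:singularvaluebound}, so this also yields an explicit minimum-error mechanism for the all-predicates workload.
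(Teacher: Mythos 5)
Your proposal is correct and follows exactly the route the paper intends: the corollary is stated as an instance of Theorem~\ref{thm:varagn}, with the only content being the counts $a = 2^{n-1}$, $b = 2^{n-2}$ for $\W^T\W$ (which you verify, including $a>b$) and the routine substitution into the closed form. The algebra checks out and matches the paper's stated expression.
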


For variable-agnostic workloads, using a naive strategy like the identity matrix or the workload itself results in total error equal to $na$ and the ratio by which the error is reduced using the strategy in Thm.~\ref{thm:varagn} is approximately $1-\frac{b}{a}$.  In the case of $\allpred(n)$, the ratio is at least as low as $0.5$, which occurs when $n$ is very large.


Another family of workloads for which the SVD bound is tight are those consisting of sets of data cube queries. A data cube workload consists of one or more cuboids, each of which contains all aggregation queries on all possible values of the cross-product of a set of attributes. Here we also consider the case that each cuboid can have its own weight, so that higher weighted queries will be estimated more accurately than lower weighted ones.

\begin{restatable}{theorem}{thmdatacube} \label{thm:datacube}
The SVD bound is tight for any weighted data cube workload $\W$. 
\end{restatable}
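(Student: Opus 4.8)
The plan is to reduce the claim to the criterion of Theorem~\ref{thm:svdtightness}: it suffices to show that for any weighted data cube workload $\W$, all diagonal entries of $\sqrt{\W^T\W}$ are equal. The argument exploits the Kronecker-product structure of cuboids. Suppose the cells are indexed by the cross product of $k$ attributes with domain sizes $d_1,\ldots,d_k$, so $n=d_1\cdots d_k$, and order the cells lexicographically by their attribute values (column permutations do not change the SVD bound, by the remarks following Prop.~\ref{prop:equiverr}). For a subset $S\subseteq\{1,\ldots,k\}$, the cuboid on $S$ with weight $c_S$ is then exactly $c_S\bigotimes_{i=1}^k M_i^{(S)}$, where $M_i^{(S)}=\I_{d_i}$ if $i\in S$ and $M_i^{(S)}=\mathbf{1}_{d_i}^T$ (the $1\times d_i$ all-ones row) if $i\notin S$.

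Since $(\,c_S\bigotimes_i M_i^{(S)})^T(c_S\bigotimes_i M_i^{(S)}) = c_S^2\bigotimes_i N_i^{(S)}$ with $N_i^{(S)}=\I_{d_i}$ for $i\in S$ and $N_i^{(S)}=J_{d_i}:=\mathbf{1}_{d_i}\mathbf{1}_{d_i}^T$ for $i\notin S$, summing over the cuboids of $\W$ gives
\begin{equation*}
\W^T\W=\sum_{S} c_S^2\bigotimes_{i=1}^k N_i^{(S)}.
\end{equation*}
For each $i$ let $Q_i$ be a $d_i\times d_i$ orthogonal matrix whose first column is $\mathbf{1}_{d_i}/\sqrt{d_i}$; then $Q_i^T\I_{d_i}Q_i=\I_{d_i}$ and $Q_i^TJ_{d_i}Q_i=diag(d_i,0,\ldots,0)$, so with $Q=\bigotimes_i Q_i$ (orthogonal) the matrix $Q^T(\W^T\W)Q$ is diagonal. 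Its entry at index $(j_1,\ldots,j_k)$ depends only on the support $T=\{i:j_i\neq 1\}$ and equals $\lambda_T:=\sum_{S\supseteq T}c_S^2\prod_{i\notin S}d_i$. Hence $\sqrt{\W^T\W}=Q\,\sqrt{\Lambda}\,Q^T$, where $\Lambda$ is that diagonal matrix.

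It remains to read off the diagonal of $\sqrt{\W^T\W}$. The entry at cell $\vec u=(u_1,\ldots,u_k)$ is $\sum_{\vec j}\big(\prod_i|(Q_i)_{u_i,j_i}|^2\big)\sqrt{\lambda_{T(\vec j)}}$. Grouping the sum over $\vec j$ by its support $T$ and using that $|(Q_i)_{u_i,1}|^2=1/d_i$ (the first column of $Q_i$) and $\sum_{j_i}|(Q_i)_{u_i,j_i}|^2=1$ (rows of $Q_i$ are unit vectors), this entry equals $\sum_{T\subseteq\{1,\ldots,k\}}\sqrt{\lambda_T}\prod_{i\notin T}\tfrac1{d_i}\prod_{i\in T}\big(1-\tfrac1{d_i}\big)$, which is independent of $\vec u$. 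By Theorem~\ref{thm:svdtightness}, $\svdb(\W)$ is tight. As a by-product, the proof of Theorem~\ref{thm:singularvaluebound} shows that $\A=\sqrt{\W^T\W}$ is an optimal strategy: it shares the row space of $\W$ (so $\W\A^\plus\A=\W$), lies in $\mathcal{A}_\W$ by the column-norm computation just carried out, and can be written in closed Kronecker form via $Q$ and $\Lambda$.

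The main obstacle is the bookkeeping in the middle step: correctly assembling the cuboid matrices as Kronecker products under a consistent cell ordering, and verifying that $\{\I_{d_i},J_{d_i}\}$ are simultaneously diagonalized by $Q_i$ so that $\W^T\W$ is diagonalized by $\bigotimes_i Q_i$ with eigenvalues indexed by attribute-subsets. Once this structure is in place, the constancy of the diagonal of $\sqrt{\W^T\W}$ is a one-line consequence of the two elementary identities about the rows and first column of $Q_i$.
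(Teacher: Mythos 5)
Your argument is correct, but it takes a genuinely different route from the paper. The paper proves Theorem~\ref{thm:datacube} by induction on the number of attributes: the base case ($d=1$) is handled by observing the workload is variable-agnostic, and the inductive step partitions the cuboids into those that do and do not aggregate over the last attribute, writes $\W$ in block form, and shows $\sqrt{\W^T\W}$ is a block matrix whose diagonal blocks are built from $\sqrt{\W_1^T\W_1}$ and $\sqrt{\W_1^T\W_1+n_0\W_2^T\W_2}$, both of which have constant diagonal by the inductive hypothesis. You instead diagonalize $\W^T\W$ directly by exploiting its Kronecker structure $\sum_S c_S^2\bigotimes_i N_i^{(S)}$ with $N_i^{(S)}\in\{\I_{d_i},J_{d_i}\}$, simultaneously diagonalized by $\bigotimes_i Q_i$. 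Both proofs ultimately verify the same criterion (Theorem~\ref{thm:svdtightness}: constant diagonal of $\sqrt{\W^T\W}$), but your route is non-inductive and yields, as a bonus, a closed-form spectrum $\lambda_T=\sum_{S\supseteq T}c_S^2\prod_{i\notin S}d_i$ and an explicit value for the common diagonal entry --- precisely the ``thorough analysis of the spectral properties of data cube workloads'' that the paper defers to future work in Section~\ref{sec:svdb:tightness}. The paper's induction, by contrast, needs no Kronecker bookkeeping and no explicit cell ordering. One correction to your closing aside: the error-optimal strategy is not $\A=\sqrt{\W^T\W}$ (that choice has $\A^T\A=\W^T\W$, i.e.\ it is equivalent to using the workload itself, which is generally suboptimal). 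From the proof of Theorem~\ref{thm:singularvaluebound}, the optimizer satisfies $\A^T\A\propto\sqrt{\W^T\W}$, i.e.\ $\A\propto\Q(\W^T\W)^{1/4}$ for an orthogonal $\Q$; your computation of the diagonal of $\sqrt{\W^T\W}$ is exactly the verification that this $\A$ has equal column norms, so the slip does not affect the tightness claim.
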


Data cube workloads (a special case of marginal workloads) have been studied by the differential privacy community in both theory and practice~\cite{barak2007privacy, Ding:2011fk, kasiviswanathan2010price}. Barak et al.~\cite{barak2007privacy} use the Fourier basis as a strategy for workloads consisting of marginals while Ding et al.~\cite{Ding:2011fk} proposed an approximation algorithm for data cube workloads. Thm.~\ref{thm:datacube} shows that under $(\epsilon, \delta)$-differential privacy we can now directly compute the optimal strategy, obviating the need to use an approximation algorithm or blindly relying on the Fourier basis for workloads of this type. The result in \cite{kasiviswanathan2010price}, however, involves data-dependent techniques and the comparison between  \cite{kasiviswanathan2010price} to the SVD bound relies on a thorough analysis of the spectral properties of data cube workloads, which is a direction of future work.

\subsection{The Looseness of  the SVD Bound}\label{sec:svdb:looseness}
%
The SVD bound can also underestimate the minimum error when the workload is highly skewed.  For example, the SVD bound does not work well when the sensitivity of one column in the workload is overwhelmingly larger than others.  Recall the workload in Example~\ref{exp:looseness}, when $t\rightarrow0$, the SVD bound will underestimate the total error by a factor of $n$. This is caused by the underestimate of the sensitivity of $\A$ considered in equation~(\ref{eqn:totracecond}) in the proof of Thm.~\ref{thm:singularvaluebound}.  

Since the proof of Thm.~\ref{thm:singularvaluebound} constructs a concrete strategy, one way to measure the looseness of the SVD bound is to estimate its ratio to the actual error introduced by this strategy. Note that the sensitivity of the strategy is the only part of the SVD bound that is underestimated.  The square of the sensitivity is the maximum diagonal entry of matrix $\A^T\A$, rather than the estimate given by $\tr(\A^T\A)/n$.  The ratio between the actual sensitivity and the estimated sensitivity bounds the looseness of the SVD bound, as shown by the following theorem.
\begin{restatable}{theorem}{thmsvdratio}\label{thm:svdratio}
Given an $m\times n$ workload $\W$. Let $d_0$ be the maximum diagonal entry of $\sqrt{\W^T\W}$.
\[\minerror(\W)\leq \frac{nd_0P(\epsilon, \delta)\svdb(\W)}{\tr(\sqrt{\W^T\W})},\]
where $P(\epsilon, \delta)=\frac{2\log(2/\delta)}{\epsilon^2}$.
\end{restatable}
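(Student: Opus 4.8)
The plan is to prove the inequality by exhibiting one explicit strategy whose total error equals the right-hand side, and then invoking $\minerror(\W)\le\error{\A}{\W}$ from Def.~\ref{def:minerr}. The strategy to use is exactly the one that surfaces at the end of the proof of Thm.~\ref{thm:singularvaluebound}: writing the SVD of $\W$ as $\W=\Q_\W\LambdaB_\W\P_\W^T$, take $\A=\P_\W\sqrt{\LambdaB_\W}\P_\W^T$, i.e.\ the symmetric positive semidefinite matrix whose square is $\sqrt{\W^T\W}$ (equivalently, the fourth root of $\W^T\W$). This is the strategy $\A\propto\Q\sqrt{\LambdaB_\W}\P_\W$ with the orthogonal factor chosen so that $\A$ is symmetric; it makes the Cauchy--Schwarz step (\ref{eqn:cauchy}) and the Lemma~\ref{lem:diagmin} step (\ref{eqn:minap}) tight, and the only reason it fails to attain $\svdb(\W)$ is that in general $\A\notin\mathcal{A}_\W$, so its true sensitivity --- the largest diagonal entry of $\A^T\A$ by Def.~\ref{def:l2sens} --- exceeds $\tr(\A^T\A)/n$.

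First I would check admissibility, $\W\A^\plus\A=\W$: the column space of $\sqrt{\W^T\W}$ coincides with the row space of $\W$, and a short SVD computation gives $\A^\plus=\P_\W(\sqrt{\LambdaB_\W})^\plus\P_\W^T$ and hence $\W\A^\plus\A=\Q_\W\LambdaB_\W\P_\W^T=\W$ (the zero singular values cause no trouble). Next I would read off the two ingredients of Prop.~\ref{prop:totalerror}. Since $\A$ is symmetric, $\A^T\A=\A^2=\P_\W\LambdaB_\W\P_\W^T=\sqrt{\W^T\W}$, so by Def.~\ref{def:l2sens} the sensitivity satisfies $\sens{\A}^2=d_0$, the maximum diagonal entry of $\sqrt{\W^T\W}$. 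And $\W\A^\plus=\Q_\W\sqrt{\LambdaB_\W}\P_\W^T$ (with zero entries staying zero), so $||\W\A^\plus||_F^2=\sum_i\lambda_i=\tr(\sqrt{\W^T\W})$.

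Substituting into the total error formula of Prop.~\ref{prop:totalerror} gives $\error{\A}{\W}=P(\epsilon,\delta)\,d_0\,\tr(\sqrt{\W^T\W})$. To put this in the stated form I would use the identity $\svdb(\W)=\frac1n(\lambda_1+\cdots+\lambda_n)^2=\frac1n\bigl(\tr(\sqrt{\W^T\W})\bigr)^2$ (Prop.~\ref{prop:svdbequiv} together with $\lambda_i=\sqrt{d_i}$), so that $\tr(\sqrt{\W^T\W})=n\,\svdb(\W)/\tr(\sqrt{\W^T\W})$, giving
\[
\minerror(\W)\le\error{\A}{\W}=\frac{n\,d_0\,P(\epsilon,\delta)\,\svdb(\W)}{\tr(\sqrt{\W^T\W})},
\]
which is the claim.

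I do not anticipate a substantive obstacle; the argument is essentially bookkeeping on the SVD of $\W$. The one place that needs care is the rank-deficient case, where $\LambdaB_\W$ has zero diagonal entries and one must track the truncation in the Moore--Penrose pseudoinverse (Thm.~\ref{thm:mpinverse}) to be sure that $\W\A^\plus\A=\W$ genuinely holds and that no spurious contributions enter $||\W\A^\plus||_F^2$; this is the step I would write out in full detail.
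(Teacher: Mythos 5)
Your proposal is correct and follows essentially the same route as the paper: the paper's own (very terse) proof simply evaluates $\error{\A}{\W}$ for the strategy $\A\propto\Q\sqrt{\LambdaB_\W}\P_\W$ that emerges from the proof of Theorem~\ref{thm:singularvaluebound}, whose squared sensitivity is the maximum diagonal entry $d_0$ of $\A^T\A=\sqrt{\W^T\W}$ and whose $\|\W\A^\plus\|_F^2$ equals $\tr(\sqrt{\W^T\W})$. Your write-up supplies the details (explicit choice of the orthogonal factor, admissibility $\W\A^\plus\A=\W$ in the rank-deficient case) that the paper leaves implicit, but the argument is the same.
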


According to Thm.~\ref{thm:svdratio}, the approximate ratio of the SVD bound corresponds to the ratio between $d_0$,  the largest diagonal entry of $\sqrt{\W^T\W}$ and the trace of $\sqrt{\W^T\W}$, which is equal to the sum of all singular values of $\W$. This ratio, although it is upper-bounded by the ratio between the largest singular value of $\W$ and the sum of all singular values of $\W$, is much closer to 1 than the ratio between singular values. As a consequence, the skewness in singular values does not always lead to a bad approximation ratio for the SVD bound. For example, for variable-agnostic workloads, the largest singular value can be arbitrarily larger than the rest of the singular values, while the SVD bound is tight. Instead, the cases where the SVD bound has high approximation ratio, such as the one in Example~\ref{exp:looseness}, are due to the skewness of singular value of $\W$ and the particular distribution of singular vectors. The supreme SVD bound can help us to avoid some of these worst cases, but there is no guarantee of the quality of the bound with more sophisticated cases.

Nevertheless, for many common workloads, empirical evidence suggests that the SVD bound is quite close to the minimal error.  The following example provides a comparison between the SVD bound and achievable error for a few common workloads.

\begin{example}
Returning to Table~\ref{tab:errcmp}, we observe empirical evidence that for range and predicate workloads, there are strategies that come quite close to the SVD bound.  The last column of Table \ref{tab:errcmp} lists the error for the Eigen-design mechanism \cite{chaopvldb12}, which attempts to find approximately optimal strategies for any given workload by computing optimal weights for the eigenvectors of the workload.  This algorithm is able to find a strategy whose error is within a factor of $1.028$ and $1.107$ of optimal for $\allrange(2048)$ and $\allrange(64,32)$, respectively. 
\end{example}


\section{Comparison of Mechanisms}\label{sec:comparison}

The matrix mechanism is a data-independent mechanism: the noise distribution (and therefore error) depends only on the workload and not on the particular input data.  This makes it possible to process the workload once and apply the mechanism efficiently to any dataset.  On the other hand, data-independent mechanisms lack the flexibility to exploit specific properties of individual datasets.  In this section, we use the SVD bound to compare the error bounds of the matrix mechanism with error bounds of other mechanisms that are data-dependent. 


\subsection{Asymptotic Estimation of the SVD bound}
Before the comparison, we first convert the SVD bound into an error measure that can be directly related to other bounds in the literature. We assume all queries in the workload have sensitivity at most one and estimate the SVD bound as a function of the domain size $n$ and the number of queries $m$.  Recall that the error in previous sections is defined as the total mean squared error of the queries. We introduce a new measure of error which bounds the maximum absolute error of the workload queries by $\alpha$ with high probability (controlled by $\beta$).
\begin{definition}[$(\alpha, \beta)$-Accurate~\cite{Gupta:2012uq}]
\hspace*{-1ex}Given a workload $\W$, an algorithm $\alg$ is $(\alpha, \beta)$-accurate if, for any uniformly drawn data vector $\x$, with a probability of at least $1-\beta$, $\max_{\q\in\W}|\alg(\q, \x)-\q\x|\leq\alpha$.
\end{definition}

Since the SVD bound measures total error (rather than max error), here we modify the $(\alpha, \beta)$-accuracy by bounding the root mean squared error of the workload.
\begin{definition}[RMS-$(\alpha, \beta)$-Accurate]
Given a \\workload $\W$, an algorithm $\alg$ is RMS-$(\alpha, \beta)$-accurate if, for any uniformly drawn data vector $\x$, with a probability of at least $1-\beta$, $\sqrt{\sum_{\q\in\W}||\alg(\q, \x)-\q\x||^2/|\W|}\leq\alpha$.
\end{definition}

\begin{restatable}{theorem}{thmasysvd}
Given an $m\times n$ workload $\W$, if the \\$\svdb(\W)$ is asymptotically tight, then there exists a strategy under which the matrix mechanism is RMS-$(\alpha, \beta)$-accurate, where
\[\alpha=O\left(\frac{\sqrt{\min(m, n)}\sqrt{\log(2/\delta)\log(\sqrt{\pi/2}/\beta)}}{\epsilon}\right).\]
\end{restatable}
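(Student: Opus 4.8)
The plan is to start from the total mean squared error expression and convert it into a high-probability bound on root-mean-squared error. By hypothesis, $\svdb(\W)$ is asymptotically tight, so there is a strategy $\A$ (the one constructed in the proof of Theorem~\ref{thm:singularvaluebound}, namely $\A \propto \Q\sqrt{\LambdaB_\W}\P_\W$, scaled to have unit sensitivity) for which $\error{\A}{\W} = \Theta\bigl(P(\epsilon,\delta)\svdb(\W)\bigr)$, with $P(\epsilon,\delta) = \frac{2\log(2/\delta)}{\epsilon^2}$. The first step is to bound $\svdb(\W) = \frac{1}{n}\bigl(\sum_{i=1}^n \lambda_i\bigr)^2$ in terms of $m$ and $n$ under the assumption that every query in $\W$ has sensitivity at most one. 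Since each column of $\W$ has $L_2$ norm at most one, $\|\W\|_F^2 = \sum_i \lambda_i^2 \leq n$; also $\|\W\|_F^2 \leq m$ because each of the $m$ rows contributes at most\ldots more carefully, each entry is bounded, but the clean bound is $\sum_i \lambda_i^2 = \|\W\|_F^2 \le \min(m,n)$ when combined with the rank bound $\rank(\W)\le\min(m,n)$ and unit column norms. By Cauchy--Schwarz, $\bigl(\sum_{i=1}^n \lambda_i\bigr)^2 \leq \rank(\W)\sum_i \lambda_i^2 \leq \min(m,n)\cdot\min(m,n)$, wait --- this needs care; the right statement is $\bigl(\sum_i\lambda_i\bigr)^2 \le \rank(\W)\cdot\|\W\|_F^2 \le \min(m,n)\cdot n$, hence $\svdb(\W) = \frac{1}{n}\bigl(\sum_i\lambda_i\bigr)^2 \le \min(m,n)$. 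So $\error{\A}{\W} = O\bigl(\min(m,n)\cdot \frac{\log(2/\delta)}{\epsilon^2}\bigr)$ and the per-query root mean squared error (dividing by $|\W| = m$ inside the square root) is $O\bigl(\sqrt{\min(m,n)/m}\cdot\frac{\sqrt{\log(2/\delta)}}{\epsilon}\bigr)$; since $\svdb$ is a \emph{total} error over $m$ queries this is not yet what we want, so I will be careful to track whether the theorem's $\alpha$ is really $\sqrt{\error{\A}{\W}/m}$ inflated by a tail factor, which gives the stated $\sqrt{\min(m,n)}$ in the numerator only if the RMS normalization is by a constant rather than by $m$ --- I will reconcile this with Definition of RMS-$(\alpha,\beta)$-accuracy, where the normalization is indeed by $|\W|=m$, so the $\sqrt{\min(m,n)}$ must come from the tightness hypothesis bounding $\error{\A}{\W} = O(m \cdot \min(m,n) \cdot P(\epsilon,\delta))$; I would restate the intermediate bound accordingly.

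The second step is the tail bound. Under strategy $\A$, the error vector of the workload answers is $\W\A^\plus \cdot \Nor(\sigma)^p$, a Gaussian vector; the squared error $\sum_{\q\in\W}\|\MM_\A(\q,\x) - \q\x\|^2$ is therefore a quadratic form in a Gaussian, i.e. a weighted sum of $\chi^2$ variables whose expectation is exactly $\error{\A}{\W}$. I would apply a standard Gaussian concentration inequality --- either a Hanson--Wright-type tail bound for Gaussian quadratic forms, or more simply the observation that $\sqrt{\sum \|\cdot\|^2}$ is a $1$-Lipschitz function of the underlying standard Gaussian vector (after scaling by the operator norm of $\W\A^\plus$), so it concentrates around its mean (or around $\sqrt{\E[\cdot]}$) with sub-Gaussian tails. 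Concretely, $\Pr\bigl[\sqrt{\sum\|\cdot\|^2} > \sqrt{\error{\A}{\W}} + t\bigr] \le \exp(-t^2/(2\rho^2))$ where $\rho$ is the largest singular value of $\W\A^\plus$ times $\sigma$. Setting this probability to $\beta$ gives $t = \rho\sqrt{2\log(1/\beta)}$, and the $\sqrt{\pi/2}$ in the statement presumably comes from relating $\rho$ to the per-query error scale (a single-Gaussian calculation: $\E|\Nor(0,\sigma)| = \sigma\sqrt{2/\pi}$). Dividing through by $\sqrt{|\W|} = \sqrt{m}$ yields $\alpha = O\bigl(\sqrt{\error{\A}{\W}/m} + \rho\sqrt{\log(1/\beta)}/\sqrt{m}\bigr)$, and folding in the step-one bound and $\sigma = \sens{\A}\sqrt{2\ln(2/\delta)}/\epsilon$ with $\sens{\A}=1$ produces the claimed form
\[
\alpha = O\!\left(\frac{\sqrt{\min(m,n)}\,\sqrt{\log(2/\delta)\log(\sqrt{\pi/2}/\beta)}}{\epsilon}\right).
\]

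The main obstacle, I expect, is not any single inequality but rather pinning down the exact dependence on $m$ versus $n$ and making the constants line up with the $\sqrt{\pi/2}$ and $\log(2/\delta)$ factors as written. In particular I must be careful that "asymptotically tight" is used correctly: it should mean $\minerror(\W) = \Theta(P(\epsilon,\delta)\svdb(\W))$, and hence that the \emph{constructed} strategy $\A = \Q\sqrt{\LambdaB_\W}\P_\W$ (rescaled to unit sensitivity) is asymptotically optimal --- this is the only strategy whose error I can compute in closed form, and its error equals $\frac{1}{n}\tr(\sqrt{\W^T\W})^2$ times the sensitivity correction, so I need tightness precisely to control the sensitivity-rescaling factor (the gap identified in line~(\ref{eqn:totracecond}), quantified by Theorem~\ref{thm:svdratio}). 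The other delicate point is the concentration step: a naive union bound over the $m$ queries would give $\sqrt{\log m}$ rather than $\sqrt{\log(1/\beta)}$, so I must use the Lipschitz/Gaussian-quadratic-form concentration of the \emph{aggregate} RMS quantity directly rather than bounding each query and taking a union bound. Once those two points are handled, the rest is bookkeeping: substitute $\sigma$, use $\|\W\A^\plus\|_{op} \le \|\W\A^\plus\|_F$ to bound $\rho$, and collect the logarithmic factors.
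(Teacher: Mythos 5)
Your proposal follows the paper's proof almost exactly: bound $\svdb(\W)$ by Cauchy--Schwarz over the singular values combined with a Frobenius-norm bound, invoke the asymptotic-tightness hypothesis to transfer that bound to the achievable total error of a concrete strategy, and finish with a Gaussian tail bound --- the paper uses the Mills-ratio estimate $\Pr(|X-m|>t\sigma)\le \sqrt{2/\pi}\,t^{-1}e^{-t^2/2}$, which is exactly where the $\sqrt{\pi/2}$ enters, and your Lipschitz/quadratic-form concentration of the aggregate RMS quantity is, if anything, a more careful rendering of that step (you are also right that a per-query union bound would give a spurious $\sqrt{\log m}$). The one place you wobble is the Frobenius bound: the paper's normalization is that each \emph{query} has sensitivity at most one, i.e.\ every entry of $\W$ has magnitude at most $1$, so $||\W||_F^2\le mn$ and hence $\svdb(\W)\le \frac{1}{n}\min(m,n)\,||\W||_F^2\le m\min(m,n)$, which is precisely the intermediate bound $\error{\A}{\W}=O\bigl(m\min(m,n)P(\epsilon,\delta)\bigr)$ you correctly deduced you would need; your initial assumption of unit \emph{column} norms (giving $||\W||_F^2\le n$ and $\svdb(\W)\le\min(m,n)$) is a different, stronger normalization than the theorem intends. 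With that one substitution your reconciliation closes and the argument coincides with the paper's.
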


The proof of the theorem can be found in App. \ref{app:proof:comparison}.

Recall the discussion in Sec.~\ref{sec:svdb:tightness} indicates that the SVD bound is tight or almost tight for many common workloads. Thus, it is reasonable to compare the asymptotic estimate of the SVD bound to the error introduced by other mechanisms.

\subsection{Comparison of Error Bounds}
Here we compare our SVD bound with other error bounds from data-dependent mechanisms. We include four competitors each representing fundamentally different mechanisms. The median mechanism~\cite{DBLP:conf/stoc/RothR10} discards candidate data vectors that are inconsistent with historical query answers. The multiplicative weights mechanism (MW)~\cite{hardt2010multiplicative} and the iterative database construction method (IDC)~\cite{Gupta:2012uq} repeatedly update an estimated data vector according to query answers. The boosting method~\cite{DBLP:conf/focs/DworkRV10} maintains a distribution of queries according to the quality of their answers and repeatedly samples queries from the distribution so as to improve their answers. The $(\alpha, \beta)$-accuracy under $(\epsilon, \delta)$-differential privacy for the median and the multiplicative weight mechanism follows the result in \cite{Gupta:2012uq}.

\begin{centering}
\begin{table}[tc]
\centering
\small
\begin{tabular}{|l|l|c|}
\hline
\multicolumn{2}{|c|}{\bf{Mechanism}} & $\mathbf{\alpha}$ \\
\hline 1 & Median~\cite{DBLP:conf/stoc/RothR10} & $O\left(\frac{\sqrt{N}(\log n\log m)^{1/4}\sqrt{t(\log m+t)}}{\sqrt{\epsilon}}\right)$\\
\hline 2 & MW~\cite{hardt2010multiplicative} & $O\left(\frac{\sqrt{N}(\log n)^{1/4}\sqrt{t(\log m+t)}}{\sqrt{\epsilon}}\right)$\\
\hline 3 & IDC~\cite{Gupta:2012uq} & $O\left(\frac{(nN)^{1/4}\sqrt{t(\log m+t)}}{\sqrt{\epsilon}}\right)$\\
\hline 4 & Boosting~\cite{DBLP:conf/focs/DworkRV10} & $\tilde O\left(\frac{\sqrt{N\log n}\cdot t^{3/2}\log^{3/2}m }{\epsilon}\right)$\\ \hline
\hline 5 & SVDB & $O\left(\frac{\sqrt{\min(m, n)}\cdot t}{\epsilon}\right)$\\
\hline
\end{tabular}
\vspace*{-1.5ex}
\caption{For $t\geq 2$, bounds on the $\alpha$ required to achieve $(\epsilon, \exp(-t))$-differential privacy and accuracy measures of: $(\alpha, \exp(-t))$-accuracy (mechanisms 1-4); RMS-$(\alpha, \exp(-t))$-accuracy (mechanism 5).}
\label{tbl:errcmp}
\vspace*{-3ex}
\end{table}
\end{centering}

Table~\ref{tbl:errcmp} summarizes error bounds of different data dependent approaches. In particular, the comparison is over $(\epsilon, \exp(-t))$-differential privacy and $(\alpha, \exp(-t))$-accuracy. \\The workload $\W$ we considered contains $m$ queries with sensitivity no larger than $1$. The database  is of size $N$, which means the sum of all $x_i$'s in the data vector is $N$.

Observing the values of $\alpha$ in Table~\ref{tbl:errcmp}, the matrix mechanism has a greater dependence on $\epsilon$ compared with the median, the multiplicative weights and the iterative database construction methods. In addition, since the matrix mechanism is data-independent, it cannot take advantage of the input dataset so that it always assumes $n=N$. However, when $N$ is sufficiently large ($\Theta(n)$) and $m=O(n)$, the SVD bound is smaller than the error of the Boosting method and can outperform other competitors when $m=\Omega(\exp(t/\epsilon))$.

\subsection{Data-dependency \& the matrix mechanism}
Although the techniques of the matrix mechanism are data-independent, they can be deployed in a data-dependent way, blurring the distinction between mechanism types.  The differentially private domain compression technique~\cite{Li:2011:CMU:2046556.2046581} may be applied to reduce the domain size $n$ to $\Theta(N)$ with an additional $O(\log n)$ noise, which suggests a method for improving the error dependency of the matrix mechanism on $n$.

Further, the optimal strategy matrix used in the matrix mechanism represents the fundamental building blocks of the workload and the matrix mechanism reduces error by using the strategy queries as differentially private observations, instead of the workload queries.  Recent data-dependent approaches can benefit from the same approach.  In fact, \cite{hlm2012nips} selects Fourier basis vectors adaptively in a data dependent manner, but could benefit from selecting from a more efficient strategy matrix.  Therefore, the SVDB bound can serve as a baseline accuracy measure, which may be improved by data-dependent query selection.

\section{Related Work}

The original description of the matrix mechanism \cite{Li:2010Optimizing-Linear} focuses primarily on $\epsilon$-diff\-er\-ent\-ial privacy, with a brief consideration of $(\epsilon,\delta)$-differential privacy.  A number of proposed mechanisms can be formulated as instances of the matrix mechanism:  techniques for accurately answering range queries are presented in \cite{xiao2010differential, Hay:2010Boosting-the-Accuracy}; 
low-order marginals are studied in \cite{barak2007privacy} using a Fourier transformation as the strategy (combined with other techniques for achieving integral consistency) as well as in \cite{Yaroslavtsev13Accurate} by optimally scaling a manually-chosen set of strategy queries; an algorithm for generating good strategies for answering sets of data cube queries is introduced in \cite{Ding:2011fk}; and an algorithm for computing optimal low-rank strategy matrices is presented in \cite{Yuan12Low-Rank}. The lower bound presented in this work provides a theoretical method to evaluate the quality of each of the approaches above, assuming $(\epsilon, \delta)$-differential privacy.

In recent work, Nikolov et al.~\cite{Nikolov:arXiv1212.0297} propose an algorithm whose error is within a ratio of $O(\log^2\rank(\W)\log(1/\delta))$ to the optimal error under {\em any} data-independent $(\epsilon, \delta)$-diff-erentially private mechanism (not limited to instances of the matrix mechanism).  Their algorithm is in fact a special case of the  $(\epsilon, \delta)$-matrix mechanism, so this approximation ratio also bounds the ratio between the SVD bound and the minimum achievable error of {\em all} possible data-independent $(\epsilon,\delta)$-differential private mechanisms.


Blum et al.~\cite{blum2008a-learning} describe a very general mechanism for synthetic data release, in which error rates are related to the VC dimension of the workload. However, for many workloads of linear queries, VC dimension is too coarse-grained to provide a useful measure of workload error complexity. For example, the VC dimension for any workload of $d$-dimensional range queries that can not be embedded into $(d-1)$-dimensional spaces is always $d+1$, despite the fact that such workloads could have very different achievable error rates.

Hardt et al.~\cite{Hardt:2010On-the-Geometry-of-Differential} present a lower bound on error for low rank workloads. Similar to the SVD bound, this geometric bound can also be represented as a function of the singular values of the workload. In particular, the bound uses the geometric average of the singular values rather than the algebraic average in the SVD bound. The geometric bound provides a more general guarantee since it is a lower bound on all $\epsilon$-differential privacy mechanisms.  But it is not directly comparable with the SVD bound since it bounds the mean absolute error rather than mean squared error in the SVD bound.  
Lower and upper bounds on answering all $k$-way marginals with a data dependent mechanism are discussed in \cite{kasiviswanathan2010price}. Though it is clear that the SVD bound is tight in the case of all $k$-way marginals (since it is a special case of data cube) comparison with \cite{kasiviswanathan2010price} requires a careful analysis of the singular values of workloads of $k$-way marginals and is a direction for future investigation.


There are also error bounds from data-dependent mechanisms, some of which we have compared with in Sec. \ref{sec:comparison}.  A data-dependent approach for range queries is described in \cite{Cormode:2012Spatial}. The median mechanism~\cite{DBLP:conf/stoc/RothR10} drops data vectors that are inconsistent with query answers in each step.  Dwork et al.~\cite{DBLP:conf/focs/DworkRV10} samples linear queries in each step and modifies the sample distribution with the new query answers. In \cite{hardt2010multiplicative, hlm2012nips, Gupta:2012uq}, the authors recursively update the estimated data vector to reduce the error on linear queries. More generally, Dwork et al. provide an error bound using an arbitrary differentially private mechanism~\cite{dwork2009complexity} but not specifically for linear counting queries. Those analyses lead to smaller error than the matrix mechanism over sparse databases by analyzing the properties of the underlying database. Thus their error bounds reflect the connection between workloads and databases  but cannot lead to bounds on error that can be used to characterize the error complexity of workloads.



\vspace*{-1ex}
\section{Conclusion}
We have shown that, for a general class of $(\epsilon,\delta)$-differentially private algorithms, the error rate achievable for a set of queries is determined by the spectral properties of the queries when they are represented in matrix form.  The result is a lower bound on error which is a simple function of the eigenvalues of the query matrix.  The bound can be used to assess the quality of a number of existing differentially private algorithms, to directly construct error-optimal strategies in some cases, to compare the hardness of query sets, and to guide users in the design of query workloads.  

\subsection*{Acknowledgements}
We appreciate the helpful comments of the anonymous reviewers. Li was supported by NSF CNS-1012748. Miklau was partially supported by NSF CNS-1012748, NSF CNS-0964094, and the European Research Council under the Webdam grant.
\vspace*{-1ex}

\bibliographystyle{abbrv} 
{ 
\bibliography{paper} 
\normalsize
}

\newpage
\appendix

\vspace*{-.5ex}
\section{An Algebra for Workloads}\label{sec:operation}
\begin{table*}[t]
\small
\hspace*{-5ex}
\centering
\begin{tabular}{c|c|c}

 & $ \svdb$ 
 & $\ssvdb$ 
 \\ \hline

$\W_1 \cup \W_2$ 
& $\sqrt{\svdb(\W_1)} + \sqrt{\svdb(\W_2)} \geq \sqrt{\svdb(\W_1\cup\W_2)}$
& $\sqrt{\ssvdb(\W_1)}+ \sqrt{\ssvdb(\W_2)} \geq \sqrt{\ssvdb(\W_1\cup\W_2)}$ 
\\


$\W_1 \times \W_2$ 
& $\svdb(\W_1)\svdb(\W_2) = \svdb(\W_1\times\W_2)$
& $\ssvdb(\W_1)\ssvdb(\W_2) \leq \ssvdb(\W_1\times\W_2)$ 
\\

\hline
\multicolumn{3}{c}{Predicate Workloads}\\
\hline

$\W_1 \wedge \W_2$ 
& $\svdb(\W_1)\svdb(\W_2) = \svdb(\W_1\wedge\W_2)$
& $\ssvdb(\W_1)\ssvdb(\W_2) \leq \ssvdb(\W_1\wedge\W_2)$ 

 \\ \hline
\end{tabular}
\vspace{-1ex}
\caption{Algebra operators and relations for the simple and supreme singular value bounds.}
\label{tab:wkld2svdb}
\vspace*{-3ex}
\end{table*}

In this section we briefly discuss the relationship between workload operations and the SVD bound. We define basic operators of negation-free relational algebra, union and crossproduct, on workloads and show how our error measure behaves in the presence of these operators.  Many common workloads are the result of combining simpler workloads using these operators.  Thus, the following results can be used to save computation of the SVD bound. In particular, for the crossproduct operation, the computation time for the SVD bound of the crossproduct of two workloads with size $m_1\times n_1$ and $m_2\times n_2$ can be reduced from $O(\min(m_1m_2, n_1n_2)m_1m_2n_1n_2)$ to $O(\min(m_1,n_1)m_1n_1+\min(m_2,n_2)m_2n_2)$.

The proofs of the theorems can be found in App.~\ref{app:proof:operation}. Table~\ref{tab:wkld2svdb}  summarizes the results.


\subsection{Union}\label{sec:operation:union}
The union operation on workloads has the standard meaning for rows of the workload matrix:

\begin{definition}[\sc Union]
Given an $m_1\times n$ workload $\W_1$ and an $m_2\times n$ workload $\W_2$ over the same $n$ cell conditions. $\W_1\cup\W_2$ is the union of $\W_1$ and $\W_2$, the workload consisting of the rows of both $\W_1$ and $\W_2$, without duplicates.
\end{definition}

The relationship between the SVD bounds of workloads and their unions
can be bounded:
\begin{restatable}{theorem}{thmsvdofsum}\label{thm:svbdofsum}
Given an $m_1\times n$ workload $\W_1$ and  an $m_2\times n$ workload $\W_2$ on the same set of $n$ cell conditions.
\begin{align*}\sqrt{\svdb(\W_1)}+\sqrt{\svdb(\W_2)}&\geq\sqrt{\svdb(\W_1\cup \W_2)};\\
\sqrt{\ssvdb(\W_1)}+\sqrt{\ssvdb(\W_2)}&\geq\sqrt{\ssvdb(\W_1\cup \W_2)}.
\end{align*}
\end{restatable}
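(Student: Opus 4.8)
The plan is to reexpress $\sqrt{\svdb(\cdot)}$ as a scaled nuclear norm and then appeal to the triangle inequality for that norm. By Def.~\ref{def:svdb} and Prop.~\ref{prop:svdbequiv}, $\sqrt{\svdb(\W)} = \frac{1}{\sqrt n}(\lambda_1+\dots+\lambda_n) = \frac{1}{\sqrt n}\tr(\sqrt{\W^T\W})$, i.e. $\frac{1}{\sqrt n}$ times the sum of the singular values of $\W$ (its nuclear norm). Since $\W_1$, $\W_2$ and $\W_1\cup\W_2$ are all defined over the same $n$ cell conditions, the factor $\frac{1}{\sqrt n}$ is common to all three terms, so the first inequality is equivalent to the statement that the sum of the singular values of $\W_1\cup\W_2$ is at most that of $\W_1$ plus that of $\W_2$.

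Two elementary facts drive the argument. First, deleting a row from a matrix cannot increase its nuclear norm: if $B$ is $A$ with one row $r$ removed, then $A^T A - B^T B = r^T r \succeq 0$, so $B^T B \preceq A^T A$, and operator monotonicity of $t\mapsto\sqrt t$ gives $\tr(\sqrt{B^T B})\le\tr(\sqrt{A^T A})$; iterating handles any set of deleted rows. Second, appending zero rows to a matrix changes neither $\W^T\W$ nor its singular values. Now let $M$ be the stacked matrix whose rows are those of $\W_1$ followed by those of $\W_2$. The workload $\W_1\cup\W_2$ is obtained from $M$ by deleting duplicate rows, so by the first fact $\tr(\sqrt{(\W_1\cup\W_2)^T(\W_1\cup\W_2)})\le\tr(\sqrt{M^T M})$. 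Writing $M = M_1 + M_2$, where $M_1$ is $\W_1$ padded below with zero rows and $M_2$ is $\W_2$ padded above with zero rows, the triangle inequality for the nuclear norm together with the second fact gives $\tr(\sqrt{M^T M})\le\tr(\sqrt{M_1^T M_1}) + \tr(\sqrt{M_2^T M_2}) = \tr(\sqrt{\W_1^T\W_1}) + \tr(\sqrt{\W_2^T\W_2})$. Dividing by $\sqrt n$ establishes the inequality for $\svdb$.

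For the supreme SVD bound I would run the identical argument inside every column projection. For any $\mu\in\Mu_n$, the matrix $\mu(\W_1\cup\W_2)$ is obtained from the stacked matrix built from $\mu(\W_1)$ and $\mu(\W_2)$ by deleting rows (projecting onto $\mu$ may even turn previously-distinct rows into coinciding ones, but that is still just row deletion, which only helps). Hence, applying the $\svdb$ inequality to $\mu(\W_1)$ and $\mu(\W_2)$, $\sqrt{\svdb(\mu(\W_1\cup\W_2))}\le\sqrt{\svdb(\mu(\W_1))}+\sqrt{\svdb(\mu(\W_2))}\le\sqrt{\ssvdb(\W_1)}+\sqrt{\ssvdb(\W_2)}$; taking the maximum over $\mu$ on the left gives the second inequality. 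The delicate point — the main obstacle — is the bookkeeping with duplicate rows: the union discards duplicates and column projections can merge rows, so the row multiset is altered in several places and one must check each alteration is favorable. The single nontrivial ingredient, monotonicity of $\tr(\sqrt{\cdot})$ under the PSD order (equivalently, interlacing of singular values under row deletion), is exactly what makes all those alterations harmless; everything else is the triangle inequality of a unitarily invariant norm.
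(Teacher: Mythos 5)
Your proof is correct and follows essentially the same route as the paper's: zero-pad $\W_1$ and $\W_2$ so the stacked matrix is their sum, apply the triangle inequality for the sum of singular values (the paper's Prop.~\ref{thm:sumofsv}), and use monotonicity of the nuclear norm under row deletion (the paper's containment result, Thm.~\ref{thm:svdbinequal}) to pass to the duplicate-free union, then repeat inside each column projection for $\ssvdb$. The only cosmetic difference is that you justify the row-deletion monotonicity via operator monotonicity of $t\mapsto\sqrt{t}$, whereas the paper derives it from its Lemma~\ref{lem:diagmin}; both are valid.
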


%
%
%

\subsection{Workload combination}\label{sec:operation:xprod}
Given two workloads over distinct sets of cell conditions, we can combine them to form a workload over the crossproduct of the individual cell conditions. This is most commonly used to combine workloads defined over distinct sets of attributes $\mathbb{B}_1$ and $\mathbb{B}_2$ to get a workload defined over $\mathbb{B}_1 \cup \mathbb{B}_2$.  When we pair individual predicate queries, it is equivalent to pair them conjunctively.  

\begin{definition}[Workload combination] Given an $m_1\times n_1$ workload $\W_1$ defined by cell conditions $\Phi=\cell_1 \dots \cell_{n_1}$ and an $m_2\times n_2$ workload $\W_2$ defined by distinct cell conditions $\Psi = \psi_1 \dots \psi_{n_2}$, a new combined workload $\W$ is defined over cell conditions $\{\cell_i \wedge \psi_j \:|\: \cell_i \in \Phi, \psi_j \in \Psi \}$. For each $\w_1=(w_{1,1}, \ldots, w_{n_1,1})\in \W_1$ and $\w_2=(w_{1,2}, \ldots, w_{n_2,2})\in \W_2$, there is a query $\w \in \W$ accordingly:
\begin{itemize}
\item{\sc(Crossproduct)} If the entry of $\w$ related to each cell condition $\cell_i\wedge\psi_j$ is $w_{1,i}\cdot w_{2,j}$, $\W$ is called the {\em crossproduct} of $\W_1$ and $\W_2$, denoted as $\W_1\times \W_2$.
\item{\sc(Conjunction)} If both $\W_1$ and $\W_2$ consist of predicate queries and the entry of $\w$ related to each cell condition $\cell_i\wedge\psi_j$ is $w_{1,i}\wedge w_{2,j}$, then $\W$ is called the {\em conjunction} of $\W_1$ and $\W_2$, denoted as $\W_1\wedge \W_2$.
\end{itemize} 
\end{definition}
%

The next theorem describes the singular value bound for the crossproduct of workloads:
\vspace*{-1ex}
\begin{restatable}{theorem}{thmsvdofxprod}\label{thm:svdofxprod}
Given an $m_1\times n_1$ workload $\W_1$ and an $m_2\times n_2$ workload $\W_2$ defined on two distinct sets of cell conditions: 
\begin{align*}
\svdb(\W_1 \times \W_2)&=\svdb(\W_1)\svdb(\W_2)\\
\ssvdb(\W_1 \times \W_2)&\geq\ssvdb(\W_1)\ssvdb(\W_2)
\end{align*}
\end{restatable}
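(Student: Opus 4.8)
The plan is to reduce everything to the singular values of $\W_1$ and $\W_2$ via the key algebraic fact that the crossproduct corresponds to the Kronecker (tensor) product of matrices. First I would observe that if $\W_1$ has entries $w_{1,i}$ on cell condition $\phi_i$ and $\W_2$ has entries $w_{2,j}$ on $\psi_j$, then by the definition of crossproduct the row of $\W_1 \times \W_2$ indexed by the pair $(\w_1,\w_2)$ has entry $w_{1,i}\cdot w_{2,j}$ on the cell condition $\phi_i \wedge \psi_j$. Reading this off carefully, $\W_1 \times \W_2 = \W_1 \otimes \W_2$, the Kronecker product, up to a fixed permutation of rows and columns (the ordering of the paired queries and paired cell conditions). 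Since $\svdb$ and $\ssvdb$ are row- and column-representation independent (Prop.~\ref{prop:equiverr}(iii),(iv) and Thm.~\ref{thm:svdbequiv}(iii),(iv)), I may work directly with $\W_1 \otimes \W_2$.

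Next I would use the standard fact that the singular values of $\A \otimes \B$ are exactly the products $\lambda_i(\A)\cdot \lambda_j(\B)$ over all pairs $(i,j)$; equivalently, $(\W_1 \otimes \W_2)^T(\W_1 \otimes \W_2) = (\W_1^T\W_1)\otimes(\W_2^T\W_2)$, whose eigenvalues are the pairwise products of the eigenvalues of $\W_1^T\W_1$ and $\W_2^T\W_2$ (I can invoke Prop.~\ref{prop:svdbequiv} to phrase the bound in terms of these if convenient). Denoting the singular values of $\W_1$ by $\lambda_1,\dots,\lambda_{n_1}$ and those of $\W_2$ by $\sigma_1,\dots,\sigma_{n_2}$, the workload $\W_1\times\W_2$ is $m_1m_2 \times n_1n_2$ and has singular values $\{\lambda_i\sigma_j\}$. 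Then
\[
\svdb(\W_1\times\W_2) \;=\; \frac{1}{n_1 n_2}\Bigl(\sum_{i,j}\lambda_i\sigma_j\Bigr)^2
\;=\;\frac{1}{n_1 n_2}\Bigl(\sum_i\lambda_i\Bigr)^2\Bigl(\sum_j\sigma_j\Bigr)^2
\;=\;\svdb(\W_1)\,\svdb(\W_2),
\]
using that the double sum factors as a product of sums. This gives the first equality.

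For the supreme bound, the inequality $\ssvdb(\W_1\times\W_2) \geq \ssvdb(\W_1)\ssvdb(\W_2)$ should follow by exhibiting a good column projection of $\W_1\times\W_2$: given the optimal column projection $\mu_1$ of $\W_1$ (achieving $\ssvdb(\W_1)$) and $\mu_2$ of $\W_2$, the product set $\mu_1 \times \mu_2$ of paired cell conditions $\{\phi_i\wedge\psi_j : \phi_i\in\mu_1,\psi_j\in\mu_2\}$ is a column projection of $\W_1\times\W_2$, and the column projection of a crossproduct w.r.t.\ such a product set is exactly $\mu_1(\W_1)\times\mu_2(\W_2)$. Applying the equality already proved to these projected workloads gives $\svdb(\mu_1(\W_1)\times\mu_2(\W_2)) = \svdb(\mu_1(\W_1))\svdb(\mu_2(\W_2)) = \ssvdb(\W_1)\ssvdb(\W_2)$, and since $\ssvdb(\W_1\times\W_2)$ is a maximum over \emph{all} column projections, it is at least this value. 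The reason we only get an inequality here is that the optimal column projection of $\W_1\times\W_2$ need not be of product form — it may mix cell conditions in a way that doesn't factor.

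The main obstacle I anticipate is the bookkeeping in step one: verifying carefully that the crossproduct construction, with its specific indexing of paired queries and paired cell conditions, really is the Kronecker product up to permutation — in particular matching the column (cell condition) ordering $\phi_i\wedge\psi_j$ to the tensor index $(i,j)$ and the row ordering of $(\w_1,\w_2)$ likewise. Once that identification is pinned down, the singular-value factorization of the Kronecker product and the factorization of the double sum are routine, and the supreme-bound half reduces to checking that product-form column projections behave correctly under crossproduct.
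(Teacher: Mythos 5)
Your proposal is correct and follows essentially the same route as the paper: the paper writes out $\W_1\times\W_2$ as the block (Kronecker) matrix, verifies by hand that tensor products of eigenvectors of $\W_1^T\W_1$ and $\W_2^T\W_2$ give all $n_1n_2$ eigenvalues of $(\W_1\times\W_2)^T(\W_1\times\W_2)$ as pairwise products, factors the sum, and invokes Theorem~\ref{thm:svdbequiv} for representation independence, exactly as you do by citing the standard Kronecker spectral fact. The supreme-bound argument is also identical, including your observation that projections of $\W_1\times\W_2$ need not be of product form, which is why only an inequality holds.
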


The conjunction of predicate queries is a special case of crossproduct, Thus, applying Theorem~\ref{thm:svdofxprod} to predicate workloads we have:
\begin{corollary}\label{col:svdofconj}
Given an $m_1\times n_1$ workload $\W_1$ and an $m_2\times n_2$ workload $\W_2$, both of which consist of predicate queries.
\begin{eqnarray*}
\svdb(\W_1\wedge \W_2)&=&\svdb(\W_1)\svdb(\W_2);\\
\ssvdb(\W_1\wedge \W_2)&\geq&\ssvdb(\W_1)\ssvdb(\W_2).
\end{eqnarray*}
\end{corollary}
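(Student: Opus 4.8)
The plan is to reduce the corollary to Theorem~\ref{thm:svdofxprod} by observing that, for predicate workloads, the conjunction operator produces exactly the same matrix as the crossproduct operator. First I would note that all entries of $\W_1$ and $\W_2$ lie in $\{0,1\}$, and on $\{0,1\}$ ordinary multiplication agrees with logical AND, so for every pair $(\w_1,\w_2)$ with $\w_1\in\W_1$, $\w_2\in\W_2$ and every cell condition $\cell_i\wedge\psi_j$ the crossproduct entry $w_{1,i}\cdot w_{2,j}$ coincides with the conjunction entry $w_{1,i}\wedge w_{2,j}$. Since both $\W_1\times\W_2$ and $\W_1\wedge\W_2$ are formed by ranging over the same index set of pairs, they have identical rows up to order, and in particular $\W_1\wedge\W_2\equiv\W_1\times\W_2$ in the sense of workload equivalence (indeed $(\W_1\wedge\W_2)^T(\W_1\wedge\W_2)=(\W_1\times\W_2)^T(\W_1\times\W_2)$).

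Next I would invoke the equivalence properties of the two bounds. Because $\svdb(\W)$ depends only on $\W^T\W$ (Prop.~\ref{prop:svdbequiv}) and is invariant under row permutation, $\svdb(\W_1\wedge\W_2)=\svdb(\W_1\times\W_2)$; similarly, Theorem~\ref{thm:svdbequiv}, items (i) and (iii), gives $\ssvdb(\W_1\wedge\W_2)=\ssvdb(\W_1\times\W_2)$. Finally I would apply Theorem~\ref{thm:svdofxprod} to the right-hand sides, namely $\svdb(\W_1\times\W_2)=\svdb(\W_1)\svdb(\W_2)$ and $\ssvdb(\W_1\times\W_2)\geq\ssvdb(\W_1)\ssvdb(\W_2)$, and chain the two chains of (in)equalities to obtain the claimed identity for $\svdb$ and inequality for $\ssvdb$.

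There is essentially no hard step here; the only point requiring care is the verification that the conjunction matrix really equals the crossproduct matrix — in particular that passing from $\W_1\times\W_2$ to $\W_1\wedge\W_2$ neither merges nor drops any rows, so that the equivalence $\W_1\wedge\W_2\equiv\W_1\times\W_2$ holds exactly and Theorem~\ref{thm:svdofxprod} transfers verbatim. Once that identification is pinned down the corollary is immediate. (One could instead reprove the crossproduct bound directly for predicate workloads, but this would merely repeat the proof of Theorem~\ref{thm:svdofxprod} with $\cdot$ renamed to $\wedge$, so routing through workload equivalence is cleaner.)
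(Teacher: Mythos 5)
Your proposal is correct and matches the paper's own argument: the paper likewise observes that for $\{0,1\}$-valued queries the conjunction is a special case of (indeed coincides with) the crossproduct, and then applies Theorem~\ref{thm:svdofxprod} directly. Your extra care in checking that the two matrices agree entrywise and row-for-row is just a more explicit rendering of the same one-line reduction.
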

\section{Proofs}\label{app:proof}
The following appendices contain proofs that are omitted from the body of the paper. In those proofs, we say $\A$ a strategy on workload $\W$ if $\W\A^\plus\A=\W$.
\subsection{The Extended Matrix Mechanism}\label{app:proof:extendmmech}
In this section we present the proofs related to the extended matrix mechanism. 
\extendmmech*
\begin{proof}
Noticing $\W\A^\plus\A=\W$, 
\begin{align*}\MM_\A(\W,\x) &= \W\x + \W \A^\plus \Nor(\sigma)^m \\
&= \W\A^\plus(\A\x+\Nor(\sigma)^m).
\end{align*}
Since $\sigma=\sens{\A}\sqrt{\log(2/\delta)}/\epsilon$, according to Proposition~\ref{prop:l2diffpriv}, answering $\A\x$ with $\A\x+\Nor(\sigma)^m$ satisfies the $(\epsilon, \delta)$-differential privacy. Thus $\MM_\A(\W, \x)$ satisfies $(\epsilon, \delta)$-differential privacy as well.
\end{proof}

\proptotalerror*
\begin{proof}
According to Def.~\ref{def:wkerror} and Prop.~\ref{def:m-mech}, given a query $\w$ and a strategy $\A$, the mean squared error is
\begin{align*}
\error{\w}{\A}&=\E((\w\x-\hat\w)^2)=\var(\hat\w)=\var(\w \A^\plus \Nor(\sigma)^m)\\
&=\sigma^2||\w\A^\plus||_2^2=P(\epsilon, \delta)\sens{\A}^2||\w\A^\plus||_2^2.
\end{align*}
Therefore for a given workload $\W$, 
\[\error{\W}{\A}=\sum_{\w\in\W}P(\epsilon, \delta)\sens{\A}^2||\w\A^\plus||_2^2=P(\epsilon, \delta)\sens{\A}^2||\W\A^\plus||_F^2.\]
\end{proof}

\subsection{Workload Containment and Equivalence}\label{app:proof:contain}
This section contains the proofs of the relationship between workloads and their minimal errors in Section~\ref{sec:mechanism:equiv}.
\propequiverr*
\begin{proof}
(i): If $\W_1\equiv\W_2$, for any strategy $\A$,
\begin{align*}
||\W_1\A^\plus||_F^2&=\tr((\A^\plus)^T\W_1^T\W_1\A^\plus)\\
&=\tr((\A^\plus)^T\W_2^T\W_2\A^\plus)=||\W_2\A^\plus||_F^2.
\end{align*}
Therefore $\minerror(\W_1)=\minerror(\W_2)$.\\
(ii): It is equivalent with (i).\\
(iii): It is a special case of (ii) where $\Q$ is a permutation matrix.\\
(iv): Let $\P$ is the permutation matrix such that $\W_1\P=\W_2$. For any strategy $\A$ on $\W_1$, $\A\P$ is a strategy of $\W_2$ and $\error{\A}{\W_1}=\error{\A\P}{\W_2}$. \\
(v): Since (iv) is true, we can assume $\W_1=[\W_2, \vect{0}]$. For any strategy matrix $\A_2$ on $\W_2$, $\A_1=[\A_2,\vect{0}]$ is a strategy on $\W_1$ and
\begin{align*}
||\W_1\A_1^\plus||_F^2&=\tr(\A_1^\plus(\A_1^\plus)^T\W_1^T\W_1)\\
&=\tr([\A_2^\plus,\vect{0}]^T[\A_2^\plus,\vect{0}][\W_2, \vect{0}]^T[\W_2, \vect{0}])\\
&=\tr((\A_2^\plus)^T\W_2^T\W_2\A_2^\plus)=||\W_2\A_2^\plus||_F^2.
\end{align*}

For any strategy $\A_1$ on $\W_1$  there is a strategy on $\W_2$ with equal or smaller error formed by deleting corresponding columns from $\A_2$.
\end{proof}

\properrinequal*
\begin{proof}
For (i), let $\W'_2\equiv \W_2$ such that $\W'_2$ contains all rows of $\W_1$. According to Prop.~\ref{prop:equiverr} (iii), we can assume $\W_2'=\left[\begin{smallmatrix}\W_1\\\W_3\end{smallmatrix}\right]$. For any strategy $\A$ on $\W_2$, since $\A$ is also a strategy on $\W_2'$, $\A$ can represent all queries in $\W_1$ as well. Thus $\A$ is a strategy on $\W_1$. In addition, 
\begin{eqnarray*}
&&\error{\A}{\W_2}\\
&=&P(\epsilon,\delta)\sens{\A}^2||\W_2\A^\plus||_F^2\\
&=&P(\epsilon,\delta)\sens{\A}^2||\W_2'\A^\plus||_F^2\\
&=&P(\epsilon,\delta)\sens{\A}^2(||\W_1\A^\plus||_F^2+||\W_3\A^\plus||_F^2)\\
&=&\error{\A}{\W_1}+\error{\A}{\W_3}\geq\error{\A}{\W_1}.
\end{eqnarray*}
Therefore, $\minerror(\W_1)\leq\minerror(\W_2)$.

For (ii), given a strategy $\A_2$ on $\W_2$, let $\A_1$ be a column projection of $\A_2$ using the same projection that generates $\W_1$ from $\W_2$. According to the construction of $\A_1$ and $\W_1$, since $\W_2=\W_2\A_2^\plus\A_2$,  we have $\W_2\A_2^\plus\A_1=\W_1$. Therefore according to Theorem~\ref{thm:mpinverse}, $||\W_2\A_2^\plus||_F\geq||\W_1\A_1^\plus||_F$. Furthermore, since $\sens{\A_2}\geq\sens{\A_1}$, we know $\error{\A_2}{\W_2}\geq\error{\A_1}{\W_1}$.
\end{proof}

\subsection{The Singular Value Bound}\label{app:proof:svdb}
We first present the proof of lemmas that are required for the proof of Thm~\ref{thm:singularvaluebound}. After that, we prove the theorems showing that the supreme SVD bound satisfies the error properties. 
\lemcolumnuniform*
\begin{proof}
For any workload $\W$, the problem of finding a strategy that minimizes the total error of $\W$ can be formulated as a SDP problem \cite{Li:2010Optimizing-Linear}. Therefore the optimal strategy that minimizes the total error of $\W$ always exists. 
Let $\A'$ be an optimal strategy on workload $\W$.  We now construct a matrix $\A$ from $\A'$ such that $\A\in\mathcal{A}_\W$.  Let $d_1, \ldots, d_n$  denote the diagonal entries of matrix $\sens{\A}^2\I - \A'^T\A'$, i.e. $d_1, \ldots, d_n$ is the difference between each diagonal entry of $\A'^T\A'$ and the maximal diagonal entry of $\A'^T\A'$.  Since $d_1,\ldots, d_n\geq 0$, let $\D$ be the diagonal matrix whose diagonal entries are $\sqrt{d_1},\ldots, \sqrt{d_n}$, and
$\A=\left[
\begin{smallmatrix}
\A'\\ \D\end{smallmatrix}
\right]$.
Then $\A$ is a strategy matrix such that the diagonal entries of $\A^T\A$ are all the same. Let $\B=[\A'^\plus, \mathbf{0}]$. Then $\W\B\A=\W\A'^\plus\A'=\W$.  According to Theorem~\ref{thm:mpinverse}, $||\W\A^\plus||_F\leq||\W\B||_F$. Recall $\sens{\A}=\sens{\A'}$, we have,
\begin{eqnarray*}
\error{\A}{\W} &=&P(\epsilon,\delta)\sens{\A}^2||\W\A^\plus||_F^2\\
&\leq&P(\epsilon,\delta)\sens{\A}^2||\W\B||_F^2\\
&=&P(\epsilon,\delta)\sens{\A'}^2||\W\A'^\plus||_F^2\\
&=&\error{\A'}{\W} = \minerror(\W).
\end{eqnarray*}
Therefore $\error{\A}{\W} =\minerror(\W)$ and $\A$ is an optimal strategy for workload $\W$.
\end{proof}

\lemdiagmin*
\begin{proof}
Use $d_i$ to denote the diagonal entries of $\D$ and $p_{ij}$ to denote the entries in $\P$. Noticing that $\sum_{j=1}^n p_{ji}^2=1$, we have
\[||\D\p_i||_2=\sqrt{\sum_{j=1}^n p_{ji}^2d_j^2}\geq \sum_{j=1}^n p_{ji}^2d_j.\]
Therefore, since $\sum_{j=1}^n p_{ij}^2=1$,
\[\sum_{i=1}^n||\D\p_i||_2\geq \sum_{i=1}^n\sum_{j=1}^n p_{ji}^2d_j=\sum_{j=1}^n(\sum_{i=1}^n p_{ji}^2)d_j=\tr(\D).\]
\end{proof}

\thmsvdbequiv*
\begin{proof}
(i) (ii) (iii): Since any one of those conditions leads to $\W_1^T\W_1=\W_2^T\W_2$, according to Prop.~\ref{prop:svdbequiv}, $\ssvdb(\W_1)=\ssvdb(\W_2)$.

(iv): Given a workload $\W_1$, it is sufficient to prove that the singular values of $\W_1$ are column representation independent. Let $\W_2$ be a matrix resulting from a permutation of the columns of $\W_1$ and $\P$ be the permutation matrix such that $\W_1\P=\W_2$. If a singular value decomposition of $\W_1$ is $\W_1=\Q_\W\LambdaB_\W\P_\W$, then the decomposition of $\W_2$ is $\W_2=\Q_\W\LambdaB_\W\P_\W\P$. Since $\P_\W\P$ is still an orthogonal matrix, $\W_2=\Q_\W\LambdaB_\W\P_\W\P$ is a singular value decomposition of $\W_2$. Therefore the singular values of $\W_2$ are exactly the same as the singular values of $\W_1$.

(v): Since $\W_2$ is a column projection of $\W_1$, $\ssvdb(\W_1)\geq\ssvdb(\W_2)$ by definition. In addition, for any matrix with columns of zeroes, removing thse columns will not impact the non-zero singular values of the matrix. Therefore projecting those columns out will reduce the total number of singular values but not their sum. Therefore projecting out all zero columns from $\W_1$ will not decrease $\ssvdb(\W_1)$, which indicates $\ssvdb(\W_1)=\ssvdb(\W_2)$.
\end{proof}

\thmsvdbinequal*
Since (ii) is naturally satisfied according to the definition of $\ssvdb(\W)$, it is sufficient to prove (i). Here we prove it is true even for the SVD bound so that it is also true for the supreme SVD bound. 
\begin{proof}
Given $\W_1\subseteq \W_2$, according to the definition, there exists a workload $\W'_2$ such that $\W'_2\equiv \W_2$ and $\W'_2$ contains all the queries of $\W_1$. Then $\W'_2$ has the following form:
\[\W'_2=\left[\begin{array}{c}
\W_1 \\
\W_3
\end{array}\right].\]
Then
\begin{eqnarray*}
\W_2^T\W_2-\W_1^T\W_1 &=& {\W'_2}^T\W'_2-\W_1^T\W_1\\
&=&\W_3^T\W_3\succeq 0
\end{eqnarray*}
Let $\W_1=\Q_1\LambdaB_1\P_1$ and $\W_2=\Q_2\LambdaB_2\P_2$ be the singular value decomposition of $\W_1$ and $\W_2$, respectively. Then
\begin{eqnarray*}
\W_2^T\W_2-\W_1^T\W_1 \succeq 0&\Leftrightarrow& \P_2^T\LambdaB_2^2\P_2-\P_1^T\LambdaB_1^2\P_1 \succeq 0\\
&\Leftrightarrow&\LambdaB_2^2-\P_2\P_1^T\LambdaB_1^2\P_1\P_2^T \succeq 0\\
&\Rightarrow&\forall\,i,\,\lambdaB_i\geq ||\LambdaB_1\p_i||_2,
\end{eqnarray*}
where $\lambdaB_i$ is the $i$-th diagonal entry of $\LambdaB_2$ and $\p_i$ is the $i$-th column vector of $\P_1\P_2^T$. The inequality in the last row based on the property that the diagonal entries of any positive semidefinite matrix are non-negative. Therefore, according to Lemma~\ref{lem:diagmin},
\begin{equation}\label{eqn:traceorder}
\tr(\LambdaB_2)=\sum_{i=1}^n\lambdaB_i\geq\sum_{i=1}^n  ||\LambdaB_1\p_i||_2\geq \tr(\LambdaB_1).
\end{equation}
\end{proof}

\subsection{Analysis of the SVD Bound}\label{app:proof:theory}
The theorems that are respectively related to the tightness and the looseness of the SVD bound are proved here.
\thmsvdtightness*
\begin{proof}
Recall that the SVD bound is tight if and only if (\ref{eqn:totracecond}), (\ref{eqn:cauchy}) and (\ref{eqn:minap}) takes equal sign simultaneously. The conditions that make all three inequalities to have equal sign is $\A\propto\Q\sqrt{\LambdaB_\W}\P_\W$ and $\A\in\mathcal{A}_\W$, which is equivalent to the case that the diagonal entries of  $\P_\W^T\LambdaB_\W\P_\W$ are all the same. In addition, $\P_\W^T\LambdaB_\W\P_\W=\sqrt{\W^T\W}$ and we have the theorem proved.
\end{proof}

\thmvaragn*
\begin{proof}
For the strategy $\A=\sqrt{\LambdaB_\W}\P_\W$, if $\A\in\mathcal{A}$, \\$\error{\A}{\W}=P(\epsilon,\delta)\svdb(\W)$. Since $\W$ is a variable agnostic matrix, $\W^T\W$ has the following special form:
\[\W^T\W=\small
\left[\begin{array}{cccc}
a & b & \ldots & b\\
b & a & \ldots & b\\
\vdots & \vdots & \ddots & \vdots \\
b & b & \ldots & a
\end{array}\right],
\]
where $a>b$. One can verify that $a+(n-1)b$ is an eigenvalue of $\W^T\W$ with order 1 and $a-b$ is an eigenvalue of $\W^T\W$ with order $n-1$. Let $\P_\W=(p_{ij})_{n\times n}$. Noticing the rows of $\P_\W$ are the eigenvectors of $\W^T\W$, without loss of generality, assume its first row contains the eigenvector associated with eigenvalue $a+(n-1)b$. In addition, since $\W^T\W\vect{1}=(a+(n-1)b)\vect{1}$, $(1/\sqrt{a+(n-1)b})\vect{1}$ is an unit-length eigenvector of $\W^T\W$ associated with eigenvalue $a+(n-1)b$. The $i^{th}$ diagonal entry of matrix $\A^T\A=\P_\W^T\LambdaB_\W\P_\W$ can be computed as:
\begin{align*}
(a+(n-1)b)p_{1i}^2+\sum_{j=2}^n (a-b)p_{ji}^2&=1+(a-b)(\sum_{j=1}^n p_{ji}^2 - p_{1i}^2)\\
&=1+\frac{a-b}{a+(n-1)b}.
\end{align*}
Thus $\A\in\mathcal{A}$ and so it is a strategy for $\W$ such that \\
$\error{\A}{\W}=P(\epsilon,\delta)\svdb(\W)$. Therefore the SVD bound is achieved.
\end{proof}

\thmdatacube*
\begin{proof}
Let us induct on the number of attributes $d$ in the database. When $d=1$, there are only two cuboids, the cuboid asks for the sum of all the cells and the cuboid asks for all the individual cells. Consider the workload $\W$ that weight the first cuboid $w_1$ and the second cuboid $w_2$, one can compute that
\[\W^T\W=\small
\left[\begin{array}{cccc}
w_1^2+w_2^2 & w_1^2 & \ldots & w_1^2\\
w_1^2 & w_1^2+w_2^2 & \ldots & w_1^2\\
\vdots & \vdots & \ddots & \vdots \\
w_1^2 & w_1^2 & \ldots & w_1^2+w_2^2
\end{array}\right],
\]
which is a variable agnostic workload. Therefore, according to Thm~\ref{thm:varagn}, $\svdb(\W)$ is tight.

If the SVD bound is tight when $d=d_0$, consider the case that $d=d_0+1$. Given a data cube workload $\W$. The cuboids in the data cube can be separated into two groups: the first group is the cuboids that aggregate on the last attribute; the second group is the cuboids that do not aggregate on the last attribute. Let $\W_1$ be the projection of the cuboids in the first group on the first $d_0$ attributes and $\W_2$ be the projection of the cuboids in the first group on the first $d_0$ attributes. We can represent $\W$ using $\W_1$ and $\W_2$:
\[\W=\small\begin{bmatrix}\W_1 & \W_1 & \ldots &  \W_1 \\
\W_2 & 0 & \ldots & 0 \\
0 & \W_2 & \ldots & 0 \\
\vdots & \vdots & \ddots & \vdots \\
0 & 0 & \ldots & \W_2
\end{bmatrix},\]
where the number of $\W_1$ blocks and $\W_2$ blocks are the number of values in the last attribute, denoted as $n_0$. Let $\Q_1$, $\Q_2$ be the orthogonal matrices such that $\Q_1\W_1=\sqrt{\W_1^T\W_1}$ and $\Q_2\W_1=\sqrt{\W_2^T\W_2}$. Let
\[\Q=\small\begin{bmatrix}\Q_1 & 0 & \ldots & 0 \\
0 & \Q_2 & \ldots & 0 \\
\vdots & \vdots & \ddots & 0 \\
0 & 0 & \ldots & \Q_2
\end{bmatrix},\]
and then
\[\Q\W=\small\begin{bmatrix}\sqrt{\W_1^T\W_1} & \sqrt{\W_1^T\W_1} & \ldots &  \sqrt{\W_1^T\W_1} \\
\sqrt{\W_2^T\W_2} & 0 & \ldots & 0 \\
0 & \sqrt{\W_2^T\W_2} & \ldots & 0 \\
\vdots & \vdots & \ddots & \vdots \\
0 & 0 & \ldots & \sqrt{\W_2^T\W_2}
\end{bmatrix}.\]

One can verify that
\[\sqrt{\W^T\W}=\sqrt{\W^T\Q^T\Q\W}=\small\begin{bmatrix}\W_3 & \W_4 & \ldots &  \W_4 \\
\W_4 & \W_3 & \ldots & \W_4 \\
\vdots & \vdots & \ddots & \vdots \\
\W_4 & \W_4 & \ldots & \W_3
\end{bmatrix},\]
where
\begin{align*}
\W_3&=\frac{1}{n_0}((n_0-1)\sqrt{\W_1^T\W_1}+\sqrt{\W_1^T\W_1+n_0\W_2^T\W_2}),\\
\W_4&=\frac{1}{n_0}(-\sqrt{\W_1^T\W_1}+\sqrt{\W_1^T\W_1+n_0\W_2^T\W_2}).
\end{align*}
Noticing that both $\W_1$ and $\left[\begin{smallmatrix}\W_1\\ \sqrt{n_0}\W_2\end{smallmatrix}\right]$ are data cube workloads on $d_0$ attributes, according to the induction assumptions, both $\sqrt{\W_1^T\W_1}$ and $\sqrt{\W_1^T\W_1+n_0\W_2^T\W_2}$ are symmetric matrices whose diagonal entries are all the same, respectively. Thus $\W_3$ and $\W_4$ are also symmetric matrices whose diagonal entries are all the same, respectively. Then  $\sqrt{\W^T\W}$ is a symmetric matrices whose diagonal entries are all the same and then the SVD bound is tight on $\W$.
\end{proof}


\thmsvdratio*
\begin{proof}
\begin{align*}
\minerror(\W)&\leq\error{\A}{\W}\\
&=\frac{nd_0P(\epsilon, \delta)\svdb(\W)}{\trace(\sqrt{\W^T\W})}.
\end{align*}
\end{proof}

\subsection{Asymptotic Estimation to the SVD Bound}\label{app:proof:comparison}
Here we prove the our asymptotic estimation to the SVD bound.
\thmasysvd*
\begin{proof}
Given a workload $\W$. Let $\lambda_1, \ldots, \lambda_{\min(m,n)}$ be the non-zero singular values of $\W$.
\begin{align*}
(\lambda_1+\ldots+\lambda_{\min(m,n)})^2&\leq\min(m,n)(\lambda_1^2+\ldots+\lambda_{\min(m,n)}^2)\\
&=\min(m,n)||\W||_F^2\\
&\leq\min(m,n)mn
\end{align*}
Therefore
\[\svdb(\W)\leq \min(m,n)m.\]
Noticing $\svdb(\W)$ is estimating the $L_2$ error of $m$ queries and the error is Gaussian random noise. Take the average
of the $\svdb(\W)$, consider the error estimator for Gaussian random variable with mean $m$ and standard deviation $\sigma$:
\[\P(|X-m|>t\sigma)\leq\frac{\sqrt{2}}{\sqrt{\pi}t}\exp(-\frac{t^2}{2}),\]
and we have the bound proved.
\end{proof}
\subsection{The SVD Bound and Operations on Workloads}\label{app:proof:operation}
Here we prove relations between the SVD bound and the operations on workload.  The proofs of union and generalized negation are related to the relationship between singular values of matrices and their sum, as stated in the theorem below.

\begin{proposition}[\cite{fulton2000eigenvalues}]\label{thm:sumofsv}
Given two $n\times n$ matrices $\W_1$ and $\W_2$ with singular values $\mu_1, \mu_2, \ldots, \mu_n$ and $\lambda_1, \lambda_2, \ldots, \lambda_n$ respectively. Let $\phi_1, \phi_2, \ldots, \phi_n$ be the singular values of $\W_1+\W_2$, then
\[ \sum_{i=1}^n \mu_i + \sum_{i=1}^n \lambda_i \geq \sum_{i=1}^n \phi_i.\]
\end{proposition}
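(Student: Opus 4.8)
The plan is to recognize that $\sum_{i=1}^n \sigma_i(\W)$, the sum of the singular values of an $n\times n$ matrix $\W$, is exactly the nuclear (trace) norm $\|\W\|_*$, and that this quantity is subadditive; the claimed inequality is then nothing more than the triangle inequality $\|\W_1+\W_2\|_* \le \|\W_1\|_* + \|\W_2\|_*$. To keep the argument self-contained I would not appeal to the general theory of matrix norms, but instead prove directly the single fact that is needed: a variational formula for the sum of the singular values.

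Concretely, the first step is to establish that for any $n\times n$ matrix $\W$,
\[
\sum_{i=1}^n \sigma_i(\W) = \max_{\Q\in O(n)} \tr(\Q\W),
\]
where $O(n)$ denotes the group of $n\times n$ orthogonal matrices. For ``$\le$'': fix an orthogonal $\Q$ and write the singular value decomposition $\W = \Q_\W\LambdaB_\W\P_\W^T$, with $\LambdaB_\W$ diagonal carrying the values $\sigma_i(\W)\ge 0$. By the cyclic property of the trace, $\tr(\Q\W) = \tr\!\big((\P_\W^T\Q\Q_\W)\LambdaB_\W\big) = \sum_{i=1}^n R_{ii}\,\sigma_i(\W)$, where $R = \P_\W^T\Q\Q_\W$ is orthogonal and hence has $|R_{ii}|\le 1$ for every $i$; since all $\sigma_i(\W)\ge 0$, this sum is at most $\sum_i \sigma_i(\W)$. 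For ``$\ge$'': the choice $\Q = \P_\W\Q_\W^T$ makes $R = \I$ and attains equality. The second step then derives the proposition in one line:
\begin{align*}
\sum_{i=1}^n\phi_i &= \max_{\Q\in O(n)}\tr\!\big(\Q(\W_1+\W_2)\big)\\
&\le \max_{\Q\in O(n)}\tr(\Q\W_1) + \max_{\Q\in O(n)}\tr(\Q\W_2)\\
&= \sum_{i=1}^n\mu_i + \sum_{i=1}^n\lambda_i,
\end{align*}
where the inequality is subadditivity of the maximum over the common index set $O(n)$, applied to the additive quantity $\tr(\Q\W_1)+\tr(\Q\W_2)$.

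The only point requiring any care --- and it is minor --- is the variational formula itself: it rests on the fact that the diagonal entries of an orthogonal matrix have modulus at most $1$ (immediate, since each column is a unit vector) together with the nonnegativity of singular values, so that replacing each $R_{ii}$ by $1$ can only increase $\sum_i R_{ii}\sigma_i(\W)$. A shorter alternative avoids the formula entirely by invoking the duality $\|\W\|_* = \sup_{\|\B\|_2\le 1}\tr(\B^T\W)$ between the nuclear and the spectral norm, from which subadditivity of $\|\cdot\|_*$ is immediate; I would mention this but prefer the orthogonal-matrix argument above, as it needs no results external to the paper.
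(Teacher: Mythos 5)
Your proof is correct. Note, however, that the paper does not prove this proposition at all: it is stated as an imported result, cited from \cite{fulton2000eigenvalues}, where the subadditivity of the sum of singular values is obtained as a corollary of the general theory of eigenvalues of sums of Hermitian matrices (via the standard symmetric embedding of a rectangular matrix into a larger Hermitian one). What you supply instead is a short, self-contained, and entirely elementary derivation: the variational identity $\sum_i\sigma_i(\W)=\max_{\Q\in O(n)}\tr(\Q\W)$, proved correctly from the SVD, the cyclicity of the trace, and the fact that the diagonal entries of an orthogonal matrix lie in $[-1,1]$, followed by subadditivity of the maximum. This is exactly the duality between the nuclear and spectral norms specialized to orthogonal test matrices, and it buys independence from the external reference at essentially no cost in length; the citation route, by contrast, buys access to the much finer Horn-type inequalities relating individual singular values of $\W_1$, $\W_2$, and $\W_1+\W_2$, none of which are needed here. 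Either route is adequate for the only use the paper makes of the proposition (Theorem~\ref{thm:svbdofsum}), and your argument would serve as a valid drop-in proof.
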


For any $m\times n$ matrix $\W$, there always exists an $n\times n$ matrix $\W'$ such that the nonzero singular values of $\W$ and $\W'$ are all the same. Theorem~\ref{thm:sumofsv} holds even if both $\W_1$ and $\W_2$ are $m\times n$ matrices, which leads to the relationship between the SVD bounds of two workloads and their sum.

\thmsvdofsum*
\begin{proof}
Let $\W=\W_1\cup\W_2$. Expand $\W_1$, $\W_2$ to two $(m_1+m_2)\times n$ matrices as follows:
\[\W_1'=\left[\begin{array}{c}\W_1 \\ 0 \end{array}\right],
\quad \W_2'=\left[\begin{array}{c}0 \\ \W_2 \end{array}\right].\]
Since $\W_1'$ and $\W_2'$ have the same singular values as $\W_1$ and $\W_2$, respectively, $\svdb(\W_1')=\svdb(\W_1)$, $\svdb(\W_2')=\svdb(\W_2)$. Furthermore, since
\[\W_1'+\W_2'=\left[\begin{array}{c}\W_1 \\ \W_2 \end{array}\right]\supseteq\W, \]
according to Prop.~\ref{thm:sumofsv}, the sum of the singular values of $\W_1'$ and $\W_2'$ is larger than or equal to the sum of singular values $\W$. Therefore, with Thm.~\ref{thm:svdbinequal} and Prop.~\ref{thm:sumofsv}, 
\begin{eqnarray*}
&&\sqrt{\svdb(\W_1)}+\sqrt{\svdb(\W_2)}\\
&=&\sqrt{\svdb(\W_1')}+\sqrt{\svdb(\W_2')}\geq\sqrt{\svdb(\W)}.
\end{eqnarray*}
For the case of $\ssvdb$, notice that for any projection $\mu$, 
\[\sqrt{\svdb(\mu(\W_1))}+\sqrt{\svdb(\mu(\W_2))}\geq\sqrt{\mu(\svdb(\W))}.\]
Consider all projections and we have the result proved.
\end{proof}


The property of crossproduct can be proved by constructing a proper representation to the resulting workload.
\thmsvdofxprod*
\begin{proof}

Let $\w_1$ and $\w_2$ be queries in $\W_1$ and $\W_2$, respectively and $\W=\W_1\times \W_2$. Consider the vector representation of $\w_1$ and $\w_2$: $\w_1=[w_{11}, w_{12}, \ldots, w_{1n_1}]^T$, $\w_2=[w_{21}, w_{22}, \ldots, w_{2n_2}]^T$.
The crossproduct of $\w_1$ and $\w_2$, denoted as $\w$ can be represented as an $n_1$ by $n_2$ matrix, whose $(i,j)$ entry is equal to $w_{1i}w_{2j}$. In another word,
\[\w=\w_1\w_2^T.\]
We can the represent $\w$ as a vector, denoted as $\w'$, which is a $1\times n_1n_2$ vector that contains entries in $\w$ row by row. Therefore,
\[\w'=[w_{11}\w_2, w_{12}\w_2, \ldots, w_{1n}\w_2]^T.\]
More generally, using $\W^1_{ij}$ to denote the $(i,j)$ entry in $\W_1$, $\W$ can be represented as the following matrix:
\[\W=\left[\begin{array}{cccc}
w^1_{11}\W_2 & w^1_{12}\W_2 & \ldots & w^1_{1n_1}\W_2 \\
w^1_{21}\W_2 & w^1_{22}\W_2 & \ldots & w^1_{2n_1}\W_2 \\
\vdots & \vdots & \ddots & \vdots\\
w^1_{n1}\W_2 & w^1_{n2}\W_2 & \ldots & w^1_{m_1n_1}\W_2
\end{array}\right].\]
Let $\v_1$, $\v_2$ be the eigenvectors of $\W_1^T\W_1$, $\W_2^T\W_2$ with eigenvalues $\lambda_1$, $\lambda_2$, respectively. Let the vector representation of $\v_1$ be $\v_1=[v_{1n}, v_{2n}, \ldots, v_{1n}]^T$. Consider the following vector
\[\v=[v_{11}\v_2, v_{12}\v_2, \ldots, v_{1n}\v_2]^T,\]
According to block matrix multiplication,
\begin{eqnarray*}
\W^T\W\v
&=& \W^T\left[\begin{array}{c}
\sum_{i=1}^n w^1_{1i}v_{1i}\W_2\v_2 \\ \sum_{i=1}^n w^1_{2i}v_{1i}\W_2\v_2 \\
 \vdots \\ \sum_{i=1}^n w^1_{m_1i}v_{1i}\W_2\v_2
\end{array}\right] 
\end{eqnarray*}
\begin{eqnarray*}
&=& \left[\begin{array}{c}
\lambda_1v_{11}\lambda_2\v_2 \\ \lambda_1v_{12}\lambda_2\v_2 \\ \vdots \\ \lambda_1v_{1n}\lambda_2\v_2
\end{array}\right] = \lambda_1\lambda_2\v.
\end{eqnarray*}
Thus $\v$ is an eigenvector of $\W^T\W$ with eigenvalue $\lambda_1\lambda_2$. Since $\W_1^T\W_1$ and $\W_2^T\W$ have $n_1$ and $n_2$ orthogonal eigenvectors, respectively, we can find $n_1n_2$ orthogonal eigenvectors with this method. Noticing $\W^T\W$ only has $n_1n_2$ eigenvalues, the eigenvalues of those $n_1n_2$ eigenvectors are all the eigenvalues of $\W^T\W$. Let $\lambda_{11}, \ldots, \lambda_{1n}$ be the eigenvalues of $\W_1^T\W_1$ and $\lambda_{21}, \ldots, \lambda_{2n}$ be the eigenvalues of $\W_2^T\W_2$.
\begin{eqnarray*}
\svdb(\W)&=&\frac{1}{n_1n_2}(\sum_{1\leq i\leq n_1, 1\leq j\leq n_2}\sqrt{\lambda_{1i}\lambda_{2j}})^2\\
&=&\frac{1}{n_1}(\sum_{i=1}^{n_1}\sqrt{\lambda_{1i}})^2\cdot \frac{1}{n_2}(\sum_{i=1}^{n_2}\sqrt{\lambda_{2i}})^2 \\
&=&\svdb(\W_1)\svdb(\W_2).
\end{eqnarray*}
Though we use a specific rule above to represent the query cross products as query vectors, according to Theorem~\ref{thm:svdbequiv}, the SVD bound is independent of the rule of representation. Thus we have the theorem proved in arbitrary cases. 

For the case of $\ssvdb$, since for any projection $\mu_1$ on $\W_1$ and $\mu_2$ on $\W_2$, $\mu_1\times\mu_2$ is a projection on $\W$. On the another hand, there are projections on $\W$ that can not be represented as a crossproduct of a projection on $\W_1$ and a projection on $\W_2$. Therefore
\begin{align*}
\ssvdb(\W_1)\ssvdb(\W_2)&=\max_{\mu_1}\svdb(\mu_1(\W_1))\max_{\mu_2}\svdb(\mu_2(\W_2))\\
&=\max_{\mu_1, \mu_2}\svdb((\mu_1\times\mu_2)(\W))\\
&\leq\max_{\mu}\svdb(\mu(\W)) = \ssvdb(\W).
\end{align*}
\end{proof}

\end{document}